\documentclass{article}
\usepackage{graphicx} 
\usepackage[parfill]{parskip}

\usepackage{graphicx} 
\usepackage{float}
\usepackage{amsthm}
\usepackage{amsmath, amssymb, mathtools}
\usepackage{extarrows}
\usepackage{tikz-cd}
\usepackage{hyperref}
\usepackage[backend=biber, sorting=none, style=alphabetic]{biblatex}
\usepackage{geometry}
\usepackage{tikz}
\usetikzlibrary{calc}

\addbibresource{sources.bib}
\setcounter{tocdepth}{1}

\DeclareMathOperator{\hor}{hor}
\DeclareMathOperator{\pr}{pr}

\DeclareMathOperator{\Spin}{Spin}
\DeclareMathOperator{\GL}{GL}
\DeclareMathOperator{\SL}{SL}
\DeclareMathOperator{\SO}{SO}
\DeclareMathOperator{\SU}{SU}

\DeclareMathOperator{\tr}{tr}

\DeclareMathOperator{\Fr}{Fr}
\DeclareMathOperator{\Sym}{Sym}
\DeclareMathOperator{\pt}{pt}

\newcommand{\spin}{\mathfrak{spin}}

\newcommand{\Z}{\mathbb{Z}}
\newcommand{\vol}{\mathrm{vol}}

\newcommand{\into}{\xhookrightarrow{}}

\newcommand{\Dirac}{D\!\!\!\!/\,}

\DeclareMathOperator{\Der}{Der}

\DeclareMathOperator{\End}{End}

\DeclareMathOperator{\ad}{ad}

\DeclareMathOperator{\vspan}{span}
\newcommand{\diff}{\mathrm{d}}
\DeclareMathOperator{\id}{id}

\DeclareMathOperator{\Hom}{Hom}

\newcommand{\pdiff}[2]{\frac{\partial #1}{\partial #2}}

\newcommand{\C}{\mathbb{C}}
\newcommand{\Ber}{{\mathcal B}er}

\renewcommand{\i}{\mathrm{i}}

\theoremstyle{definition}
\newtheorem*{remark}{Remark}
\newtheorem{definition}{Definition}
\numberwithin{definition}{section}
\newtheorem{example}{Example}

\theoremstyle{plain}
\newtheorem{principle}{Principle}
\newtheorem{theorem}{Theorem}

\newtheorem{lemma}{Lemma}

\newtheorem{corollary}{Corollary}

\newcommand{\imgSuperspaceTorsion}{
\begin{tikzpicture}
\def\nlines{8}

\coordinate (a) at (0,0,0);
\coordinate (b) at (0,1,5);
\coordinate (c) at (4,0,0);
\coordinate (d) at (4.5,-1,4);
\coordinate (e) at (4.5,2,4);

\coordinate (x) at ($(a)!3/2!(b)$);
\coordinate (y) at ($(a)!3/2!(c)$);
\coordinate (z) at (0,3,0);
\coordinate (z2) at (0,-2,0);

\draw[blue, -stealth] (a) -- (z) node[anchor=west]{$\mathcal B$};
\draw[blue] (a) -- (z2);

\draw[blue, thick, dotted] (b) -- (e);
\draw[blue, thick, dotted] (c) -- (d);

\foreach \t in {1,...,\nlines}
{
    \draw[gray, ultra thin, -stealth] ($(a)!\t / \nlines!(c)$) -- ($(b)! \t / \nlines!(e)$);
    \draw[gray, ultra thin, -stealth] ($(a)!\t / \nlines!(b)$) -- ($(c)! \t / \nlines!(d)$);
}

\draw[black, thick] (a) -- node[anchor=south]{$X$} (b);
\draw[black, thick, -stealth] (b) -- node[anchor=north]{$Y$} (d);
\draw[black, thick] (a) -- node[anchor=south]{$Y$} (c);
\draw[black, thick, -stealth] (c) -- node[anchor=south]{$X$} (e);
\draw[blue, thick, -stealth] (d) -- node[anchor=west]{$\pr_\mathcal B[X,Y]$} (e);
\filldraw[black] (0,0) circle (2pt);
\end{tikzpicture}
}

\title{Spinorial Superspaces and Super Yang-Mills Theories}
\author{Johannes Moerland}
\date{\today}

\begin{document}

\maketitle

\begin{abstract}
    In physics literature about supersymmetry, many authors refer to \emph{super Minkowski spaces}. These spaces are affine supermanifolds with certain distinguished spin structures. In these notes, we make the notion of such spin structures precise and generalise the setup to curved supermanifolds. This leads to the more general notion of \emph{spinorial superspaces}. By working in a suitable geometric and coordinate-free setting, many explicit coordinate computations appearing in physics literature can be replaced by more conceptual methods. As an application of the rather general framework of spinorial superspaces, we formulate $\mathcal N = 1$ super Yang-Mills theories on curved superspaces of spacetime dimensions $d=3$ and $d=4$ and show how to reduce the theory to field theories defined on an underlying ordinary spacetime manifold.
\end{abstract}
\tableofcontents

\section{Introduction}Modern differential geometry allows for an inherently coordinate-free approach to physical field theories. The theory of smooth manifolds and principal bundles is a suitable framework to define gauge theories such as Yang-Mills theories. In this language, symmetries such as gauge invariance are often manifest in the theories of interest. The upshot of this approach is that lengthy and cumbersome coordinate computations can often be replaced by more conceptual algebraic computations, commutative diagrams and exact sequences. This often allows for a rather pleasant geometric picture.

To extend this geometric approach to super field theories, a generalisation of the category of smooth manifolds is desirable. In sight of the anticommutativity properties of fermionic particles, the `even' (ordinary) directions should be complemented by `odd' (nilpotent) directions which are described by anti-commuting (Grassmannian) coordinate functions. By making this notion precise in an algebro-geometric picture, one arrives at the category of \emph{supermanifolds}. In this work, we make use of the language initially suggested by Leites~\cite{leites1980introduction} as well as functorial geometric extension as introduced by Kessler~\cite{kessler2018supergeometry}. Further comprehensive and introductory texts are presented in~\cite{deligne1999quantum, varadarajan2004supersymmetry}.

While the theory of supermanifolds is well-established, it remains a highly non-trivial task to cast
super field theories in this language. There are several reasons for this: Firstly, when physicists
speak about superspaces, more structure than the mere notion of a supermanifold is usually implied.
In particular, a certain spin structure plays an important role. Secondly, the computational steps
giving validity to super field theories (for example, coordinate independence) are often carried out in
long computations involving a plethora of indices, and are hence difficult to keep track of. Thirdly,
we would like to mention that many constructions on Minkowski space do not generalise canonically,
if at all) to curved supermanifolds.

In this document, we make the notion of the aforementioned extra structure on superspaces precise, leading us to the notion of spinorial superspaces. We discuss emergent operators and maps such as a canonical connection and partial pseudo-Riemannian metrics. We show how to reduce field theories defined on superspaces to ones defined on ordinary spacetimes. Furthermore, we explicitly construct spinorial superspaces in terms of ordinary smooth manifolds. Finally, we showcase super Yang-Mills theories in $d=3$ and $d=4$ spacetime dimensions as theories defined on spinorial superspaces.

This document is structured as follows: in section~\ref{section:superspace}, we give a geometric and mathematically precise definition of superspaces. 

In section~\ref{sec:spin-structures}, we define certain distinguished spin structures which globalise the notion of the super translation algebra, leading us to the definition of spinorial superspaces. We analyse the implications of such spin structures by providing a description of the space of connections and listing invariant morphisms on the tangent bundle.

Depending on the spacetime dimension and signature, the spinorial part of the complexified tangent sheaf of the superspace decomposes into chiral and antichiral components. This is discussed in section~\ref{section:chiral-decomposition}.

In section~\ref{section:restriction}, we show that any embedded purely even manifold, thought of as ordinary spacetime, is canonically equipped with a plethora of structures such as a pseudo-Riemannian metric and a spinor bundle.

We then construct spinorial superspaces by means of ordinary Riemannian spin manifolds in section~\ref{sec:split-superspaces}. 

As an application of the theory, we formulate super Yang-Mills theories in spacetime dimensions $3$ and $4$ in Lorentzian signature in section~\ref{section:sym}.
\section{Superspaces}\label{section:superspace}
Physical field theories are often expressed explicitly in terms of coordinates. While this approach allows for simple comparison of theoretical and experimental data, it lacks a certain intuition. What gives the theories geometric significance is that they are invariant under change of coordinates and/or gauge transformations.\par
Differential geometry and the theory of principal $G$-bundles admit an inherently invariant description of field theories on manifolds. This bears two major advantages: Firstly, upon phrasing the theories accordingly in this framework, one dodges the infamous `battle of indices' which is rather prone to miscalculations and sign errors. Secondly, the invariant formulation in terms of manifolds and principal bundles generalises physical field theories -- which are often defined explicitly in terms of coordinates on Minkowski space -- straight-forwardly to topologically non-trivial spaces.\par
In sight of the required anticommutativity properties of fermionic particles, it is thus sensible to extend the theory of manifolds to that of $\mathbb Z_2$-graded manifolds, better known as \emph{supermanifolds}. In what follows (up to principle~\ref{princ:smfd-fop}) we summarise the essential notions in the theory of super differential geometry (or in short, \emph{supergeometry}) following Kessler~\cite{kessler2018supergeometry}.
\begin{definition}
A \emph{supermanifold of dimension $m|n$} is a locally ringed space $M=(\underline M,\mathcal O _M)$, where each $x\in\underline M$ has an open neighbourhood $U$ such that the restriction of the sheaf of sections $\mathcal O _M$ to $U$ is isomorphic to
\begin{equation}
\mathcal O _M(U) \cong C^\infty(\hat U)\otimes\wedge(\mathbb R^n)^\vee
\end{equation}
as local $\mathbb Z _2$-graded ring. Here, $\hat U \subset \mathbb R^m$ is an open subset.
\end{definition}
Note that this definition characterises the supermanifold $M$ by its sheaf of sections over the topological space $\underline M$. Roughly speaking, this is necessary as the odd coordinate functions are nilpotent, and therefore, they cannot correspond to directions supported on some topological space.
\begin{definition}
Let $M,N$ be supermanifolds. A \emph{morphism of supermanifolds} $f:M\to N$ is a tuple $f=(\underline f, f^*)$, where $\underline f:\underline M \to \underline N$ is a continuous map, and the map of sheaves $f^* : \mathcal O _N \to \underline f ^*\mathcal O _M$ preserves parity and the maximal ideals of the stalks. Together with this notion of morphisms, supermanifolds constitute a category, denoted  $\mathbf{sMfd}$.
\end{definition}
Note that the definition implies that the morphisms are purely even. To allow for a mixture of odd and even directions on the supermanifold, we use the language of Ref.~\cite{kessler2018supergeometry}, where the author introduces families of supermanifolds:
\begin{definition}
A \emph{family of supermanifolds} is a surjective submersion of supermanifolds $M\to A$. We say that $M$ has \emph{relative dimension} $\dim(M)-\dim(A)$.
\end{definition}
Note that since we defined supermanifolds as locally ringed spaces, they technically do not possess isolated geometrical points: In fact, we can only look at the stalks, which capture infinitesimal neighbourhoods of points. However, upon considering families of supermanifolds $M\to A$, we can consider \emph{$A$-points} of $M$, which are sections $A\to M$ of the submersion. By viewing all supermanifolds over $A$ in this way, we can use the odd directions of $A$ to mix even and odd components of $M$ and $N$ in morphisms $f:M\to N$. In other words, we use the auxiliary base space $A$ to parametrise the supermanifold $M$.
\begin{definition}
We shall denote by $\mathbf{sMfd}_A$ the category of supermanifolds over $A$. The homset is given by morphisms of supermanifolds $M\to N$ that preserve the base $A$.
\end{definition}
We can replace the base by pulling back the submersion $\sigma : M\to A$ along a surjective submersion of supermanifolds $\varphi: A \to \tilde A$, giving rise to a covariant functor
\begin{equation}
\mathbf{sMfd}_A \to \mathbf{sMfd}_{\tilde A},\quad (M,\sigma, A)\mapsto (M,\sigma\circ\varphi,\tilde A).
\end{equation}
\begin{principle}\label{princ:smfd-fop}
Henceforth, we shall implicitly view each supermanifold over a sufficiently large base $A$. In particular, we work exclusively in the category $\mathbf{sMfd}_A$. Whenever we mention the dimension of $M$, we shall refer to the relative dimension of $M$ over $A$.
\end{principle}
As shown in Ref.~\cite{kessler2018supergeometry}, this geometrical implementation of the functor-of-points allows for notions of tangent bundles, flows of vector fields, super Lie groups, principal fibre bundles, principal connections and many more. In particular, we shall use the principal bundle of frames and replacements of its structure group\footnote{The misnomer \emph{reductions of the structure group} is usually used; but the supporting group homomorphisms need not be injective in general, such that we resort to the use of this terminology.} to express super gauge theories geometrically.\par
We shall now give a enlightening example as to why the auxiliary base is essential to make sense of some geometrical constructions.
Recall that the \emph{flow of a vector field} $X$ is given by the family of diffeomorphisms $\phi^X_t: M\to M$ such that
\begin{equation}
X(f)\circ \phi_t^X = \frac{\diff}{\diff t} f\circ\phi_t^X.
\end{equation}
\begin{example}
Let $M$ be a supermanifold, and denote by $U\subset M$ a chart domain.
In what follows, we consider flows of local coordinate vector fields. To that end, let $(x^a,\xi^\alpha)$ denote coordinate functions on $U$. Let us denote by $\mathcal O_U$ the restriction of $\mathcal O _M$ to $U$, and similarly, write $\mathcal T _{U}$ for the restricted tangent sheaf. Note that
\begin{equation}
\mathcal T _{U} = \vspan_{\mathcal O _{U}} \left\{\pdiff{}{x^a},\pdiff{}{\xi^\alpha}\right\} = \vspan_{\mathcal O _{U}} \{\pdiff{}{X_A}\},
\end{equation}
where
\begin{equation}X_A = \begin{cases}{x^a} & \textrm{if } A=a,\\
{\xi^\alpha} & \textrm{if } A=\alpha.\end{cases}
\end{equation}
An explicit computation shows that in these coordinates, the flow $\phi^{\pdiff{}{X^B}}_t\eqqcolon \phi^B_t$ is given by
\begin{equation}
\phi^B_t: U\to U,\quad (X^1,\hdots,X^{B-1},X^B,X^{B+1},\hdots)\mapsto (X^1,\hdots,X^{B-1},X^B + t,X^{B+1},\hdots),
\end{equation}
that is, for short, $\phi^B_t : X^A \mapsto X^A + t\delta^{AB}$. In particular, whenever $X^B$ is an odd coordinate, $t$ needs to be an odd parameter. Thus, instead of taking $\mathbb R$ as ground ring, we require some graded ring $R$ with non-trivial odd component. The submersion $U\to A$  in the category $\mathbf{sMfd}_A$ allows us to take $R=\mathcal O _A$, giving the ground ring geometric significance.
\end{example}

We now adapt Kapranov's definition of a superspace (see Ref.~\cite{kapranov2021supergeometry}) in this framework, and specialise it as follows:
\begin{definition}
We say that a supermanifold $M$ of dimension $m|n$ is a \emph{superspace} if it is equipped with a decomposition of the tangent sheaf
\begin{equation}
\mathcal T _M = \mathcal B \oplus \mathcal F,
\end{equation}
such that $\mathcal F$ has full odd rank $0|n$ and is maximally non-integrable in the sense that the Frobenius map
\begin{equation}
\wedge^2_{\mathcal O _M} \mathcal F \to \mathcal B,\quad
(X\wedge Y) \mapsto [X,Y] \mod \mathcal F
\end{equation}
is surjective.
\end{definition}
Recall that by the Serre-Swan theorem for supermanifolds (see, e.g., Ref.~\cite{kessler2018supergeometry}), locally free $\mathcal{O}_M$-modules correspond uniquely to (locally free) vector bundles over $M$, such that the section functor
\begin{equation}
\mathbf{lfVB}_M \xrightarrow{\Gamma(M,-)}\mathbf{lfMod}_{\mathcal O _M}, \quad E \mapsto \Gamma(M,E)\eqqcolon \mathcal E
\end{equation}
defines an equivalence of categories. In this spirit, we shall write $TM\to M$ for the vector bundle that corresponds to the tangent sheaf, and likewise, $F$ and $B$ for the regular distributions that correspond to $\mathcal F$ and $\mathcal B$, that is,
\begin{equation}
\mathcal B = \Gamma(M,B),\quad \mathcal F = \Gamma(M,F).
\end{equation}
We will freely choose between the sheaf language and the vector bundle language, depending on which description suits us best.

Let us now elaborate on the second point of this definition. To that end, recall that once we specify a distribution $\mathcal F$, we have the $\mathcal O _M$-bilinear \emph{Frobenius map}
\begin{equation}
\wedge^2_{\mathcal O_M}\mathcal F \to \mathcal T _M / \mathcal F,\quad X\wedge Y \mapsto [X,Y]\mod\mathcal F.
\end{equation}
This pairing is precisely the obstruction to $\mathcal F$ being \emph{integrable}, i.e., the obstruction to $\mathcal F$ arising as a foliation of $M$ by subsupermanifolds $N_\alpha$ such that the restricted tangent sheaves $\left.\mathcal T _M\right\vert_{N_\alpha}$ arise as
\begin{equation}\left.\mathcal T_M\right\vert_{N_\alpha}\cong \left.\mathcal O _M\right\vert_{N_\alpha}\otimes \mathcal T_{N_\alpha}.
\end{equation}

Note that in our case, we additionally have orthogonal projections $\pr_\mathcal F: \mathcal T_M \to \mathcal F$ and $\pr_\mathcal B : \mathcal T _M \to \mathcal B$, such that we can identify the quotient $\mathcal T _M /\mathcal F$ with the orthogonal complement $\mathcal B$. This allows us to make the Frobenius obstruction explicit:
\begin{equation}
f_F:\wedge^2_{\mathcal O _M}\mathcal F \to \mathcal B,\quad X\wedge Y\mapsto \pr_{\mathcal B}([X,Y]).
\end{equation}

To give physical context, elements of $\mathcal B\subset\mathcal T_M$ will be thought of as bosonic fields, and elements of $\mathcal F\subset\mathcal T _M$ will be thought of as fermionic (spinorial) fields. We shall call $\mathcal B$ the \emph{bosonic distribution} and $\mathcal F$ the \emph{spinorial distribution}. The Frobenius map is the obstruction to $\mathcal F$ being integrable, and can be thought of as a \emph{current}:
\begin{figure}[H]
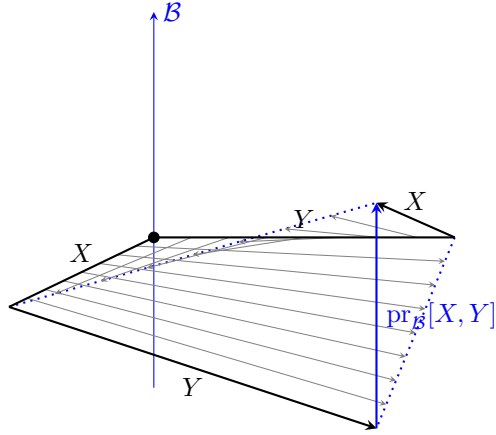

    \centering
    \imgSuperspaceTorsion
    \caption{The commutator of two fermionic fields attains contributions along $\mathcal B$.}
    \label{fig:superspace-frobenius}
\end{figure}

Note that the decomposition of the tangent sheaf equips the de Rham complex with a cohomological bigrading
\begin{equation}
    \Omega^r(M) = \bigoplus_{p+q=r}\Omega^{p|q}(M),\quad
    \Omega^{p|q}(M) = \wedge^p_{\mathcal O _M}\mathcal B^\vee\,\hat\otimes\,\wedge^q_{\mathcal O _M}\mathcal F^\vee,
\end{equation}
where `$\hat\otimes$' denotes the graded tensor product.\par
The prescription of the two complementary subbundles of $\mathcal T _M$ amounts to a reduction of the structure group of the bundle of frames $\GL(m|0)\times\GL(0|n)\to\GL(m|n)$, and we shall denote the reduced frame bundle by
\begin{equation}
Q^B\times_M Q^F \to \Fr(TM) \to M.
\end{equation}
Here, we consider $\GL(m|n)$ as the super Lie group of even automorphisms of $\mathbb R^{m|n}$, viewed as a trivial family of supermanifolds over the auxiliary base $A$. The reduction decomposes the soldering form as follows: Let $p=(p_B,p_F)$ be a local section of $(Q^B\times_M Q^F)|_U \to U \subset M$. Such a section is by definition a local basis of $B\oplus F$, such that it amounts to the respective isomorphisms
\begin{equation}
p_B: \mathbb R^{m|0} \times U \to B|_U,\quad p_F: \mathbb R^{0|n} \times U \to F|_U.
\end{equation}
Hence, the reduced soldering form on $Q^B\times_M Q^F$ reads
\begin{align}\label{eq:soldering-decomposition}
\theta_{(p_B,p_F)} &= (p_B,p_F)^{-1}\circ\diff\pi = p_B^{-1}\circ\pr_B^*\diff\pi \oplus p_\mathcal F^{-1}\circ\pr_F^*\diff\pi \\&= \theta_{p_ B}^B \oplus \theta_{p_F}^F \in \Omega^1(Q^B \times_M Q^F, \mathbb R^{m|0}\oplus\mathbb R^{0|n})^{\GL(m)\times\GL(n)}_{\hor}.
\end{align}

The following lemma provides a feasible way to compute the Frobenius obstruction explicitly:
\begin{lemma}
Let $\varphi$ be a $\GL(m)\times\GL(n)$ connection on the superspace, let $\Theta^{\varphi}$ denote its torsion, and let $X,Y\in\mathcal T_{Q^B \times_M Q^F}$ be the horizontal lifts of $\tilde X,\tilde Y\in\mathcal F \subset \mathcal T _M$ with respect to $\varphi$. It holds
\begin{equation}
\pr_{\mathbb R^{m|0}} \left(\Theta^{\varphi}(X,Y)\right) = \iota_{[X,Y]}\theta^B,
\end{equation}
irrespective of the choice of $\varphi$.
\end{lemma}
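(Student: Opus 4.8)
The plan is to combine the torsion structure equation with the elementary observation that $\theta^B$ annihilates the horizontal lift of any section of $\mathcal F$.

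Recall that the torsion is built from the soldering form by the usual structure equation: $\Theta^\varphi = \diff\theta$ plus a term that is $\mathcal O_{Q^B\times_M Q^F}$-bilinear in $\varphi$ and $\theta$, obtained from the action of $\mathfrak{gl}(m)\oplus\mathfrak{gl}(n)$ on $\mathbb R^{m|0}\oplus\mathbb R^{0|n}$. Since the reduction to $\GL(m)\times\GL(n)$ makes this action block diagonal, the $\mathbb R^{m|0}$-component of the bilinear term uses only the $\mathfrak{gl}(m)$-part of $\varphi$ acting on $\theta^B$. Projecting the structure equation onto $\mathbb R^{m|0}$ therefore gives
\[
\pr_{\mathbb R^{m|0}}\Theta^\varphi = \diff\theta^B + \bigl(\text{term bilinear in }\varphi\text{ and }\theta^B\bigr).
\]

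Now evaluate on the $\varphi$-horizontal lifts $X, Y$ of $\tilde X, \tilde Y\in\mathcal F$. The bilinear term vanishes on $(X,Y)$: each of its summands carries a factor $\iota_X\varphi$ or $\iota_Y\varphi$, both zero since $X, Y$ are horizontal. For the remaining term, the graded Cartan formula reads
\[
\diff\theta^B(X,Y) = X(\iota_Y\theta^B) - (-1)^{|X||Y|}\,Y(\iota_X\theta^B) - \iota_{[X,Y]}\theta^B.
\]
By \eqref{eq:soldering-decomposition}, on the chart where the section $p=(p_B,p_F)$ was chosen, $\theta^B$ is $p_B^{-1}\circ\pr_B$ applied to $\diff\pi$; hence $\iota_X\theta^B = p_B^{-1}(\pr_B(\diff\pi(X))) = p_B^{-1}(\pr_B\tilde X) = 0$, because $\tilde X\in\mathcal F$ and $\pr_B$ annihilates $\mathcal F$, and similarly $\iota_Y\theta^B = 0$. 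Only the bracket term survives, so $\pr_{\mathbb R^{m|0}}(\Theta^\varphi(X,Y)) = \iota_{[X,Y]}\theta^B$, the computation producing at worst a global sign that is absorbed into the torsion convention fixed earlier in the text.

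It remains to explain why this is independent of $\varphi$. If $X', Y'$ are the horizontal lifts for another connection $\varphi'$, then $X'-X$ and $Y'-Y$ are vertical, so $\diff\pi([X',Y']) = [\tilde X,\tilde Y] = \diff\pi([X,Y])$; since $\theta^B$ factors through $\diff\pi$, we get $\iota_{[X',Y']}\theta^B = \iota_{[X,Y]}\theta^B$. In fact $\iota_{[X,Y]}\theta^B = p_B^{-1}(\pr_B[\tilde X,\tilde Y]) = p_B^{-1}(f_F(\tilde X\wedge\tilde Y))$ is just the Frobenius obstruction written in the frame $p_B$, which visibly does not see $\varphi$. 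The only delicate points are bookkeeping --- the graded signs in the Cartan formula (both $X, Y$ are odd here), the fact that $\Theta^\varphi$ is tensorial so that evaluating on vector fields rather than pointwise is harmless, and matching the torsion sign convention --- none of which is a real obstacle; the content of the lemma is the vanishing $\iota_X\theta^B = 0$ together with the observation that the right-hand side depends only on $[\tilde X,\tilde Y]\bmod\mathcal F$.
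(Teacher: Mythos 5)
Your proof is correct and follows essentially the same route as the paper's: project the structure equation, observe that horizontality kills the connection term and that $\theta^B$ annihilates lifts of sections of $\mathcal F$, and let the Cartan formula leave only the bracket term. Your additional remarks on connection-independence (the bracket of two horizontal lifts changes only by a vertical field, which $\theta^B$ kills) make explicit what the paper leaves implicit, but the argument is the same.
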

\begin{proof}
Let $X,Y$ be horizontal lifts of elements of $\mathcal F$. It holds
\begin{equation}
\pr_{\mathbb R^{m|0}} \Theta^{\varphi}(X,Y) = \iota_X\iota_Y\diff^{\varphi}\theta^B = \iota_X L_Y \theta^B = \iota_{[X,Y]}\theta^B,
\end{equation}
where we used that $\theta^\mathcal B$ vanishes on $\GL(m|0) \times\GL(0|n)$-invariant lifts of elements of $\mathcal F$ (see \eqref{eq:soldering-decomposition}). In particular, the result does not depend on the choice of connection. 
\end{proof}

\section{Spin Structures}\label{sec:spin-structures}
In what follows, we develop the notion of \emph{spinorial superspaces}. The latter intertwine the spinor algebra with the properties of the distribution $\mathcal F \subset\mathcal T _M$, thus allowing for a pleasant geometric interpretation of fermionic particles.\par
Throughout, we shall denote by $(V,q)$ the quadratic vector space that infinitesimally models the ordinary spacetime manifold, and by $S$ a real spinorial representation of $\Spin(V,q)$.
Any irreducible representation $S_0$ has a \emph{type} $\mathbb F\in\{\mathbb R, \mathbb C, \mathbb H\}$, such that the compact group of unitary elements $K^\mathbb F = \{x\in\mathbb F : |x|^2=1\}$ is (isomorphic to) the $\Spin(V,q)$-invariant automorphisms of $S_0$. In what follows, we shall assume that $(V,q)$ is Minkowskian. The following theorems provide existence and uniqueness results of the $\Gamma$ maps.
\begin{theorem}[\cite{deligne1999quantum}]
    Let $S_0$ be an irreducible real spinorial representation of $(V,q)$, where $q$ is Minkowskian. There exists, up to rescaling, a unique real symmetric morphism of representations $\Gamma: S_0\otimes S_0 \to V$. It holds
    \begin{itemize}
        \item $\Gamma$ is invariant under the action of $K^\mathbb F\subset \mathbb F$ on $S_0$.
        \item Let $v\in V$ such that $q(v,v)> 0$. The bilinear form $q(v,\Gamma(-,-))$ is non-degenerate.
\end{itemize}\label{thm:gamma-real}
\end{theorem}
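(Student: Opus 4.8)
The plan is to reduce the statement to the classical classification of invariant bilinear forms on irreducible spin representations, which can be read off from the structure of the Clifford algebra $\Cl(V,q)$. First I would recall that, since $\Gamma \colon S_0 \otimes S_0 \to V$ is equivalent data to a nonzero element of $\Hom_{\Spin(V,q)}(S_0 \otimes S_0, V)$, and $V$ sits inside $\Cl(V,q)$ as the degree-one part, such a morphism is the same as a $\Spin$-equivariant map $S_0 \otimes S_0 \to \Cl(V,q)$ landing in $V$; dualising and using that $\Gamma$ is required symmetric, this is a $\Spin$-invariant element of $\Sym^2 S_0^\vee \otimes V$. The existence-and-uniqueness-up-to-scale claim is then the assertion that $(\Sym^2 S_0^\vee \otimes V)^{\Spin(V,q)}$ is one-dimensional. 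The standard route is to note $\Hom_{\Spin}(S_0 \otimes S_0, V) \cong \Hom_{\Spin}(V^\vee \otimes S_0, S_0^\vee) \cong \Hom_{\Spin}(S_0, S_0^\vee)$ after using the Clifford multiplication $V \otimes S_0 \to S_0$ (which identifies $V \otimes S_0$ with a sum of copies of $S_0$ and its companions), so everything is controlled by the space of invariant bilinear forms on $S_0$, i.e.\ by $\Hom_{\Spin}(S_0, S_0^\vee)$, together with the symmetry/antisymmetry type of that form. For $(V,q)$ Minkowskian this dimension count and the resulting symmetry type are exactly the well-documented entries in the spinor tables (Deligne–Freed), which is why I would simply cite \cite{deligne1999quantum} for the core dichotomy and devote the proof to extracting the two bulleted consequences.

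For the first bullet, $K^\mathbb{F}$-invariance, the key observation is that $K^\mathbb{F}$ is precisely the group of $\Spin(V,q)$-equivariant automorphisms of $S_0$ (this is stated in the excerpt), hence it acts on the one-dimensional space $\Hom_{\Spin}(S_0 \otimes S_0, V)$ by a character $K^\mathbb{F} \to \mathbb{R}^\times$. Since $K^\mathbb{F}$ is either trivial, $\U(1)$, or $\SU(2)$, and each of these is connected with no nontrivial homomorphism to $\mathbb{R}^\times$ (they are compact, so the image is a compact subgroup of $\mathbb{R}^\times$, namely $\{1\}$), the character is trivial and $\Gamma$ is automatically $K^\mathbb{F}$-invariant. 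I would spell this out in one or two sentences.

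For the second bullet, fix $v \in V$ with $q(v,v) > 0$ and consider $b_v \coloneqq q(v, \Gamma(-,-))$, a symmetric bilinear form on $S_0$. To see it is nondegenerate, I would argue that its radical is a $\Spin(V',q')$-invariant subspace for the stabiliser of $v$, which (as $v$ is a positive-norm, i.e.\ spacelike, vector in a Minkowski space) acts on $S_0$ in a way whose irreducible constituents are spinor modules for a definite-signature orthogonal group; combining irreducibility with the fact that $b_v$ is not identically zero — because $\Gamma$ is surjective onto $V$, so $q(v, \Gamma(s,s))$ cannot vanish for all $s$ unless $q(v,-)$ annihilates the image, contradicting $q(v,v) \neq 0$ — forces the radical to be trivial. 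The main obstacle I anticipate is making the nonvanishing-of-$b_v$ step fully rigorous: surjectivity of $\Gamma$ gives nonvanishing of the \emph{sesquilinear-in-two-arguments} map, but one must pass from $\Gamma(s,s') \neq 0$ for some pair to $q(v,\Gamma(s,s)) \neq 0$ for some single $s$, which uses polarisation and the symmetry of $\Gamma$ together with a genuine choice of $v$ adapted to the given element of $V$ in the image; handling the case distinction over $\mathbb{F} \in \{\mathbb{R},\mathbb{C},\mathbb{H}\}$ cleanly, and checking that the positive-norm hypothesis on $v$ is exactly what is needed for the stabiliser representation to have definite type, is where the real care is required.
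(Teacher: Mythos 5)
The paper never proves this theorem: it is imported verbatim from Deligne--Freed \cite{deligne1999quantum} and used as a black box, so there is no in-paper argument to compare yours against, and I can only assess your sketch on its own terms. Your handling of the existence/uniqueness (cite the spinor tables for $\dim\Hom_{\Spin(V,q)}(\Sym^2 S_0,V)=1$) matches what the paper implicitly does. Your first bullet is essentially correct: $K^\mathbb F$ commutes with $\Spin(V,q)$, hence acts on that one-dimensional real space through a continuous character into $\mathbb R^\times$, whose image lies in $\{\pm1\}$ by compactness; connectedness kills it for $\U(1)$ and $\SU(2)$, and for $K^{\mathbb R}=\{\pm1\}$ (which, contrary to what you write, is neither trivial nor connected) invariance is immediate from bilinearity, since $\Gamma(-s,-t)=\Gamma(s,t)$.

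The genuine gap is in the second bullet. Your argument is that the radical of $b_v=q(v,\Gamma(-,-))$ is invariant under the stabiliser of $v$ and that irreducibility plus $b_v\neq0$ forces it to vanish. But $S_0$ is irreducible as a $\Spin(V,q)$-module, not as a module over the stabiliser of $v$; restricted to that stabiliser it generally decomposes, so an invariant radical could perfectly well be a proper nonzero subspace. The step you flag as the hard one (nonvanishing of $b_v$) is actually the easy one -- a nonzero equivariant map into the irreducible module $V$ is surjective, and polarisation does the rest -- whereas the structural step is what is missing. The standard repair avoids the stabiliser entirely: dualise $\Gamma$ to $\gamma:V\to\Hom(S_0,S_0^\vee)$ as in theorem~\ref{thm:gamma-real-unique}, so that $b_v(s,t)$ is the evaluation pairing $\langle\gamma(v)s,t\rangle$; non-degeneracy of $b_v$ is then precisely invertibility of $\gamma(v)$, which follows from the Clifford relation $2\,\gamma(v)\circ\tilde\gamma(v)=q(v,v)\id_{S_0}$ whenever $q(v,v)\neq0$. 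This also clarifies that compactness of the stabiliser (which you invoke, and whose causal-character bookkeeping you have backwards modulo signature convention) is only relevant for the stronger \emph{definiteness} statement for timelike $v$, not for mere non-degeneracy.
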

\begin{theorem}[\cite{deligne1999quantum}]
    Let $S$ be a real representation of $\Spin(V,q)$ (not necessarily irreducible), and $\Gamma:S\otimes S \to V$ a symmetric map of representations of $\Spin(V,q)$. It holds:
    \begin{itemize}
        \item $S=\bigoplus_i S^{(i)}$ for real irreducible spinorial representations $S^{(i)}$, and $\Gamma = \sum_i \Gamma^{(i)}$, where the $\Gamma^{(i)}$ are as in theorem~\ref{thm:gamma-real}.
        \item If each of the $\Gamma^{(i)}$ is non-zero, there exists a unique $\tilde\Gamma : S^\vee\otimes S^\vee \to V$ such that the duals $\gamma:V\to\Hom(S,S^\vee)$, $\tilde\gamma: V\to \Hom(S^\vee, S)$ (after employing the musical isomorphism due to $q$ on $V$) obey the Clifford relations
        \begin{align}
            \tilde \gamma (v) \circ\gamma (w) 
            +\tilde \gamma (w) \circ\gamma (v)
            &= q(v,w)\id_{S^\vee},
            \\
            \gamma (v) \circ \tilde\gamma (w) +
            \gamma (w) \circ \tilde\gamma (v)&= q(v,w)\id_{S}.
        \end{align}
    \end{itemize}
    \label{thm:gamma-real-unique}
\end{theorem}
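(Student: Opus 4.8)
The statement breaks into two parts, which I would handle separately.

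For the first bullet, the decomposition $S=\bigoplus_i S^{(i)}$ into irreducible spinorial representations is complete reducibility of finite-dimensional representations of $\Spin(V,q)$ (for $\dim V\ge 3$ the algebra $\mathfrak{so}(V,q)$ is semisimple, so Weyl's theorem applies). Restricting $\Gamma$ to the summands gives $\Spin$-morphisms $\Gamma^{(i,j)}\colon S^{(i)}\otimes S^{(j)}\to V$ with $\Gamma^{(i,j)}(s,s')=\Gamma^{(j,i)}(s',s)$. Grouping isomorphic summands into isotypic blocks and using Schur's lemma over the commutant $\mathbb F^{(i)}=\End_{\Spin}(S^{(i)})$ together with Theorem~\ref{thm:gamma-real} (which fixes each diagonal morphism up to scale), the within-block part of $\Gamma$ becomes a single quadratic form over $\mathbb F^{(i)}$ on the multiplicity space; diagonalising it by a linear change of basis within the block (a Sylvester-type normalisation, available over $\mathbb R$, $\mathbb C$, and $\mathbb H$) and choosing the decomposition accordingly yields $\Gamma=\sum_i\Gamma^{(i)}$ with each $\Gamma^{(i)}$ as in Theorem~\ref{thm:gamma-real}. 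The representation theory fixing which inequivalent irreducibles can pair non-trivially to $V$ I would take from~\cite{deligne1999quantum}.

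For the second, substantive bullet: dualising $\Gamma$ with the musical isomorphism of $q$ gives $\gamma\colon V\to\Hom(S,S^\vee)$, $\langle\gamma(v)s,s'\rangle=q\bigl(v,\Gamma(s,s')\bigr)$; symmetry of $\Gamma$ makes each $\gamma(v)$ a symmetric form, and by the non-degeneracy clause of Theorem~\ref{thm:gamma-real}, $\gamma(v)$ is invertible whenever $q(v,v)>0$. Uniqueness of $\tilde\Gamma$ is then immediate: setting $w=v$ in the first Clifford relation gives $\tilde\gamma(v)\circ\gamma(v)=\tfrac12 q(v,v)\,\id_S$, so $\tilde\gamma(v)=\tfrac12 q(v,v)\,\gamma(v)^{-1}$ on the dense open set $\{q(v,v)>0\}$, and a $\Spin$-equivariant linear map $V\to\Hom(S^\vee,S)$ is determined by its restriction there. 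For existence, apply Theorem~\ref{thm:gamma-real} to each $(S^{(i)})^\vee$ — still irreducible spinorial, and carrying a non-zero symmetric morphism because $\Gamma^{(i)}\ne0$ — obtaining $\tilde\gamma^{(i)}\colon V\to\Hom\bigl((S^{(i)})^\vee,S^{(i)}\bigr)$. As $\Gamma$ is diagonal, $\gamma$ is block-diagonal, so $\tilde\gamma^{(i)}(v)\circ\gamma^{(i)}(v)\in\End_{\Spin}(S^{(i)})$ depends on $v$ only through $\Sym^2 V=\mathbb R q\oplus\Sym^2_0 V$; since $\Sym^2_0 V$ does not occur in $\End(S^{(i)})$ — a sum of exterior powers $\wedge^k V$ — Schur forces $\tilde\gamma^{(i)}(v)\circ\gamma^{(i)}(v)=\mu_i\,q(v,v)\,\id_{S^{(i)}}$ with $\mu_i\ne0$, both factors being invertible for $q(v,v)>0$. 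Then $\tilde\Gamma:=\sum_i(2\mu_i)^{-1}\tilde\Gamma^{(i)}$ satisfies $\tilde\gamma(v)\circ\gamma(v)=\tfrac12 q(v,v)\,\id_S$ for all $v$; inverting $\gamma(v)$ off the quadric gives $\gamma(v)\circ\tilde\gamma(v)=\tfrac12 q(v,v)\,\id_{S^\vee}$ there, hence — both sides being polynomial in $v$ — everywhere. Polarising yields the two stated identities; equivalently, $\gamma$ and $\tilde\gamma$ make $S\oplus S^\vee$ a $\Cl(V,q)$-module compatible with the $\Spin$-action.

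The main obstacle is representation-theoretic rather than computational: one must control the relevant $\Hom_{\Spin}$-multiplicities — absence of $\Sym^2_0 V$ from $\End(S^{(i)})$, one-dimensionality of the space of symmetric morphisms $\Sym^2 S^{(i)}\to V$ (Theorem~\ref{thm:gamma-real}), and which inequivalent irreducibles pair to $V$ — all of which rests on the classification of real spinor modules in~\cite{deligne1999quantum}. The remaining steps (dualising, tracking the musical isomorphism and the normalisation of $q(v,w)$, polarising, the left-inverse argument) are routine.
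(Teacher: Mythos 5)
This theorem is quoted from~\cite{deligne1999quantum} and the paper supplies no proof of its own, so there is nothing internal to compare your argument against; I can only assess it as a reconstruction of the reference's argument, which it essentially is. Your overall structure is sound: complete reducibility plus diagonalisation of the block form over the commutant for the first bullet, and for the second bullet uniqueness via inverting $\gamma(v)$ on the open cone $\{q(v,v)>0\}$ together with existence via Schur's lemma applied to the equivariant map $\Sym^2 V\to\End(S^{(i)})$ and the absence of $\Sym^2_0V$ among the exterior powers constituting $\End(S^{(i)})$. Two points deserve tightening. First, you explicitly defer the vanishing of cross-terms $S^{(i)}\otimes S^{(j)}\to V$ between inequivalent irreducibles to the reference; that is acceptable but it is a genuine input, not a formality, since in even dimensions the half-spin representations do pair non-trivially to $V$ and one must know how they assemble into \emph{real} irreducibles before the block-diagonal claim holds. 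Second, your Schur argument only shows $\tilde\gamma^{(i)}(v)\circ\gamma^{(i)}(v)=q(v,v)\,c_i$ for some $c_i\in\End_{\Spin}(S^{(i)})\cong\mathbb F$; you write this as a real scalar $\mu_i$, but when $\mathbb F=\mathbb C$ or $\mathbb H$ the element $c_i$ need not be real, and rescaling $\tilde\Gamma^{(i)}$ by $c_i^{-1}$ may destroy its symmetry. One must either argue that $c_i$ is central and fixed by the relevant (anti-)involution, or absorb the discrepancy using the $K^{\mathbb F}$-invariance clause of Theorem~\ref{thm:gamma-real}; as written this step has a gap in the complex and quaternionic cases. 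The remaining steps (uniqueness by density of the non-isotropic cone, polarisation, transposing one Clifford relation into the other) are correct.
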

The relevance of these maps lies in the \emph{super translation algebra}:
\begin{definition}\label{def:super-translation}
Let $\Pi S$ be a parity reversed spinorial representation of $\Spin(V,q)$. We define the \emph{super translation algebra} on $V\oplus \Pi S$ by setting
\begin{equation}
[v,w] = 0, \quad [v,s] = 0,\quad [s,t] = \Gamma(s,t)
\end{equation}
for $v,w\in V$ and $s,t\in\Pi S$.
\end{definition}
The \emph{super Poincaré algebra} is given by the semidirect product of Lie algebras $\mathfrak p = (V\oplus \Pi S)\rtimes\mathfrak{spin}(V,q)$.
The preceding algebraic relations capture infinitesimal displacements of fermionic and bosonic particles. In what follows, we define the global extension of these relations by imposing distinguished spin structures on superspaces.
\begin{definition}
An \emph{almost spinorial superspace} is a superspace $M$, where
\begin{itemize}
    
    \item the structure group of the frame bundle is reduced according to
\begin{equation}
Q^F\to Q^B\times_M Q^F \to \Fr(TM)
\end{equation}
factoring through the double cover
\begin{equation}
\Spin(V,q)\xrightarrow{\alpha\times\id} \SO(V,q) \times \Spin (V,q)\to \GL(m|n),
\end{equation}
\item the bosonic distribution $B\subset TM$ has typical fibre $V$ such that $B\cong Q^B\times_{\SO(V,q)}V$, and
\item the spinorial distribution $F\subset TM$ has typical fibre $\Pi S$, where $S$ is a real spinorial representation of $\Spin(V,q)$, such that $F \cong Q^F \times_{\Spin(V,q)}\Pi S$.
\end{itemize}
For brevity, we shall write $Q^F \eqqcolon Q$.
\end{definition} 

Note that the Frobenius obstructions can be made explicit via
\begin{equation}\label{eq:frob-explicit}
f_F: \wedge^2 F \xrightarrow{[-,-]} TM \xrightarrow{\pr_B} B,
\end{equation}
as the orthogonal projection onto $B$ identifies $TM/F \cong B$.

The approach to superspaces via the bundle of frames shows that Frobenius obstructions to $\mathcal B$ and$\mathcal F$  must arise as maps of $\Spin(V,q)$ representations. We illustrate this for the integrability conditions of $\mathcal F$. To do so, we pick any local section $p$ of $Q = Q^ F$ over $U\subset M$. Recall that we have a map $\hat\alpha:Q^F \to Q^B$, such that $p$ gives rise to the following isomorphisms
\begin{equation}
p:\Pi S \times U \xrightarrow{\sim}{} \left.F \right\vert_U,\quad
\alpha(p): V \times U\xrightarrow{\sim}{}  \left.B \right\vert_U.
\end{equation}
Using the inverse of the maps induced by the section, we see that $f_ F$ descends fibrewise to a map of $\Spin(V,q)$ representations, as the following diagram commutes by construction:
\begin{equation}
\begin{tikzcd}
\wedge^2( F|_U) \arrow{r}{f_ F} \arrow{d}{p^{-1}\wedge p^{-1}}&
 B|_U\arrow{d}{\hat\alpha(p)^{-1}}\\
U\times\wedge^2(\Pi S)
\arrow{r}{\hat\alpha(p)^{-1}(f)}&
U\times V.
\end{tikzcd}
\end{equation}
As all local sections $p$ are related by a gauge transformation $g\in C^\infty(U,\Spin(V,q))$ and the preceding result is independent of $g$, it follows that the map $\hat\alpha(p)^{-1} (f_F)$ is $\Spin(V,q)$-equivariant, and can thus be identified with an element
\begin{equation}\label{eq:frob-2-form}
C_F \in\Omega^2(Q,V)^{\Spin(V,q)}_{\hor}.
\end{equation}
Note that equivalently, $C_F$ can be constructed by using the hom-tensor adjunction \begin{equation}
\Hom(\wedge^2(\mathcal F),\mathcal B) \cong \wedge^2(\mathcal F)^\vee\otimes_{\mathcal O _M}\mathcal B.
\end{equation}
Alternatively, using a $\Spin(V,q)$ connection $\varphi\in\mathcal A (Q)$, one can lift sections of $\mathcal F$ horizontally to $\mathcal T _Q$ and use the $\Spin(V,q)$-equivariance of the lifted sections.

\begin{lemma}\label{lemma:b-integrable}
The bosonic distribution is integrable along any purely even submanifold $\iota:\check M \to M$ whenever $\underline{\check M} = \underline M$ and $\underline\iota = \id_{\underline M}$. 
\end{lemma}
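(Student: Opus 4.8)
The plan is to reduce the statement to the vanishing, after restriction to $\check M$, of the integrability obstruction tensor of $\mathcal B$, and then to kill that obstruction by a parity count. Recall from \eqref{eq:frob-explicit} that, using the identification $\mathcal T_M/\mathcal B\cong\mathcal F$ coming from the orthogonal projections, the Frobenius obstruction of $\mathcal B$ is $f_B:\wedge^2_{\mathcal O_M}\mathcal B\to\mathcal F$, $X\wedge Y\mapsto\pr_{\mathcal F}([X,Y])$. The first point is that $f_B$ is $\mathcal O_M$-bilinear: for $f\in\mathcal O_M$ and $X$ a local section of $\mathcal B$, the brackets $[fX,Y]$ and $\pm f[X,Y]$ differ by a term of the form $(Yf)X$, which again lies in $\mathcal B$ and hence is annihilated by $\pr_{\mathcal F}$; the same holds in the other argument. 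Thus $f_B$ is a genuine section of $\wedge^2\mathcal B^\vee\otimes_{\mathcal O_M}\mathcal F$, and integrability of $\mathcal B$ along $\iota:\check M\to M$ is precisely the statement $\iota^{*}f_B=0$, which can be checked on a local frame.

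Next I would run the parity argument. Pick locally a frame $\{\beta_a\}$ of $\mathcal B$; since $\mathcal B$ has rank $m|0$ these are purely even vector fields, so each commutator $[\beta_a,\beta_b]$ is even. Writing its $\mathcal F$-component in a local frame $\{\phi_\alpha\}$ of $\mathcal F$, which consists of purely odd vector fields because $\mathcal F$ has rank $0|n$, one gets $\pr_{\mathcal F}([\beta_a,\beta_b])=c_{ab}^{\alpha}\phi_\alpha$ with the coefficient functions $c_{ab}^{\alpha}\in\mathcal O_M$ necessarily odd. (Equivalently, in the spin reduction $B\cong Q\times_{\Spin(V,q)}V$, $F\cong Q\times_{\Spin(V,q)}\Pi S$, the bundle $\wedge^2B^\vee\otimes F$ has purely odd typical fibre $\wedge^2V^\vee\otimes\Pi S$, so the equivariant function on $Q$ representing the even morphism $f_B$ takes values in the odd part.) Now restrict: since $\check M$ is purely even with $\underline{\check M}=\underline M$ and $\underline\iota=\id_{\underline M}$, the sheaf map $\iota^{*}:\mathcal O_M\to\iota_*\mathcal O_{\check M}$ annihilates all odd functions, so $\iota^{*}c_{ab}^{\alpha}=0$ and hence $\iota^{*}f_B=0$. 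The very same computation applied to the frame itself shows that $\iota^{*}$ of the $\mathcal F$-components of the $\beta_a$ vanishes, so in fact $\mathcal B|_{\check M}=\mathcal T_{\check M}$; this is why the hypotheses $\underline{\check M}=\underline M$, $\underline\iota=\id_{\underline M}$ — which single $\check M$ out as a body of $M$ — are the right ones, and without them the assertion would be either vacuous or false.

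The main point to be careful about is not a computation but the bookkeeping in the category $\mathbf{sMfd}_A$: one must check that the parity statement in the second step is frame-independent (which follows from the $\mathcal O_M$-bilinearity established in the first step) and that "purely even" is used in the sense that genuinely forces the coefficient functions to restrict to zero. I would also note for contrast that, unlike the non-integrability of $\mathcal F$, this statement does not invoke the surjectivity of the Frobenius map $f_F$ at all; it is a pure parity phenomenon.
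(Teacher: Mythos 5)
Your reduction to the vanishing of $f_B=\pr_{\mathcal F}([-,-]|_{\mathcal B\wedge\mathcal B})$ and your check of its $\mathcal O_M$-bilinearity match the paper, but from there you take a genuinely different route: you argue by parity (even commutators of even fields must have odd coefficients along an odd frame of $\mathcal F$, and odd functions die under $\iota^*$), whereas the paper argues by representation theory. In the paper, $f_B$ is lifted to the frame bundle where it becomes a $\Spin(V,q)$-equivariant map $V\otimes V\to S$, and this must vanish because $-1\in\Spin(V,q)$ acts trivially on $V\otimes V$ but by $-1$ on $S$. The paper's argument therefore proves the stronger statement that $f_B$ vanishes identically on all of $M$ (so $\mathcal B$ is integrable outright, and the restriction to $\check M$ plays no role in killing the obstruction), and it genuinely uses the almost spinorial structure, which your argument never invokes beyond a parenthetical remark.

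There is, however, a gap in your argument in the category the paper actually works in. By Principle~1 everything lives in $\mathbf{sMfd}_A$ over an auxiliary base $A$ with nontrivial odd part, and a ``purely even'' submanifold means one of relative dimension $m|0$ over $A$. The restriction $\iota^*$ then does \emph{not} annihilate all odd functions: an odd element pulled back from $\mathcal O_A$ (an ``odd constant'') survives restriction to any submanifold over $A$. Concretely, for a bare superspace over $A=\mathbb R^{0|1}$ with odd base coordinate $\eta$, a distribution such as $\mathcal B=\vspan\{\partial_1,\,\partial_2+x^1\eta\,\partial_\xi\}$ has $\pr_{\mathcal F}([\partial_1,\partial_2+x^1\eta\partial_\xi])=\eta\,\partial_\xi$, whose coefficient is odd yet nonzero on the body. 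So parity alone does not suffice in families; the statement really needs the spin structure. Your parenthetical observation already contains the fix: once you note that $f_B$ corresponds to an equivariant map out of $V\otimes V$ into $\Pi S$, apply the action of $-1\in\Spin(V,q)$ rather than the parity count --- that is exactly the paper's proof, and it closes the gap.
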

\begin{proof}
Note first that $\diff\iota: T\check M \to \iota^*TM$ corestricts to an isomorphism 
\begin{equation}
T\check M \cong \iota^*B\cong B_{\iota(\check M)}.
\end{equation}
For integrability, it suffices to show that the Frobenius obstruction $f_ B \coloneqq \pr_ F ([-,-]|_{\mathcal B \wedge  \mathcal B})$ vanishes. To that end, note that the obstruction presented in this way is a graded skew $\mathcal O _M$-bilinear map $\mathcal B \wedge \mathcal B \to \mathcal F$. As illustrated above, $f_ B$ corresponds fibrewise to a bilinear $\Spin(V,q)$-equivariant map $V\otimes V\to S$. Suppose the map was non-vanishing for some $v\otimes w\in V\otimes V$. Then, $-1\in\Spin(V,q)$ is in the kernel of the action on $V\otimes V$, whereas it changes the sign on $S$ -- a contradiction. Hence, the Frobenius obstruction vanishes. 
\end{proof}
The above lemma will be crucial when we identify ordinary spacetime manifolds within superspaces.
\begin{definition}
An almost spinorial superspace is said to be of \emph{type $\mathcal N = k$} where $k$ is the number of irreducible real spinorial representations $S$ decomposes into, that is,
\begin{equation}
S \cong \bigoplus_{i=1}^k S_0
\end{equation}
where $S_0$ is irreducible.
 If the irreducible spinorial representation consists of two semi spinor representations $S_0= S^+_0\oplus S^-_0$, the superspace is said to be of \emph{type $(\mathcal N^+,\mathcal N^-) = (k^+,k^-)$} if
 \begin{equation}
 S \cong \bigoplus_{i=1}^{k^+} S_0^+ \oplus\bigoplus_{i=1}^{k^-} S_0^-.
 \end{equation}
\end{definition}
There are variants of this definition, some authors use $\mathcal N$ to refer to the number of irreducible complex representations $S$ turns into upon complexification.
\begin{lemma}
If the almost spinorial superspace is of type $\mathcal N = 1$, the Frobenius map $C_F$ from equation~\eqref{eq:frob-2-form} coincides with the morphism induced by the map of representations $\Gamma:S\otimes S \to V$.
\end{lemma}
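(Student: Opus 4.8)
The plan is to show that the $\Spin(V,q)$-equivariant map $C_F\colon\wedge^2(\Pi S)\to V$ obtained in \eqref{eq:frob-2-form} must be proportional to the symmetric morphism $\Gamma\colon S\otimes S\to V$, and then to argue that no rescaling freedom remains once one fixes the superspace data. First I would recall the parity bookkeeping: since $\mathcal F$ carries odd rank, a \emph{graded-antisymmetric} $\mathcal O_M$-bilinear map on $\mathcal F\wedge\mathcal F$ corresponds, fibrewise, to an ordinary \emph{symmetric} bilinear map on the underlying vector space $S$ (the Koszul sign flips antisymmetry to symmetry under parity reversal $S\mapsto\Pi S$). Hence $C_F$, regarded through a local section $p$ of $Q$ as in the diagram preceding Lemma~\ref{lemma:b-integrable}, is a symmetric morphism of $\Spin(V,q)$-representations $S\otimes S\to V$.

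The second step invokes Theorem~\ref{thm:gamma-real} together with Theorem~\ref{thm:gamma-real-unique}. In the $\mathcal N=1$ case $S\cong S_0$ is irreducible, so Theorem~\ref{thm:gamma-real} says that the space of symmetric morphisms $S_0\otimes S_0\to V$ is one-dimensional, spanned by $\Gamma$. Therefore $C_F=\lambda\,\Gamma$ for some $\lambda$ (a priori a function on $\underline M$, or an element of $\mathcal O_A$ given the $A$-parametrised setting; I would note that $\Spin(V,q)$-equivariance plus horizontality of $C_F$ on the connected bundle $Q$ forces $\lambda$ to be locally constant, and on each connected component it is a genuine scalar). This already identifies $C_F$ with $\Gamma$ up to scale, which is the content physicists usually want.

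The remaining, and slightly delicate, step is to explain in what sense $C_F$ \emph{coincides} with $\Gamma$ rather than merely being proportional to it — i.e. to pin down $\lambda$. Here I would point out that the ``rescaling'' ambiguity in Theorem~\ref{thm:gamma-real} is precisely the freedom in choosing the identification of the typical fibre of $B$ with $(V,q)$ and of the typical fibre of $F$ with $\Pi S$: once the almost spinorial superspace structure fixes the reductions $Q^B$, $Q^F$ and the isomorphisms $B\cong Q^B\times_{\SO(V,q)}V$, $F\cong Q^F\times_{\Spin(V,q)}\Pi S$ compatibly with the Clifford/$\Gamma$ data underlying $\alpha\colon\Spin(V,q)\to\SO(V,q)$, the bracket $[s,t]\bmod\mathcal F$ is computed by exactly the structure constants of the super translation algebra of Definition~\ref{def:super-translation}, namely $\Gamma(s,t)$. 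Concretely, I would trace through the commuting square with section $p$: $p^{-1}\wedge p^{-1}$ sends $\tilde X\wedge\tilde Y$ to the constant sections $s,t\in\Pi S$, and $\hat\alpha(p)^{-1}$ sends $\pr_B[\tilde X,\tilde Y]$ to its component in $V$; the claim is that these are related by $\Gamma$ and not some multiple, because the frame $p$ was by construction adapted to the model fibre in which the bracket is $\Gamma$.

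The main obstacle I anticipate is making this last normalization statement airtight without circularity: one must be careful that ``the typical fibre of $F$ is $\Pi S$ and that of $B$ is $V$'' is structure that genuinely rigidifies $\lambda$, as opposed to $\lambda$ being reabsorbable into a rescaling of $\Gamma$ that changes nothing. I would resolve this by observing that the datum of an almost spinorial superspace includes a \emph{fixed} choice of $\Gamma$ (implicit in fixing $S$ as a representation together with the model super translation algebra it is meant to globalise), so the honest statement is: with respect to that fixed $\Gamma$, one has $C_F=\lambda\,\Gamma$ with $\lambda$ locally constant, and ``$\mathcal N=1$'' is what guarantees the target is one-dimensional so that no further equivariant maps can contribute. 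If the paper intends ``coincides'' only up to the inevitable scale, then Step~2 alone suffices and Step~3 is just this remark; I would phrase the proof to cover both readings.
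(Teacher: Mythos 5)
Your proof is correct and takes essentially the same route as the paper's: both lift the $\mathcal O_M$-bilinear Frobenius pairing via the frame bundle and soldering form to a symmetric $\Spin(V,q)$-equivariant map $\Sym^2(S)\to V$ and then invoke the uniqueness up to rescaling from Theorem~\ref{thm:gamma-real} in the irreducible ($\mathcal N=1$) case. Your third step, worrying about the normalisation of the scalar, is in fact more careful than the paper, whose proof simply states that these maps are ``unique (up to rescaling)'' and leaves the identification at that level.
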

\begin{proof}
Note first that the Frobenius pairing is indeed $\mathcal O _M$-bilinear, such that by the associated bundle construction and by using the soldering form, it lifts to a $\Spin(V,q)$-equivariant map $\Sym^2(S)\to V$, which is non-degenerate. These maps are unique (up to rescaling), and described by theorem~\ref{thm:gamma-real}.
\end{proof}
\begin{definition}
An almost spinorial superspace is called \emph{spinorial superspace} if the Frobenius map $C_F$ coincides with the map induced by the morphism of $\Spin(V,q)$ representations $\Gamma: S\otimes S\to V$ from theorem~\ref{thm:gamma-real-unique}.
\end{definition}
Thus in particular, all $\mathcal N=1$ almost spinorial superspaces are spinorial superspaces.
Our definition of spinorial superspaces is motivated by geometric and physical observations: The spinors arise to capture geometric effects of $\Spin(V,q)$ that are not captured by $\SO(V,q)$. The compatibility condition ensures that the interplay between bosonic and spinorial fields arising from the underlying geometry of the superspace adheres to the structure of spin representations and globalises the notion of the \emph{super translation algebra}, as introduced in definition~\ref{def:super-translation}.\par
If $\mathcal N>1$, the underlying spinorial representation decomposes as $S= W \otimes S_0$ for an irreducible representation $S_0$ (if $S_0$ is graded, so is $W$, and the graded tensor product has to be taken). In that case, there exists a $\GL(W)$-symmetry in the theory, obtained by acting on the twisting space $W$. This is referred to in physics literature as an occurence of \emph{$R$-symmetry}. The physical nature of these symmetries is discussed in reference~\cite{figueroa2001busstepp}.\par
Henceforth, we shall exclusively consider spinorial superspaces, and embrace the fact that spinorial superspaces encode their spinorial nature in the spinorial distribution $\mathcal F \subset\mathcal T _M$. Furthermore, inspired by physics literature, we will also use $\Gamma$ to refer to the Frobenius map $C_F$.  
\begin{lemma}
A spinorial superspace is canonically equipped with
\begin{itemize}
\item a $\mathcal O _M$-bilinear metric $g$ on the bosonic distribution $\mathcal B\subset\mathcal T _M$,
\item a non-degenerate graded $\mathcal O _M$-symmetric or graded $\mathcal O _M$-skew symmetric form on the spinorial distribution $\mathcal F\to M$ or both, depending on the dimension of $V$, the signature of $q$, and $\mathcal N$.
\end{itemize}
Furthermore, the map $\Gamma:\wedge^2_{\mathcal O _M}\mathcal{F}\to \mathcal B$ is non-degenerate in the sense that for non-isotropic $v\in\mathcal B$, the bilinear form $g(\Gamma(-,-),v)$ is non-degenerate.
\end{lemma}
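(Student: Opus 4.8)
The strategy is to transport all three structures from the representation-theoretic level to the sheaf level via the associated-bundle construction, exactly as in the proof of Lemma~\ref{lemma:b-integrable}. Since $M$ is a spinorial superspace, the bosonic distribution is $B \cong Q \times_{\Spin(V,q)} V$ (with $\Spin$ acting through $\alpha$), so the $\Spin(V,q)$-invariant quadratic form $q$ on $V$ induces a fibrewise bilinear form, hence an $\mathcal O_M$-bilinear metric $g$ on $\mathcal B$. Similarly, $F \cong Q \times_{\Spin(V,q)} \Pi S$, so any $\Spin(V,q)$-invariant bilinear form on $S$ induces an $\mathcal O_M$-bilinear form on $\mathcal F$ (parity reversal turning a symmetric form on $S$ into a graded-skew form on $\mathcal F$ and vice versa). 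The existence of such invariant forms on an irreducible real spinorial representation — symmetric, skew, or both — is the classical classification of spinor bilinears in terms of $\dim V$, $\operatorname{sgn} q$, and the type $\mathbb F \in \{\mathbb R, \mathbb C, \mathbb H\}$; I would invoke Theorem~\ref{thm:gamma-real} (the non-degenerate form $q(v, \Gamma(-,-))$ for a positive vector $v$ already furnishes one such form) together with the standard references, and extend additively to the $\mathcal N = k$ case using $S \cong W \otimes S_0$.

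For the metric $g$ I should also check it is non-degenerate (equivalently a genuine pseudo-Riemannian metric on $B$): this is immediate since $q$ is non-degenerate on $V$ and the associated-bundle functor preserves non-degeneracy fibrewise, and $\Spin(V,q)$ acts through $\SO(V,q)$, which preserves $q$. For the form on $\mathcal F$ I would point out that non-degeneracy follows because the invariant bilinear pairing on $S$ provided by the classification (or by $q(v,\Gamma(-,-))$) is itself non-degenerate, so again the fibrewise pairing is non-degenerate. The even/odd rank bookkeeping is routine: $B$ has typical fibre $V$ of dimension $\dim V | 0$ and $F$ has typical fibre $\Pi S$ of dimension $0 | \dim S$, so the forms land in the correct parity-graded pieces.

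The final claim about $\Gamma$ is essentially a restatement of the second bullet of Theorem~\ref{thm:gamma-real} lifted along the associated-bundle construction. Since $\Gamma = C_F$ is by definition the Frobenius map, and we have identified it fibrewise with the representation-theoretic $\Gamma: S \otimes S \to V$ (or its additive extension in the $\mathcal N > 1$ case), it suffices to observe that for a local section $p$ of $Q$ and a section $v$ of $\mathcal B$ that is non-isotropic — so that under the trivialisation $\alpha(p)$ it corresponds pointwise to a vector with $q(v,v) \neq 0$, and after possibly a sign, $q(v,v) > 0$ — Theorem~\ref{thm:gamma-real} tells us $q(v, \Gamma(-,-))$ is non-degenerate on $S$. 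Transporting back via $p$ gives that $g(\Gamma(-,-), v)$ is non-degenerate on $\mathcal F$. For $\mathcal N > 1$ one writes $\Gamma = \sum_i \Gamma^{(i)}$ and notes the sum of non-degenerate forms on the summands $S^{(i)}$ is non-degenerate on $S$; Theorem~\ref{thm:gamma-real-unique} guarantees each $\Gamma^{(i)}$ is as in Theorem~\ref{thm:gamma-real}.

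The main obstacle is really a matter of careful sign and parity bookkeeping rather than deep content: tracking how parity reversal $\Pi$ converts a $\Spin$-invariant symmetric (resp. skew) form on $S$ into a graded-skew (resp. graded-symmetric) $\mathcal O_M$-bilinear form on $\mathcal F$, and making sure the dependence on $\dim V$, the signature of $q$, and $\mathcal N$ in the statement matches the classification of invariant spinor bilinears. I would handle this by stating the relevant form of the classification explicitly (citing~\cite{deligne1999quantum}) and then checking the one subtle point — that the $\mathbb Z_2$-grading shift is consistent across the associated-bundle functor — on a local frame, where it reduces to the elementary fact that $\langle \xi \otimes \eta \rangle = \pm \langle \eta \otimes \xi \rangle$ picks up an extra sign when $\xi, \eta$ are odd coordinate functions.
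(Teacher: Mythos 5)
Your proposal is correct and follows essentially the same route as the paper's (much terser) proof: the metric on $\mathcal B$ comes from the $\SO(V,q)$-structure, the form on $\mathcal F$ and the non-degeneracy of $\Gamma$ come from the classification of invariant spinor bilinears and Theorem~\ref{thm:gamma-real}, all descending to the fibres via the associated-bundle construction. Your version is simply a fleshed-out account of the same argument, including the parity bookkeeping the paper leaves implicit.
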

\begin{proof}
The partial pseudo-Riemannian metric on $\mathcal B\subset\mathcal T _M$ is clear in view of the $\SO(V,q)$-structure on the bosonic distribution. The other structures follow from properties of real spinorial representations and descend to the fibres by the associated bundle construction.
\end{proof}
If $S$ is of quaternionic type, there exist both symmetric and skew symmetric forms. If $S$ is of real or complex type, the non-degenerate bilinear form on $\mathcal F$ is graded symmetric if $\lfloor\dim(V)/2\rfloor =0,1\mod 4$, and graded skew else (keep in mind that $\mathcal F$ is oddly generated). In particular, if $d=3,4$, there exists an invariant graded symmetric bilinear form $g_\mathcal F$ on $\mathcal F$, such that $g\coloneqq g_\mathcal B\oplus g_\mathcal F$ defines a super Riemannian (or super Lorentzian) metric on $M$.

The representation theory of the $\Spin(V,q)$ group provides existence and uniqueness results for certain dual $\Gamma$ maps:
\begin{lemma}\label{lemma:gamma-superspace}
    Let $M$ be a spinorial superspace. There exists a unique morphism of vector bundles $\tilde\Gamma:\wedge^2_{\mathcal O _M} \mathcal F^\vee \to \mathcal B$ such that the duals $\gamma:\mathcal B\to \Hom_{\mathcal O _M} (\mathcal F^\vee,\mathcal F)$ and $\tilde\gamma:\mathcal B\to \Hom_{\mathcal O _M} (\mathcal F, \mathcal F^\vee)$ obey the Clifford relations
    \begin{align}
        \gamma(V)\circ\tilde\gamma(W)
        + (-1)^{p(V)p(W)}
        \gamma(W)\circ\tilde\gamma(V)
        &= 2 g(V,W)\cdot\id_\mathcal F,\\
        \tilde\gamma(V)\circ\gamma(W)
        + (-1)^{p(V)p(W)}
        \tilde\gamma(W)\circ\gamma(V)
        &= 2g(V,W)\cdot\id_{\mathcal F^\vee}.
    \end{align}
\end{lemma}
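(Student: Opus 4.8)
The plan is to reduce the statement to the fibrewise representation-theoretic fact already packaged in Theorem~\ref{thm:gamma-real-unique}, and then transport the resulting maps along the associated-bundle construction. First I would recall that on a spinorial superspace the spinorial distribution is modelled on $\Pi S$ with $S$ a real $\Spin(V,q)$-representation, the bosonic distribution on $V$, and that by the previous lemma the Frobenius map $\Gamma = C_F$ is exactly the morphism induced by the map of representations $\Gamma\colon S\otimes S\to V$ from Theorem~\ref{thm:gamma-real-unique}. In particular each irreducible summand $\Gamma^{(i)}$ is non-zero (this is where non-degeneracy of $C_F$, guaranteed for a spinorial superspace, enters), so the hypothesis of the second bullet of Theorem~\ref{thm:gamma-real-unique} is met. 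That theorem then hands us a unique $\widetilde\Gamma\colon S^\vee\otimes S^\vee\to V$ whose duals $\gamma\colon V\to\Hom(S,S^\vee)$ and $\widetilde\gamma\colon V\to\Hom(S^\vee,S)$ satisfy the Clifford relations on the level of $\Spin(V,q)$-representations.

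Next I would globalise. Since $F\cong Q\times_{\Spin(V,q)}\Pi S$ and $B\cong Q\times_{\SO(V,q)}V$ with $Q = Q^F$ the common reduced frame bundle (via $\hat\alpha\colon Q^F\to Q^B$), the $\Spin(V,q)$-equivariant map $\widetilde\Gamma$ on typical fibres determines, by the associated bundle functor, a morphism of vector bundles $\widetilde\Gamma\colon\wedge^2_{\mathcal O_M}\mathcal F^\vee\to\mathcal B$; the skew-symmetry on the oddly-generated $\mathcal F^\vee$ corresponds to symmetry of the underlying bilinear map $S^\vee\otimes S^\vee\to V$, which is what the theorem produces. Likewise $\gamma$ and $\widetilde\gamma$ are obtained as the associated bundle maps of the fibrewise $\gamma,\widetilde\gamma$, landing in $\Hom_{\mathcal O_M}(\mathcal F^\vee,\mathcal F)$ and $\Hom_{\mathcal O_M}(\mathcal F,\mathcal F^\vee)$ respectively. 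The musical identification of $\mathcal B$ with $\mathcal B^\vee$ used in the definitions of $\gamma,\widetilde\gamma$ is the one induced fibrewise by $g$ (equivalently by $q$ on $V$), so there is no ambiguity.

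The one genuine point to check — and the step I expect to be the main obstacle — is that the Clifford relations \emph{as stated on the superspace}, which carry the Koszul sign $(-1)^{p(V)p(W)}$ and the grading $\id_{\mathcal F}$, really are the image under the associated bundle construction of the ungraded Clifford relations of Theorem~\ref{thm:gamma-real-unique}. Concretely, $V,W$ are now sections of $\mathcal B$ over a (possibly odd) $A$-point, hence may have nontrivial parity, and one must verify that writing a local section $p$ of $Q$ and expanding $V = \alpha(p)(v)$, $W = \alpha(p)(w)$ with $v,w\in V\times U$, the $\mathcal O_M$-bilinear extension of the fibrewise relation picks up exactly the sign $(-1)^{p(V)p(W)}$ and no others — i.e. that all signs are bookkeeping of the graded tensor product and the parity-shift $\Pi$, with $\Spin(V,q)$-equivariance ensuring independence of the choice of $p$. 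I would carry this out by the same local-section/gauge-transformation argument used earlier for $f_F$ and for $C_F\in\Omega^2(Q,V)^{\Spin(V,q)}_{\hor}$: fix $p$, pull everything back to the typical fibres, invoke Theorem~\ref{thm:gamma-real-unique}, and note that a change of section by $g\in C^\infty(U,\Spin(V,q))$ acts compatibly on all four objects $\gamma,\widetilde\gamma,g,\id$, so the identity descends. Uniqueness of $\widetilde\Gamma$ on $M$ then follows from uniqueness on the typical fibre together with the faithfulness of $\Gamma(M,-)$ from the Serre–Swan equivalence.
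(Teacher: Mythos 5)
Your proposal is correct and follows essentially the same route as the paper: the paper's proof likewise observes that the $\Gamma$ maps are non-degenerate by the definition of a spinorial superspace, lifts everything to the bundle of frames via the soldering form, and invokes theorem~\ref{thm:gamma-real-unique} fibrewise, with the associated-bundle construction handling descent. Your additional care about the Koszul signs and the Serre--Swan argument for uniqueness fills in details the paper leaves implicit but does not change the approach.
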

\begin{proof}
    By definition, the $\Gamma$ maps are non-degenerate. Lifting the discussion to the bundle of frames by means of the soldering form, the desired properties now follow directly from theorem~\ref{thm:gamma-real-unique}.
\end{proof}
We can generalise the notion of spinorial superspaces slightly: Recall that the irreducible spinorial representations are have a type $\mathbb F\in\{\mathbb R,\mathbb C,\mathbb H\}$ which depends on the signature of $q$. The division algebra $\mathbb F$ is isomorphic to the space of $\Spin(V,q)$-invariant automorphisms of $S$. Thus, depending on the type of $S$, there exists a compact group of $\Spin(V,q)$-invariant automorphisms of $S$, which we will denote by $K^\mathbb F=\{x\in\mathbb F : |x|=1\}$. Thus, the extended automorphism group of $S$ is
\begin{equation}
\Spin^\mathbb F(V,q)\coloneqq\left( \Spin(V,q)\times K^\mathbb F\right) / \{\pm 1\}.
\end{equation}
Note that there is a map $\alpha:\Spin^\mathbb F (V,q)\to \SO(V,q)$, constructed by modding out $K$ and subsequently taking the double cover map $\Spin(V,q)\to \SO(V,q)$ (this map is well-defined since it is independent of the choice of representative in $\Spin(V,q)$). This map fits in the short exact sequence
\begin{equation}
\begin{tikzcd}
1\arrow{r}&K\arrow{r}&\Spin^\mathbb F (V,q)\arrow{r}{\alpha}&\SO(V,q)\arrow{r}&1.
\end{tikzcd}
\end{equation}
\begin{definition}
A \emph{$\Spin^\mathbb F(V,q)$ superspace} is a superspace $M$, infinitesimally modelled by $V\times\Pi S$, together with a reduction of the structure group
\begin{equation}
\hat Q\to Q^ B \times_M \hat Q \to \Fr(TM)
\end{equation}
factoring through
\begin{equation}
\Spin^\mathbb F (V,q)\xrightarrow{\alpha\times\id} \SO(V,q) \times \Spin^\mathbb F (V,q)\to \GL(m|n).
\end{equation}
\end{definition}
In particular, the $\mathcal O_M$-bilinear map $\pr_B ([-,-])$ respects the reduction, such that it can again be viewed as the $\Gamma$ map
\begin{equation}
\Gamma\in\Omega^{0|2}(\hat Q,V)^{\Spin^\mathbb F (V,q)}_{\hor}.
\end{equation}
The generalisation to $\Spin^\mathbb F (V,q)$ structures is a reasonable mathematical generalisation, based on the observation that $\Spin^\mathbb F(V,q)$ is the full structure group of the fibres. Furthermore, some manifolds that do not admit $\Spin(V,q)$ structures do in fact admit $\Spin^\mathbb F (V,q)$ structures, see, e.g., reference~\cite{albanese2021spinh}.
Note that whenever $S$ is of real type, the notions of spinorial superspaces and $\Spin^\mathbb F(V,q)$ superspaces coincide.
\footnote{Note further that the latter two of the corresponding groups $K\in\{\Z_2, \operatorname{U}(1), \SU(2)\}$ appear as structure groups of electroweak force; a possible relation is however beyond the scope of this work.}
\begin{example} Every spinorial superspace can be turned into a $\Spin^\mathbb F(V,q)$ superspace, by setting
\begin{equation}
\hat Q = \left(Q \times_M K^\mathbb F\right) / \{\pm 1\}\to M.
\end{equation}
Note that this definition suggests a canonical trivialisation on the second factor. However, treating $K^\mathbb F$ as a true symmetry of the resulting theories, physics should not predict a choice of trivialisation. Thus, it is better to think of the second factor as a trivialisable (but not trivialised!) principal $K^\mathbb F$-bundle $M\times K^\mathbb F \coloneqq Q^\mathbb F\to M$, and of the $\Spin^\mathbb F(V,q)$-structure as 
\begin{equation}
\hat Q = \left(Q \times_{M} Q^\mathbb F \right) / \{\pm 1\} \to M
\end{equation}
The right action is given by $R_{[x,y]}([q,r]) = [R_x(q),\bar{y}r]$. It is standard to check that the $\Spin^\mathbb F(V,q)$ action is free on $\hat Q$, and transitive on the fibres. 
\end{example}
The compact group $K^\mathbb F$ is an internal symmetry of the spinorial representation, and is usually treated as an occurrence of \emph{$R$-symmetry}. For a detailed discussion of the physical interpretation of $R$-symmetries, see~\cite{figueroa2001busstepp}.\par
Moving forward, we shall endow any spinorial superspace with the $\Spin^\mathbb F (V,q)$ structure described above, but forget about the trivialisation of the $R$-symmetry arising from the second factor as physics predicts no canonical choice.

\begin{theorem}
Let $Q\to M$ be a spinorial superspace, together with a $\Spin^\mathbb F(V,q)$-structure $\hat Q = \left(Q\times_{M} Q^\mathbb F\right)/\{\pm 1\}$. There exists a canonical isomorphism
\begin{equation}
\mathcal A _{\hat Q} \cong \Omega^{2|0}(\hat Q,V)^{\Spin^\mathbb F (V,q)}_{\hor}\times\mathcal A _{Q^\mathbb F},
\end{equation}
where the first factor arises as the torsion component corresponding to $V$ along the bosonic distribution.\label{thm:superspace-connection}
\end{theorem}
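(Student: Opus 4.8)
The plan is to separate the two factors of the structure group and then reduce what remains to a super analogue of the classical fact that a metric connection is determined by its torsion. First I would split off the $R$-symmetry. Because $\{\pm1\}$ is discrete, $\mathfrak{spin}^{\mathbb F}(V,q)=\mathfrak{spin}(V,q)\oplus\mathfrak k$ as Lie algebras, with $\mathfrak k=\Lie K^{\mathbb F}$, so every connection form on $\hat Q$ decomposes canonically as $\varphi=\varphi^{\mathfrak{spin}}\oplus\varphi^{\mathfrak k}$. The summand $\varphi^{\mathfrak k}$ is $\Spin(V,q)$-basic, hence descends to a principal connection on $Q^{\mathbb F}=\hat Q/\Spin(V,q)$; conversely, any connection on the double cover $Q\times_M Q^{\mathbb F}$ is automatically invariant under the deck action of $\pm1$, since $\Ad_{\pm1}=\id$ on both $\mathfrak{spin}(V,q)$ and $\mathfrak k$, so it descends to $\hat Q$. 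This yields a canonical bijection $\mathcal A_{\hat Q}\cong\mathcal A_Q\times\mathcal A_{Q^{\mathbb F}}$, and since $K^{\mathbb F}$ acts trivially on $V$, the $V$-valued torsion sees only the $\mathcal A_Q$-factor. The theorem thus reduces to producing a canonical bijection $\mathcal A_Q\cong\Omega^{2|0}(\hat Q,V)^{\Spin^{\mathbb F}(V,q)}_{\hor}$ implemented by $\varphi\mapsto\pr_V\Theta^{\varphi}$ restricted to $\wedge^{2}\mathcal B$.

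Next I would analyse the torsion. Via $\hat\alpha\colon Q\to Q^B$ — which is an isomorphism on Lie algebras, so $\hat\alpha^{*}$ identifies $\mathcal A_Q$ with the space of $g_{\mathcal B}$-metric connections on $B$ — a connection $\varphi\in\mathcal A_Q$ amounts to a covariant derivative $\nabla$ on $TM=B\oplus F$ that preserves the splitting, is $g_{\mathcal B}$-metric on $B$, and restricts on $F$ to the spin lift of its restriction to $B$. Decomposing $\Theta^{\varphi}$ by source ($\wedge^{2}\mathcal B$, $\mathcal B\otimes\mathcal F$, $\Sym^{2}\mathcal F$) and by target ($\mathcal B$, $\mathcal F$), two components are forced independently of $\varphi$: by the lemma expressing the Frobenius obstruction on lifts of $\mathcal F$ as $\iota_{[X,Y]}\theta^{B}$ together with the defining property of a spinorial superspace, the $\mathcal B$-valued component on $\Sym^{2}\mathcal F$ is the map $\Gamma$; and by the vanishing of $f_{\mathcal B}$ proved in Lemma~\ref{lemma:b-integrable}, the $\mathcal F$-valued component on $\wedge^{2}\mathcal B$ vanishes. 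The remaining $\mathcal F$-valued components should carry no independent data: the spin-lift relation ties parallel transport of $F$ to that of $B$, while the nondegeneracy of $\Gamma$ (Lemma~\ref{lemma:gamma-superspace}) together with $\mathcal B=\Gamma(\wedge^{2}\mathcal F)$ ties parallel transport of $B$ in $\mathcal F$-directions back to that of $F$; imposing the canonical value of the $\mathcal F$-valued torsion then leaves the $V$-valued torsion along $\wedge^{2}\mathcal B$ — an element of $\Omega^{2|0}(\hat Q,V)^{\Spin^{\mathbb F}(V,q)}_{\hor}$, equivalently a section of $\wedge^{2}\mathcal B^{\vee}\otimes\mathcal B$ — as the only free datum. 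Granting this, $\varphi\mapsto\pr_V\Theta^{\varphi}|_{\wedge^{2}\mathcal B}$ is affine and its linear part is the fibrewise $\Spin(V,q)$-equivariant Koszul operator $b\colon\mathcal B^{\vee}\otimes\mathfrak{spin}(V,q)\to\wedge^{2}\mathcal B^{\vee}\otimes\mathcal B$ obtained after identifying $\mathfrak{spin}(V,q)\cong\wedge^{2}V$ via $q$.

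Then I would invoke that $b$ is an isomorphism, which is the algebraic content of the uniqueness of the Levi-Civita connection: one decomposes $V^{\vee}\otimes\wedge^{2}V^{\vee}$ and $\wedge^{2}V^{\vee}\otimes V^{\vee}$ into irreducible $\SO(V,q)$-modules and checks that $b$ restricts to a nonzero scalar on each corresponding summand — equivalently, one writes down its inverse through the Koszul formula — and this holds for every nondegenerate $q$; over the supermanifold the same identity of $\mathcal O_M$-modules holds with the graded signs, worked out along the (integrable, by Lemma~\ref{lemma:b-integrable}) distribution $\mathcal B$. Injectivity and surjectivity of $b$, together with the slaving of the remaining torsion components, make $\varphi\mapsto\pr_V\Theta^{\varphi}|_{\wedge^{2}\mathcal B}$ a bijection; checking that all identifications are independent of the auxiliary local sections of $Q$ and $Q^{\mathbb F}$ promotes it to the asserted canonical isomorphism, and combining with the first step concludes.

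I expect the main obstacle to be precisely the slaving step: showing that the behaviour of the connection in the odd ($\mathcal F$-)directions, and the associated $\mathcal F$-valued torsion components on $\mathcal B\otimes\mathcal F$ and $\Sym^{2}\mathcal F$, introduce no parameters beyond the $V$-valued torsion on $\wedge^{2}\mathcal B$ and the $R$-symmetry connection. This is where the full spinorial structure has to be used — it comes down to the injectivity of the spinorial ``$b$-type'' intertwiners built from the spin representation and from the nondegenerate map $\Gamma$ — and carrying it out with the correct super signs is the genuinely new point relative to the ordinary Levi-Civita argument.
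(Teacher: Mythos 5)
Your strategy coincides with the paper's own proof in its two essential moves: both split off the $R$-symmetry factor by descent along the discrete cover $Q\times_M Q^{\mathbb F}\to\hat Q$ (using that the covering is discrete, so the Lie algebras and hence the connection spaces agree), and both reduce the remaining claim to the fibrewise Koszul isomorphism $V^\vee\otimes\mathfrak{so}(V,q)\to\wedge^2(V^\vee)\otimes V$, $a\mapsto\rho_*(a)\wedge\theta_B$, realised as the linear part of the affine map sending a connection to its $V$-valued torsion. The paper implements this by noting that $\mathcal A_{Q^B}$ is affine over $\Omega^1(Q^B,\mathfrak{so}(V,q))^{\SO(V,q)}_{\hor}$, exhibiting the torsion map inside a four-term exact sequence induced fibrewise by the Koszul map (whose kernel and cokernel vanish), and then pulling back to $Q$ and restricting to bosonic forms. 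Your extra bookkeeping of the forced torsion components ($\Gamma$ on $\Sym^2\mathcal F$ by the definition of a spinorial superspace, zero on $\wedge^2\mathcal B\to\mathcal F$ by Lemma~\ref{lemma:b-integrable}) does not appear in the paper but is consistent with it.

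The step you isolate as the main obstacle --- that the $\mathcal F$-directions of the connection contribute no free parameters --- is real, and it is precisely the step the paper's proof does not address either. Concretely, $\mathcal A_{\hat Q}$ is affine over the full $\Omega^1(M,\ad\hat Q)\cong(\mathcal B^\vee\oplus\mathcal F^\vee)\otimes(\spin(V,q)\oplus\mathfrak k)$, whereas the Koszul isomorphism only matches the summand $\mathcal B^\vee\otimes\spin(V,q)$ with $\wedge^2\mathcal B^\vee\otimes\mathcal B\cong\Omega^{2|0}(\hat Q,V)^{\Spin^{\mathbb F}(V,q)}_{\hor}$; the summand $\mathcal F^\vee\otimes\spin(V,q)$ is invisible to the $2|0$ torsion component (it enters the $1|1$ component through the term $\rho_*(\varphi(Y))\theta_B(X)$ with $Y$ spinorial, not the $2|0$ one). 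The paper silently discards this summand by identifying $\Omega^1(Q^B,\mathfrak{so}(V,q))^{\SO(V,q)}_{\hor}$ with the associated bundle of $V^\vee\otimes\mathfrak{so}(V,q)$ alone. So your ``slaving'' step cannot simply be granted: it is exactly what is needed to make the statement true, and it should be carried out by normalising the odd part of the connection through the $1|1$ torsion component (where $\rho_*$ acts injectively), or equivalently by restricting the affine space of connections accordingly. As written, both your proposal and the paper leave this gap open; you at least name it.
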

\begin{proof}
Connections on $Q\times_M Q^\mathbb F$ are in bijective correspondence with connections on $\hat Q$ as the former discretely covers the latter. In particular, their adjoint bundles are canonically isomorphic. We now construct the isomorphism
\begin{equation}
\mathcal{A}_{Q} \xrightarrow{\sim}
 \Omega^{2|0}(\hat Q,V)^{\Spin^\mathbb F (V,q)}\cong 
 \Omega^{2|0}(Q,V)^{\SO(V,q)}.
\end{equation}
To that end, note that we have a morphism of principal bundles $Q\to Q^ B$, along the double covering $A:\Spin(V,q)\to \SO(V,q)$. Since the covering is discrete, $A_* : \spin(V,q)\to \mathfrak{so}(V,q)$ is an isomorphism of Lie algebras. Thus, connections on $Q^B$ isomorphically lift to connections on $Q$.
Note that $\mathcal A_{Q^B}$ is affine over $\Omega^1(Q^ B,\mathfrak{so}(V,q))^{\SO(V,q)}_{\hor}$. Recall that the torsion of a connection can be interpreted via the exact sequence of $\mathcal O_M$-modules
    \begin{equation}
        \begin{tikzcd}
            0\arrow{r}&\hat K\arrow{r}&\Omega^{1}(Q^ B,\mathfrak{so}(V,q))^{\SO(V,q)}_{\hor}\arrow{r}{\hat \alpha}&\Omega^2(Q^B,V)^{\SO(V,q)}_{\hor}\arrow{r}&\hat C\arrow{r}&0,
        \end{tikzcd}
    \end{equation}
    where $\hat\alpha (a) = \rho_*(a)\wedge \theta_B$, and $\rho_*:\mathfrak{so}(V,q)\to\End(V)$ is the induced map of Lie algebras from the $\SO(V,q)$-structure $Q^B$. $\hat K$ and $\hat C$ denote kernel and cokernel of $\hat\alpha$, respectively. This sequence arises from vector bundles associated to $Q^B$ via
    \begin{equation}
        \begin{tikzcd}
            0\arrow{r}&K\arrow{r}&V^\vee\otimes\mathfrak{so}(V,q) \arrow{r}{\alpha} &\wedge^2(V^\vee)\otimes V\arrow{r}&C\arrow{r}&0,
        \end{tikzcd}
    \end{equation}
    where $\alpha$ denotes the embedding
    \begin{equation}
        \id_{V^\vee}\otimes\rho_*:V^\vee\otimes\mathfrak{so}(V,q)\to V^\vee\otimes\End(V) = V^\vee\otimes V^\vee\otimes V
    \end{equation}
    with subsequent antisymmetrisation on the first two factors. Clearly, $\alpha$ is an isomorphism (and in particular, $\hat K = \hat C = \{0\}$). Thus, so is $\hat \alpha$; and since $\mathcal A _{Q^B}$ is affine over $\Omega^1(Q^B,\mathfrak{so}(V,q))^{\SO(V,q)}_{\hor}$, the map
    \begin{equation}
        \mathcal A _{Q^B}\to \Omega^2(Q^B,V)^{\SO(V,q)}_{\hor},\quad \tilde\varphi\mapsto\diff^{\tilde\varphi}\theta_\mathcal B
    \end{equation}
    is an isomorphism.\par
    Pulling back to $Q$ and subsequently restricting to bosonic forms yields the isomorphism
    \begin{equation}
        \Psi:\mathcal A _Q\to \Omega^{2|0}(Q,V)^{\SO(V,q)}_{\hor},
        \quad \varphi\mapsto
        (\widetilde{\pr}_{\mathcal B})^*\diff^\varphi\theta_\mathcal B.
    \end{equation}
\end{proof}
\begin{definition}
We say that a connection $\varphi^s\in\mathcal A _Q$ is a \emph{spinorial connection} if the image of its torsion in $\Omega^{2|0}(Q,V)^{\Spin(V,q)}_{\hor}$ vanishes, that is, $\widetilde{\pr}_\mathcal B^*\diff^{\varphi^s} \theta^\mathcal B=0$. The space of spinorial connections will be denoted by $\mathcal A^s_Q$.
\end{definition}
Note that the space of spinorial connections $\mathcal A^s_Q$ contains those connections that mimic the Levi-Civita connection along the bosonic distribution. A precise relation will be presented in section~\ref{section:restriction}.
\begin{corollary}
Let $Q\to M$ be a spinorial superspace, together with a $\Spin^\mathbb F(V,q)$-structure $\hat Q =\left( Q\times_{M} Q^\mathbb F\right) / \{\pm 1\}$. It holds
\begin{equation}
\mathcal A^s_Q \cong \mathcal A_{Q^\mathbb F}
\end{equation}
canonically.
\end{corollary}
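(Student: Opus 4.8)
The plan is to combine Theorem~\ref{thm:superspace-connection} with the definition of a spinorial connection. By that theorem we have a canonical isomorphism
\begin{equation}
\mathcal A_{\hat Q} \cong \Omega^{2|0}(\hat Q, V)^{\Spin^\mathbb F(V,q)}_{\hor} \times \mathcal A_{Q^\mathbb F},
\end{equation}
and the first factor is precisely the image of the torsion along the bosonic distribution, i.e.\ the value of the map $\Psi$ (extended to $\hat Q$ in the obvious way, using that $Q\times_M Q^\mathbb F \to \hat Q$ is a discrete cover so that connections and their torsions correspond). The definition of $\mathcal A^s_Q$ says exactly that $\varphi^s$ is spinorial iff its image under $\Psi$ vanishes, i.e.\ iff the first factor in the product decomposition is zero. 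So the strategy is simply: restrict the isomorphism of Theorem~\ref{thm:superspace-connection} to the preimage of $\{0\}\times\mathcal A_{Q^\mathbb F}$.

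Concretely, first I would note that connections on $\hat Q$ correspond bijectively and canonically to connections on $Q\times_M Q^\mathbb F$ (discrete cover, identical adjoint bundles), and that the latter split canonically as $\mathcal A_Q \times \mathcal A_{Q^\mathbb F}$ since $Q\times_M Q^\mathbb F$ is a fibre product of principal bundles over $M$ with structure group $\Spin(V,q)\times K^\mathbb F$ acting block-diagonally. Under this splitting and the isomorphism $\Psi:\mathcal A_Q \xrightarrow{\sim} \Omega^{2|0}(Q,V)^{\SO(V,q)}_{\hor}$ from the proof of Theorem~\ref{thm:superspace-connection}, the full isomorphism reads $(\varphi,\psi)\mapsto(\Psi(\varphi),\psi)$. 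Second, I would observe that $\varphi\in\mathcal A^s_Q$ is by definition equivalent to $\Psi(\varphi)=0$, hence equivalent to $\varphi=\Psi^{-1}(0)$; that is, $\mathcal A^s_Q$ is a single point (an affine subspace of dimension zero over the zero module), namely the unique connection whose bosonic torsion component vanishes. Third, passing this through the product decomposition, the subset of $\mathcal A_{\hat Q}$ consisting of connections whose $Q$-component is spinorial is $\{\Psi^{-1}(0)\}\times\mathcal A_{Q^\mathbb F}$, which is canonically identified with $\mathcal A_{Q^\mathbb F}$ via projection to the second factor. Restricting attention to the $Q$-part, one gets $\mathcal A^s_Q\cong\{\Psi^{-1}(0)\}$; but the content of the corollary is really the $R$-symmetry refinement: a spinorial connection on $\hat Q$ is the same as a choice of connection on the $R$-symmetry bundle $Q^\mathbb F$, the bosonic-torsion datum being forced to zero.

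I would phrase this cleanly by saying: a connection on $\hat Q$ is \emph{spinorial} if the $\Psi$-image of the $Q$-component of its torsion vanishes; under the isomorphism of Theorem~\ref{thm:superspace-connection} these are exactly the elements $(0,\psi)$ with $\psi\in\mathcal A_{Q^\mathbb F}$; and the assignment $(0,\psi)\mapsto\psi$ is the asserted canonical isomorphism. One should double-check the bookkeeping that the torsion component controlled by $\Psi$ depends only on the $\Spin$-part of the connection and not on the $K^\mathbb F$-part — this is immediate since $K^\mathbb F$ acts trivially on $V$, so the $\mathfrak{k}^\mathbb F$-valued part of any connection does not contribute to $\diff^\varphi\theta_\mathcal B$ — which legitimises treating the two factors independently.

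The main obstacle is not any hard computation but the conceptual care in identifying $\mathcal A_{\hat Q}$ with $\mathcal A_Q\times\mathcal A_{Q^\mathbb F}$ compatibly with the torsion decomposition: one must make sure that the canonical isomorphism $\mathcal A_{\hat Q}\cong\mathcal A_{Q\times_M Q^\mathbb F}$ respects the splitting into the two structure-group factors, and that ``spinorial'' is a condition purely on the first factor. Once that is pinned down, the corollary is a one-line restriction of Theorem~\ref{thm:superspace-connection}. I expect the write-up to be short, essentially just citing the theorem and the definition and noting that setting the first factor to zero leaves precisely $\mathcal A_{Q^\mathbb F}$.
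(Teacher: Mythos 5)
Your proposal is correct and follows exactly the paper's own argument: the proof there is a one-line restriction of the isomorphism of Theorem~\ref{thm:superspace-connection} to the subspace whose torsion component in $\Omega^{2|0}(Q,V)^{\Spin(V,q)}_{\hor}$ vanishes, leaving precisely the factor $\mathcal A_{Q^\mathbb F}$. Your additional observation --- that read literally on $Q$ alone the condition $\Psi(\varphi^s)=0$ cuts out a single point, so the corollary must be understood as a statement about connections on $\hat Q$ (equivalently pairs in $\mathcal A_Q\times\mathcal A_{Q^\mathbb F}$) whose bosonic torsion vanishes, with the $K^\mathbb F$-part not contributing since $K^\mathbb F$ acts trivially on $V$ --- correctly identifies and resolves a notational looseness that the paper glosses over.
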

\begin{proof}
This is just a special case of theorem~\ref{thm:superspace-connection}, restricted to the subspace of $\mathcal A _Q$ with vanishing image in $\Omega^{2|0}(Q,V)^{\Spin(V,q)}_{\hor}$.
\end{proof}
\begin{lemma}\label{lemma:superspace-berezinian}
If the spinorial representation underlying the superspace structure of $M$ comes with a symplectic form, there exists a canonical generator of the Berezinian sheaf $\Ber_M$.
\end{lemma}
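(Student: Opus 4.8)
The plan is to build the generator of $\Ber_M$ out of the two partial "volume" data that a spinorial superspace carries: the pseudo-Riemannian metric $g$ on $\mathcal B$ and the symplectic form $\omega$ on $\mathcal F$. Recall that for a supermanifold of dimension $m|n$ the Berezinian sheaf is the line bundle associated to the frame bundle $\Fr(TM)$ via the Berezinian character $\ber:\GL(m|n)\to\GL(1)$. Since we have reduced the structure group through $\Spin^{\mathbb F}(V,q)\xrightarrow{\alpha\times\id}\SO(V,q)\times\Spin^{\mathbb F}(V,q)$, it suffices to exhibit a nowhere-vanishing section of the pullback of $\Ber_M$ to $\hat Q$ (equivalently, to $Q$) that is invariant under this reduced group; equivariance then guarantees that it descends to a global generator of $\Ber_M$.

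First I would make the identification $\Ber_M\cong \wedge^m_{\mathcal O_M}\mathcal B^\vee\otimes_{\mathcal O_M}\big(\wedge^n_{\mathcal O_M}\mathcal F\big)^{?}$ precise — more carefully, using the decomposition $\mathcal T_M=\mathcal B\oplus\mathcal F$ one has a canonical isomorphism of $\mathcal O_M$-modules $\Ber_M\cong \det(\mathcal B^\vee)\otimes_{\mathcal O_M}\det(\mathcal F^\vee)^{\vee}$, where $\det$ of the odd piece means the top exterior power of the oddly-generated module $\mathcal F$ (which is itself a line bundle because $\mathcal F$ has rank $0|n$). This is exactly the statement that the Berezinian character restricted to $\GL(m)\times\GL(n)$ is $\det_{\,m}\otimes\det_{\,n}^{-1}$. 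Then I would produce the two tensor factors separately: from $g$, a metric on the rank-$m$ bundle $\mathcal B$, one gets the canonical metric volume element, a nowhere-vanishing section of $\det(\mathcal B^\vee)$ well-defined up to sign, i.e. an $\SO(V,q)$-invariant element of $\wedge^m_{\mathcal O_M}\mathcal B^\vee$ (here the $\SO$, not $\mathrm O$, structure fixes the sign ambiguity, so it is genuinely canonical); from the symplectic form $\omega\in\wedge^2_{\mathcal O_M}\mathcal F^\vee$ (graded-symmetric since $\mathcal F$ is odd, hence honestly alternating on the underlying vector space) one forms the $n/2$-th exterior power, a Pfaffian-type element $\omega^{\wedge n/2}$ spanning $\det(\mathcal F^\vee)$; since $\omega$ is $\Spin(V,q)$-invariant (this is the invariant symplectic form supplied by representation theory, unique up to scale on each irreducible summand — cf. the discussion following Lemma on canonical structures), so is its top power, and it is manifestly nowhere vanishing because $\omega$ is non-degenerate. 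Tensoring the metric volume element with the inverse of $\omega^{\wedge n/2}$ gives a section of $\Ber_M$ that is invariant under the full reduced structure group $\SO(V,q)\times\Spin^{\mathbb F}(V,q)$ — note the $K^{\mathbb F}$ factor acts trivially on $\det(\mathcal B^\vee)$ and preserves $\omega$ up to a phase that, for the relevant division algebras, is absorbed once one checks the character computation — and hence descends to the asserted canonical generator of $\Ber_M$.

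The main obstacle I anticipate is bookkeeping the sign and phase conventions rather than anything structural: one must check that the $K^{\mathbb F}$-action (and, when $n$ is not divisible by $4$, the various reality/grading subtleties of the symplectic form on an oddly-generated module) really does leave $\omega^{\wedge n/2}$ invariant up to a factor that cancels in the Berezinian, so that the resulting section is honestly $\Spin^{\mathbb F}(V,q)$-invariant and not merely invariant up to a unit. A secondary point to verify is that $n=\dim\mathcal F$ is even whenever a symplectic form on $\mathcal F$ exists, so that $\omega^{\wedge n/2}$ makes sense — but this is automatic from non-degeneracy of $\omega$. Once these compatibility checks are in place, the descent from $Q$ to $M$ is the standard associated-bundle argument, and "canonical" means exactly that no choices entered beyond $g$ and $\omega$, both of which are themselves canonical on a spinorial superspace by the earlier lemmas.
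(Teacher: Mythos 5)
Your proposal is correct and rests on the same mechanism as the paper's proof: the reduced structure group preserves the Berezinian of the model fibre, since it acts through $\SO(V,q)\subset\SL(V)$ on $\mathcal B$ and through the symplectic group (hence through $\SL(S)$) on $\mathcal F$, so the associated line bundle $\Ber_M$ is canonically trivialised. Your version merely makes the invariant generator explicit as the metric volume on $\det(\mathcal B^\vee)$ tensored with the inverse of the Liouville element $\omega^{\wedge n/2}$, which is a harmless (and arguably clarifying) refinement of the paper's one-line argument.
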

\begin{proof}
Note that the $\Spin(V,q)$ action preserves the Berezinian: On $\mathcal F$, the determinant (and hence its inverse -- the Berezinian) is preserved by invariance of the symplectic form, and on $\mathcal B$, the structure group is already reduced to $\SO(V,q)\subset\SL(V)$, which also preserves the determinant.
\end{proof}
This canonical generator of $\Ber_M$ will be denoted by
\begin{equation}
\vol\in\Ber_M.
\end{equation}
If the invariant form of the spinorial representation is symmetric, one has to work projectively. The Berezinian bundle may be twisted by an orientation bundle to obtain the bundle of densities, as discussed in~\cite{deligne1999quantum}. For the examples we have in mind, we shall not require this generalisation.

\section{The Chirality Decomposition}\label{section:chiral-decomposition}
We shall now consider the scenario where under complexification, the spinorial distribution decomposes into chiral and antichiral components, i.e.,
\begin{equation}
\mathcal F^\C = \mathcal F ^{1,0}\oplus \mathcal F ^{0,1},
\end{equation}
where the distributions $\mathcal F^{1,0}$ and $\mathcal F^{0,1}$ correspond to the two holomorphic structures on the complexified spin representation $S^\C = S^{1,0}\oplus S^{0,1}$. We make use of the language presented in appendix~\ref{appendix:cs-manifolds}.\par
By using the approach via the bundle of frames, the nature of this decomposition can be expressed algebraically and occurs if and only if the underlying spin representation is of complex type. In this case, the structure group of $\mathcal F^\C$ is $\Spin^\C(V,q)$ and embeds canonically into $\GL_\C(S^\C)$. Since the complex nature of the CS manifold $(M,\mathcal O_M \otimes \mathbb C)$ only enters along the odd directions (and is therefore to be regarded as a completely algebraic property of the structure sheaf), the almost complex structure can be viewed as a complex structure along $\mathcal F^\C$. This means that $\mathcal F^{1,0}$ and $\mathcal F^{0,1}$ can be interpreted as holomorphic and antiholomorphic distributions, respectively.\par
We have the Dolbeault-like multigrading decomposition of the de Rham complex
\begin{equation}
\Omega^{p|q,r}(M) = \wedge^p(\mathcal B^\C)^\vee\otimes_{\mathcal O_M} \wedge^q(\mathcal F^{1,0})^\vee\otimes_{\mathcal O_M}\wedge^r(\mathcal F^{0,1})^\vee.
\end{equation}
It follows that the differential on functions decomposes as
\begin{gather}
\diff: \mathcal O _M\otimes \C \to \Omega^1(M,\mathbb C) = (\mathcal B^\C)^\vee \oplus (\mathcal F^{1,0})^\vee \oplus (\mathcal F^{0,1})^\vee,\\
\diff = \diff_\mathcal B\oplus \partial \oplus \bar\partial.
\end{gather}
Due to the non-integrability conditions on $\mathcal F$, we do not try to further decompose the exterior differential on forms into holomorphic and antiholomorphic parts. Rather, we consider a
spinorial connection $\varphi^s \in\mathcal A _{\hat Q}$. Since $\varphi^s$ is a $\Spin^\C(V,q)$ connection, and $S^\C = S^{1,0}\oplus S^{0,1}$ as $\Spin^\C(V,q)$ representations,
such a spinorial connection preserves the bosonic resp.~spinorial distributions. The corresponding covariant exterior derivative thus decomposes globally as
\begin{equation}
\diff^{\varphi^s} =
\diff^{\varphi^s}_B\oplus \partial^{\varphi^s} \oplus \bar\partial^{\varphi^s}
\end{equation}
on fibre bundles over $M$ arising from the decomposition $\mathcal T _M^\C = \mathcal B^\C \oplus \mathcal F^{0,1} \oplus \mathcal F^{0,1}$. By this notation, we mean that
\begin{equation}
\partial^{\varphi^s} : \Omega^\bullet(M,F^{1,0}) \to \Omega^{\bullet + 1}(M,F^{1,0}),
\end{equation}
and likewise for the other components.
The suggestive notation is chosen due to the analogy with the Dolbeault operators on complex manifolds.
\begin{definition}
We define the sheaf of \emph{chiral functions}
\begin{equation}
\mathcal O ^\omega_M \coloneqq \{f\in\mathcal O _M \otimes \C : Xf = 0 \textrm{ for all } X\in\Omega^0(M, \mathcal F^{0,1})\},
\end{equation}
and the sheaf of \emph{antichiral functions}
\begin{equation}
\mathcal O ^{\bar\omega}_M \coloneqq \{f\in\mathcal O _M \otimes \C : Xf = 0 \textrm{ for all } X\in\Omega^0(M, \mathcal F^{1,0})\}.
\end{equation}
\end{definition}
It turns out that these sheaves define CS submanifolds of $(\underline M, \mathcal O _M \otimes\C)$:
\begin{lemma}\label{lemma:chiral-integrable}
Upon complexification of the structure sheaf, $M$ gives rise to the two CS submanifolds $M^{1|0} = (\underline M , \mathcal O _M^\omega)$ and $M^{0|1} = (\underline M, \mathcal O _M^{\bar\omega})$.
\end{lemma}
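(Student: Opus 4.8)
The plan is to show that the sheaves $\mathcal O_M^\omega$ and $\mathcal O_M^{\bar\omega}$ are each locally isomorphic to the structure sheaf of a CS manifold of dimension $m|0$ (viewed with the appropriate complex-algebraic structure along whatever odd directions survive — here none, since chirality halves the odd rank). Since the construction is symmetric under exchanging the roles of $\mathcal F^{1,0}$ and $\mathcal F^{0,1}$, I will only treat $M^{1|0} = (\underline M, \mathcal O_M^\omega)$. The core of the argument is that the distribution $\mathcal F^{0,1} \subset \mathcal T_M^\C$ is \emph{integrable}, so that $\mathcal O_M^\omega$ is the sheaf of functions constant along the leaves, which by a Frobenius-type theorem for supermanifolds furnishes local models.

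First I would establish integrability of $\mathcal F^{0,1}$. By the discussion following equation~\eqref{eq:frob-2-form}, the Frobenius obstruction of $\mathcal F$ is governed fibrewise by the $\Spin^\C(V,q)$-equivariant map $\Gamma : S^\C \otimes S^\C \to V^\C$. Under the chirality splitting $S^\C = S^{1,0} \oplus S^{0,1}$, the key representation-theoretic input is that $\Gamma$ vanishes on $S^{0,1} \otimes S^{0,1}$ (equivalently on $S^{1,0}\otimes S^{1,0}$): in the relevant dimensions $V^\C$ sits inside $S^{1,0}\otimes S^{0,1}$ as a summand, and the pure-chirality pieces $\Sym^2 S^{0,1}$ contain no copy of $V^\C$. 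Hence the composite $\wedge^2_{\mathcal O_M}\mathcal F^{0,1} \xrightarrow{[-,-]} \mathcal T_M^\C \xrightarrow{\pr_{\mathcal B^\C}} \mathcal B^\C$ is zero. One must also check that $[\mathcal F^{0,1},\mathcal F^{0,1}]$ has no component in $\mathcal F^{1,0}$; this follows because a spinorial connection $\varphi^s\in\mathcal A_{\hat Q}$ preserves $\mathcal F^{1,0}$ and $\mathcal F^{0,1}$ separately (as noted in section~\ref{section:chiral-decomposition}), so $\bar\partial^{\varphi^s}$ restricts to $\Gamma(\mathcal F^{0,1}, F^{0,1})$-valued forms and the mixed torsion term lands only in $\mathcal B^\C$. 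Therefore $\mathcal F^{0,1}$ is an involutive subsheaf of $\mathcal T_M^\C$, i.e.\ $[\mathcal F^{0,1}, \mathcal F^{0,1}] \subseteq \mathcal F^{0,1}$.

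Next I would invoke the Frobenius integrability theorem in the CS-supermanifold setting (the complex analogue of the statement recalled after the superspace definition): an involutive locally free subsheaf of rank $0|\tfrac{n}{2}$ of $\mathcal T_M^\C$ that is moreover \emph{complex} — meaning it is a subsheaf closed under multiplication by $\mathbb C$ and compatible with the CS structure, which holds here since the whole complexification only touches the odd directions — admits local coordinates $(x^a, \zeta^{\dot\alpha}, \eta^\alpha)$ in which $\mathcal F^{0,1} = \vspan\{\partial/\partial\bar\eta^{\dot\alpha}\}$, or more precisely in which $\mathcal F^{0,1}$ is the vertical sheaf of a local CS-submersion. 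Then $\mathcal O_M^\omega$, being by definition the kernel of all $X\in\Omega^0(M,\mathcal F^{0,1})$, is locally $C^\infty(\hat U)\otimes\wedge(\text{the surviving odd coordinates})^\vee$ tensored with $\mathbb C$ — exactly the structure sheaf of a CS submanifold with the same underlying topological space $\underline M$ (the inclusion $\underline\iota = \id_{\underline M}$ being forced, as in lemma~\ref{lemma:b-integrable}, since the leaves are purely odd and hence topologically trivial). This exhibits $M^{1|0}$ as a CS submanifold of $(\underline M, \mathcal O_M\otimes\mathbb C)$, and symmetrically for $M^{0|1}$.

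The main obstacle I anticipate is the representation-theoretic vanishing $\Gamma|_{\Sym^2 S^{0,1}} = 0$: this is dimension- and signature-dependent and is precisely the content that makes ``chirality'' meaningful, so it needs a careful citation to theorem~\ref{thm:gamma-real-unique} together with the observation that the holomorphic/antiholomorphic summands of $S^\C$ are the $\pm$-eigenspaces of the $\Spin^\C(V,q)$-invariant complex structure and that $\Gamma$, being $K^{\mathbb C} = \U(1)$-invariant, must have definite weight under this circle action — forcing it to pair $S^{1,0}$ with $S^{0,1}$ and annihilate the pure-weight pieces. A secondary subtlety is making the ``Frobenius theorem for CS manifolds'' precise enough to quote; I would either cite the appendix on CS manifolds (appendix~\ref{appendix:cs-manifolds}) or reduce to the ordinary super-Frobenius theorem after noting that involutivity plus the purely-algebraic nature of the complex structure along odd directions lets one integrate $\mathcal F^{0,1}$ leafwise without any analytic input.
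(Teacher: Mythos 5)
Your proof is correct in substance but follows a genuinely different route from the paper's. The paper applies lemma~\ref{lemma:CS-submanifold} by checking integrability of the \emph{complementary} distribution $\mathcal B^\C\oplus\mathcal F^{1,0}$: it shows the $\mathcal F^{0,1}$-components of all brackets of its sections vanish, which requires killing the three equivariant maps $S^{1,0}\otimes S^{1,0}\to S^{0,1}$, $V^\C\otimes V^\C\to S^{0,1}$ and $V^\C\otimes S^{1,0}\to S^{0,1}$ (the first two by the $-1\in\Spin(V,q)$ parity argument, the third by theorem~\ref{thm:gamma-real}). You instead prove involutivity of $\mathcal F^{0,1}$ itself and integrate it by a super-Frobenius theorem, so your key representation-theoretic input is different: the vanishing of $\Gamma$ on $\Sym^2 S^{0,1}$, which you correctly extract from the $K^{\C}=\operatorname{U}(1)$-invariance of $\Gamma$ and the weight decomposition $S^\C=S^{1,0}\oplus S^{0,1}$. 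Your route has the virtue of directly verifying the hypothesis that actually forces $\ker(\mathcal F^{0,1})$ to be locally free of rank $m|n_1$ (functions constant along an involutive purely odd distribution in adapted coordinates), at the cost of needing a quotable super-Frobenius statement rather than the paper's appendix lemma; the paper's route avoids the chirality-specific vanishing $\Gamma|_{\Sym^2 S^{0,1}}=0$ entirely, since a $\mathcal B^\C$-valued contribution of $[\mathcal F^{1,0},\mathcal F^{1,0}]$ would not obstruct integrability of $\mathcal B^\C\oplus\mathcal F^{1,0}$.

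One step of yours is under-justified: to rule out an $\mathcal F^{1,0}$-component of $[\mathcal F^{0,1},\mathcal F^{0,1}]$ you appeal to a spinorial connection preserving the splitting, but that only disposes of the covariant-derivative terms; the torsion tensor still has an a priori component $\wedge^2\mathcal F^{0,1}\to\mathcal F^{1,0}$. This component is an equivariant map $\Sym^2 S^{0,1}\to S^{1,0}$ and vanishes either by the $-1\in\Spin(V,q)$ argument the paper uses for the analogous terms, or by your own $\operatorname{U}(1)$-weight count (weight $-2$ to weight $+1$); adding one of these closes the gap.
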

\begin{proof}
It is clear that both $M^{1|0}$ and $M^{0|1}$ constitute locally ringed spaces. We prove now that $M^{1|0}$ is a CS manifold (the case of $M^{0|1}$ is analogous). By lemma~\ref{lemma:CS-submanifold}, it suffices to check that the distribution $\mathcal B^\mathbb C \oplus \mathcal F^{1,0}$ is integrable. By Frobenius' theorem, this amounts to verifying that the Frobenius pairing vanishes, that is, the $\mathcal F^{0,1}$ components of the commutator are zero. As in the discussion of Equation~\eqref{eq:frob-explicit}, the Frobenius obstructions of the kind $\pr_{0|1} ([-,-])$ can be lifted to the bundle of frames, where they take the form of symmetric morphisms of $\Spin(V,q)$ representations $S^{1,0}\otimes S^{1,0} \to S^{0,1}$, $V^\mathbb C\otimes V^\mathbb C\to S^{0,1}$, and $V^\mathbb C\otimes S^{1,0} \to S^{0,1}$. However, these morphisms are all identically zero:
\begin{itemize}
\item Suppose $S^{1,0}\otimes S^{1,0} \to S^{0,1}$ is non-zero for some $s\otimes t\in S^{1,0}\otimes S^{1,0}$. Acting by $-1\in\Spin(V,q)$, the left hand side remains unaltered, while the right hand side changes sign -- a contradiction.
\item A similar reasoning holds for the contribution from $V^\mathbb C\otimes V^\mathbb C\to S^{0,1}$, as $-1\in\Spin(V,q)$ is in the kernel of the representation on $V^\mathbb C$.
\item By theorem~\ref{thm:gamma-real} and representation theory of the spin group, the only symmetric map of representations dualises to $V^\mathbb C \otimes S^{1,0}\to S^{1,0}$, hence, $V^\mathbb C\otimes S^{1,0} \to S^{0,1}$ is identically zero.
\end{itemize}
Thus, the Frobenius obstruction vanishes, such that lemma~\ref{lemma:CS-submanifold} applies.
\end{proof}
We generalise the notion of (anti)chiral functions to that of \emph{(anti)chiral fields}:
\begin{definition}
Let $E\to M$ be a complex vector bundle associated to $Q$. We say that a section $\chi\in\Omega^0(M,E)$ is \emph{chiral} if $\bar\partial^{\varphi^s}\chi = 0$, and \emph{antichiral} if $\partial^{\varphi^s}\chi = 0$. We will write $\Gamma^\omega(M,E)$ for the sheaf of chiral sections and $\Gamma^{\bar\omega}(M,E)$ for the space of antichiral sections.
\end{definition}
The relation to the sheaf of (anti)chiral functions is as follows:
\begin{lemma}
Let $E\to M$ be a complex vector bundle associated to $Q$. 
\begin{itemize}
\item The sheaf of chiral sections $\Gamma^\omega(M,E)$ is a locally free $\mathcal O ^{\omega}_M$-module.
\item The sheaf of antichiral sections $\Gamma^{\bar\omega}(M,E)$ is a locally free $\mathcal O ^{\bar\omega}_M$-module.
\end{itemize}
\end{lemma}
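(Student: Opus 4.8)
The plan is to reduce both statements to the corresponding facts about locally free $\mathcal O_M\otimes\C$-modules, using the fact (from Lemma~\ref{lemma:chiral-integrable}) that $\mathcal O_M^\omega$ is the structure sheaf of a CS submanifold cut out by the integrable distribution $\mathcal B^\C\oplus\mathcal F^{1,0}$. The two bullets are symmetric, so I would only treat the chiral case and remark that the antichiral case follows by interchanging the roles of $\mathcal F^{1,0}$ and $\mathcal F^{0,1}$ (equivalently, by passing to $M^{0|1}$). First I would fix a point $x\in\underline M$ and a local section $p$ of $Q$ over a neighbourhood $U$, trivialising $E|_U\cong U\times E_0$ for the typical fibre $E_0$; this is possible since $E$ is associated to $Q$. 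Via this trivialisation, a section $\chi\in\Omega^0(U,E)$ is an $E_0$-valued function, and the operator $\bar\partial^{\varphi^s}$ acts as $\bar\partial$ (the $\mathcal F^{0,1}$-directional differential on functions) plus a zeroth-order term coming from the connection form $\varphi^s$ contracted against $\mathcal F^{0,1}$-valued vector fields. The key point is that, after a gauge transformation bringing $\varphi^s$ into a suitable local form adapted to the integrable distribution $\mathcal B^\C\oplus\mathcal F^{1,0}$, the chirality equation $\bar\partial^{\varphi^s}\chi=0$ becomes equivalent to the statement that the components of $\chi$ in a well-chosen frame are killed by every $X\in\Omega^0(U,\mathcal F^{0,1})$ — i.e., lie in $\mathcal O_M^\omega$.

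The central technical step is therefore to show that the connection can locally be gauged so that $\bar\partial^{\varphi^s}$ on $E$-valued sections is intertwined with the plain operator $X\mapsto X(\,\cdot\,)$, $X\in\Omega^0(M,\mathcal F^{0,1})$, on $E_0$-valued functions. Because $\varphi^s$ is a $\Spin^\C(V,q)$-connection and $S^\C=S^{1,0}\oplus S^{0,1}$ as $\Spin^\C(V,q)$-representations, $\varphi^s$ preserves the splitting of $\mathcal F^\C$; combined with the integrability of $\mathcal B^\C\oplus\mathcal F^{1,0}$ established in Lemma~\ref{lemma:chiral-integrable}, the curvature component of $\varphi^s$ of bidegree $(0,2)$ in the $\mathcal F^{0,1}$-directions vanishes (this is the analogue of the integrability of the $\bar\partial$-operator on a holomorphic bundle). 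A Newlander–Nirenberg / Koszul-type argument then produces a local frame of $E|_U$ that is chiral, i.e.\ annihilated by $\bar\partial^{\varphi^s}$. In such a frame, a general section $\chi=\sum_i f_i e_i$ satisfies $\bar\partial^{\varphi^s}\chi=0$ if and only if $\bar\partial f_i=0$ for all $i$, i.e.\ $f_i\in\mathcal O_M^\omega(U)$. This exhibits $\Gamma^\omega(U,E)$ as the free $\mathcal O_M^\omega(U)$-module on the chiral frame $\{e_i\}$, hence $\Gamma^\omega(M,E)$ is locally free over $\mathcal O_M^\omega$; the rank equals $\dim E_0$.

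I expect the main obstacle to be the construction of a local chiral frame, i.e.\ making the "$\bar\partial$-operator with vanishing curvature admits local flat sections" argument rigorous in the $\Z_2$-graded, non-integrable ambient setting. The subtlety is that $\mathcal F^{0,1}$ is not the tangent distribution of an honest submanifold of $M$ (only $\mathcal B^\C\oplus\mathcal F^{1,0}$ is integrable), so one cannot simply invoke the holomorphic Frobenius theorem on a slice; instead one should work on the CS submanifold $M^{0|1}=(\underline M,\mathcal O_M^{\bar\omega})$ — or rather note that $\mathcal O_M\otimes\C$ is free as an $\mathcal O_M^\omega$-module with a basis of $\mathcal F^{0,1}$-monomials (an explicit local computation, since $\mathcal F$ has finite odd rank and the relevant coordinate vector fields along $\mathcal F^{0,1}$ are a finite set of odd derivations with nilpotent action) — and then descend the connection-induced frame. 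Once the freeness of $\mathcal O_M\otimes\C$ over $\mathcal O_M^\omega$ is in hand, the vanishing of the $(0,2)$-curvature gives the integrability of the partial flat connection along $\mathcal F^{0,1}$, and local flat frames follow by solving the resulting triangular system of equations degree by degree in the nilpotent generators. Everything else — associativity of the module structure, compatibility of transition functions, local freeness of the global sheaf — is routine gluing.
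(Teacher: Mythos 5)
Your overall strategy --- reduce local freeness to the existence of a local frame of $E$ consisting of chiral sections, and then observe that the coefficient functions of a general chiral section in such a frame must lie in $\mathcal O^\omega_M$ --- is sound, and it is considerably more ambitious than the paper's own proof, which only verifies the $\mathcal O^\omega_M$-module axiom via the Leibniz rule $\bar\partial^{\varphi^s}(fX)=\bar\partial^{\varphi^s}(f)X+f\bar\partial^{\varphi^s}X$ and asserts local freeness in a single sentence. You have also correctly located where the real content lies: a chiral frame exists locally precisely when the ``partial flat connection'' along $\mathcal F^{0,1}$ is integrable, i.e.\ when the component along $\wedge^2\mathcal F^{0,1}$ of the curvature of the connection induced on $E$ by $\varphi^s$ vanishes (the bracket-closedness of $\mathcal F^{0,1}$ itself does follow from the representation-theoretic arguments of Lemma~\ref{lemma:chiral-integrable}).

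The gap is in your justification of that curvature vanishing. You infer it from ``$\varphi^s$ preserves the splitting'' together with ``the integrability of $\mathcal B^\C\oplus\mathcal F^{1,0}$''. But integrability of a distribution is a statement about the Frobenius tensor --- an $\mathcal O_M$-bilinear pairing built from Lie brackets of vector fields, equivalently a torsion component --- whereas the obstruction to local $\bar\partial^{\varphi^s}$-flat frames is the \emph{curvature} of the principal connection, which is an essentially arbitrary horizontal equivariant $2$-form and is not constrained by the integrability of any distribution the connection preserves. The two are logically independent: a connection can preserve an integrable distribution and still have nonzero curvature along it, and then the triangular system you propose to solve degree by degree in the odd generators is obstructed at second order. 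A rank-one example over $\mathbb R^{0|2}$ with commuting odd derivations $\partial_1,\partial_2$ and connection form $A=c\,\theta^2\,\diff\theta^1$ with $c\neq 0$ has no nonzero ``chiral'' sections at all, so the sheaf of solutions is certainly not free of rank $\dim E_0$. Note that the paper is implicitly aware of this issue elsewhere: the super Yang--Mills field space is \emph{defined} by the constraint $\langle\mathcal F\wedge\mathcal F,F^{\mathsf a}\rangle=0$, which is exactly the flatness along the odd directions that your Newlander--Nirenberg-type step requires. To close the gap you must either prove the vanishing of the $\wedge^2\mathcal F^{0,1}$-curvature for the specific $\varphi^s$ and associated bundle $E$ at hand, or add it as a hypothesis. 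As a minor point, your write-up should also record the Leibniz computation showing that $fX$ is chiral for chiral $f$ and $X$ --- this is the one step the paper does carry out, and it is what makes $\Gamma^\omega(M,E)$ an $\mathcal O^\omega_M$-module in the first place.
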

\begin{proof}
The sheaves are locally free as we are working with locally trivial vector bundles and because the complex structure gives non-degenerate algebraic relations on the sections that smoothly depend on $M$. To prove the first claim, let $X\in\Gamma^\omega(M,E)$ be a chiral section and let $f\in \mathcal O^\omega_M$ be a chiral function. We compute
\begin{equation}
\bar\partial^{\varphi^s}(fX) = \bar\partial^{\varphi^s}(f)X + f\bar\partial^{\varphi^s} X = 0,
\end{equation}
proving that $fX\in\Gamma^\omega(M,E)$. The antichiral case is analogous.
\end{proof}
We shall dentote by $\Ber_M^{1,0}$ and $\Ber_M^{0,1}$ the Berezinian sheaves of the CS manifolds $M^{1,0}$ and $M^{0,1}$. Note that $\mathcal F^{1,0}$ and $\mathcal F^{0,1}$ are infinitesimally modelled by irreducible complex spinorial representations. They thus carry non-degene\-rate invariant bilinear forms, and an analogue of lemma~\ref{lemma:superspace-berezinian} holds:
\begin{corollary}
If the spinorial representation $S^{1,0}$ underlying $\mathcal F^{1,0}$ carries an invariant non-degene\-rate graded symmetric bilinear form, the CS manifold $M^{1,0}$ carries a canonical volume form, denoted $\vol^{1,0}\in\Ber_M^{1,0}$. The analogous statement holds for $M^{0,1}$.
\end{corollary}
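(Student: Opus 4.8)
The proof will run the argument of Lemma~\ref{lemma:superspace-berezinian}, now for the CS manifold $M^{1,0}$ in place of $M$. By Lemma~\ref{lemma:chiral-integrable} its tangent sheaf is $\mathcal T_{M^{1,0}} = \mathcal B^\C \oplus \mathcal F^{1,0}$, and, as recorded in section~\ref{section:chiral-decomposition}, the structure group acts through $\SO(V,q)$ on the purely even part $\mathcal B^\C$ (of rank $m$) and through $\Spinc(V,q)\hookrightarrow\GL_\C(S^{1,0})$ on the purely odd part $\mathcal F^{1,0}$ (of rank $\nu:=\tfrac12\operatorname{rk}_{\mathcal O_M}\mathcal F$). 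Hence $\Ber_M^{1,0}$ factors, up to the standard dualisation conventions, as a tensor product of a line built from the top exterior power of $\mathcal B^\C$ and one built from the top exterior power of $\mathcal F^{1,0}$, and it suffices to produce a canonical nowhere-vanishing section of each factor.

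For the even factor I would copy the argument of Lemma~\ref{lemma:superspace-berezinian} verbatim: $\SO(V,q)\subset\SL(V^\C)$, so any two frames in the $\SO(V,q)$-reduction differ by a map of determinant $1$ and $e^1\wedge\dots\wedge e^m$ is frame-independent; concretely this is the metric volume form of the $\mathcal O_M$-bilinear metric $g$ on $\mathcal B$. For the odd factor the hypothesis is the decisive input: since $\mathcal F^{1,0}$ is oddly generated, an invariant non-degenerate \emph{graded symmetric} form on it amounts to an invariant non-degenerate \emph{skew}, i.e.\ symplectic, form $\omega$ on the typical fibre $S^{1,0}$ — in particular $\nu$ is forced to be even — so the structure group factors through $\Sp(S^{1,0})\subset\SL(S^{1,0})$ and the power $\omega^{\nu/2}$ is a frame-independent generator of the relevant top exterior power of $\mathcal F^{1,0}$. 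This is the ``symplectic'' branch of Lemma~\ref{lemma:superspace-berezinian}, so no passage to a projective quotient is needed and $\vol^{1,0}:=(\text{metric volume})\otimes\omega^{\nu/2}$ is genuinely canonical; running the identical argument with $S^{0,1}$ in place of $S^{1,0}$ produces $\vol^{0,1}\in\Ber_M^{0,1}$.

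The one step that genuinely requires care — and the main obstacle — is that $\mathcal F^{1,0}$ carries the $\Spinc(V,q)$-structure inherited from $\hat Q$, not merely a $\Spin(V,q)$-structure, while the symplectic form $\omega$ on $S^{1,0}$ is invariant only under the image of $\Spin(V,q)$: the central $K^{\mathbb F}=\U(1)$ rescales $\omega$ by a nontrivial character, hence acts nontrivially on the candidate generator of $\Ber_M^{1,0}$. I would resolve this in the spirit of the convention fixed just before Theorem~\ref{thm:superspace-connection}: the $R$-symmetry bundle $Q^{\mathbb F}$ is trivialisable, and along any trivialisation the $\Spinc(V,q)$-structure reduces to its underlying $\Spin(V,q)$-structure $Q$, against which $\vol^{1,0}$ above is a well-defined section; equivalently, $\vol^{1,0}$ is canonical only up to the action of the $R$-symmetry group, which is precisely the ambiguity tolerated elsewhere. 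The remaining checks — that the claimed splitting of the Berezinian is the natural one, that $\omega^{\nu/2}$ is nowhere vanishing, and that all identifications are $\Spin(V,q)$-equivariant so as to descend through the associated-bundle construction — are routine and follow the pattern of Lemmas~\ref{lemma:b-integrable} and~\ref{lemma:superspace-berezinian}.
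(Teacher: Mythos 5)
Your proposal is correct and follows essentially the same route as the paper, which simply invokes the analogue of Lemma~\ref{lemma:superspace-berezinian}: the graded symmetric form on the oddly generated $\mathcal F^{1,0}$ is a symplectic form on the typical fibre $S^{1,0}$, so the structure group lands in $\SO(V,q)\times\Sp(S^{1,0})\subset\SL$, which preserves the Berezinian. Your additional observation that the $K^{\mathbb F}=\U(1)$ factor of the $\Spin^{\mathbb F}(V,q)$-structure rescales the form on $S^{1,0}$ by a nontrivial character --- so that $\vol^{1,0}$ is canonical only relative to the (deliberately forgotten) trivialisation of the $R$-symmetry bundle --- is a genuine subtlety the paper leaves implicit, and your resolution is consistent with the conventions fixed before Theorem~\ref{thm:superspace-connection}.
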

Again, if the form is symmetric, one has to work projectively. However, we will not require this in our treatment.
In the $d=4$ super Yang-Mills setup, we will encounter the decomposition of the superfield of interest into chiral and antichiral components. 

\section{Restriction to Ordinary Spacetimes}\label{section:restriction}
We shall now discuss the restriction of the structures to the embedding of a purely even spacetime manifold $i:\check M\to M$. Physically, this corresponds to identifying an 
ordinary spacetime manifold within superspace. Upon restricting to the embedding, fields defined over $M$ decompose into component fields defined on $\check M$.
\begin{definition}
An \emph{underlying ordinary spacetime} of the superspace $M$ is a purely even integral submanifold $i:\check M \to M$ such that:
\begin{itemize}
\item The topological spaces of $\check M$ and $M$ coincide, and the map of topological spaces $\underline i : \underline{\check M}\to \underline M$ is the identity.
\item $\check M$ integrates the bosonic distribution, that is,
\begin{equation}
\mathcal B |_{i(\check M)} = \mathcal T _{\check M}.
\end{equation}
\end{itemize}
\end{definition}
We shall identify $\check M$ with its image in $M$, thus writing $\mathcal B |_{\check M}$ instead of $\mathcal B |_{i(\check M)}$ et cetera. Note that such embeddings always exist due to lemma~\ref{lemma:b-integrable}. Fixing an embedding of this kind induces a plethora of structures on the ordinary spacetime manifold $\check M$. We shall discuss the relevant ones below.
\begin{lemma}
Let $i:\check M \to M$ be an embedding of an underlying ordinary spacetime manifold $\check M$ into the spinorial superspace $M$. The superspace structure on $M$ induces a $\Spin^\mathbb F(V,q)$ structure and a compatible Riemannian structure on $\check M$.
\end{lemma}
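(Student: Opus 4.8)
The plan is to extract the required structures on $\check M$ by pulling back the reduced frame bundle of $M$ along the embedding $i$ and then using the identification $\mathcal B|_{\check M} = \mathcal T_{\check M}$ from the definition of an underlying ordinary spacetime. First I would observe that, since $\check M$ integrates the bosonic distribution and $\underline i = \id$, the differential $\diff i$ induces a bundle isomorphism $T\check M \cong B|_{\check M}$ (this is exactly the identification used in the proof of lemma~\ref{lemma:b-integrable}, where it was noted that $\diff\iota$ corestricts to an isomorphism $T\check M \cong \iota^* B \cong B|_{\iota(\check M)}$). Under this isomorphism, the $\SO(V,q)$-reduction $Q^B$ of the bosonic frame bundle restricts to an $\SO(V,q)$-structure on $T\check M$, i.e.\ a pseudo-Riemannian metric $\check g$ of signature $q$ on $\check M$; concretely $\check g = i^* g$ where $g$ is the canonical metric on $\mathcal B$ from lemma~4 (the one labelled by the bullet ``a $\mathcal O_M$-bilinear metric $g$ on the bosonic distribution''). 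Since $\check M$ is purely even, $i^* \mathcal O_M = C^\infty(\check M)$-valued, so $\check g$ is an honest (pseudo-)Riemannian metric.

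Next I would produce the $\Spin^\mathbb F(V,q)$-structure. The almost spinorial superspace comes with the reduction $\hat Q \to Q^B \times_M \hat Q \to \Fr(TM)$ factoring through $\Spin^\mathbb F(V,q) \xrightarrow{\alpha\times\id}\SO(V,q)\times\Spin^\mathbb F(V,q)$. Restricting (pulling back along $i$, using $\underline i=\id$) gives a principal $\Spin^\mathbb F(V,q)$-bundle $i^*\hat Q \to \check M$ together with the bundle map $i^*\hat Q \to i^* Q^B$ covering $\alpha:\Spin^\mathbb F(V,q)\to\SO(V,q)$. Composing with the isomorphism $i^* Q^B \cong \Fr_{\SO(V,q)}(T\check M, \check g)$ from the previous paragraph exhibits $i^*\hat Q$ as a $\Spin^\mathbb F(V,q)$-structure on the pseudo-Riemannian manifold $(\check M,\check g)$, i.e.\ a lift of the oriented orthonormal frame bundle along the short exact sequence $1\to K\to\Spin^\mathbb F(V,q)\xrightarrow{\alpha}\SO(V,q)\to 1$. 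Compatibility with the Riemannian structure is built in: the map to $i^*Q^B$ is precisely the datum making this a metric $\Spin^\mathbb F$-structure. Associated to $i^*\hat Q$ via the spinor representation $S$ of $\Spin^\mathbb F(V,q)$ one obtains the spinor bundle $i^* F$ on $\check M$ (using $F \cong Q^F\times_{\Spin(V,q)}\Pi S$ and $\hat Q = (Q\times_M Q^\mathbb F)/\{\pm 1\}$), recovering the spinorial data; the $\Gamma$ map restricts to the Clifford multiplication, but strictly speaking that is more than the lemma claims and I would defer it to the subsequent discussion.

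The main obstacle is bookkeeping the pullback of bundles of frames along a morphism of supermanifolds whose underlying map is the identity, and making sure that ``restriction'' genuinely lands in the purely even world: one must check that $i^*\Fr(TM) \to \check M$ really does reduce to $\Fr(T\check M)$ rather than to some thickened bundle, and this is where the hypothesis $\mathcal B|_{\check M} = \mathcal T_{\check M}$ does the work — it forces $\diff i$ to kill the spinorial directions and to be an iso onto $B|_{\check M}$, so the $\GL(0|n)$-factor of the structure group drops out entirely upon restriction. A secondary subtlety is functoriality: the reduced frame bundle is defined in the family-over-$A$ category, and one should note that pulling back along the purely even embedding is compatible with the base change $A$-machinery, so that $i^*\hat Q$ is again a genuine principal bundle with the stated structure group. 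Once these points are granted, the construction is canonical and the lemma follows.
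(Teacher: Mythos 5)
Your proposal is correct and follows essentially the same route as the paper: pull back the reduced frame bundle $Q^B\times_M Q^F$ along $i$, let the first factor define the pseudo-Riemannian structure and the second the $\Spin^{\mathbb F}(V,q)$-structure, with compatibility coming from the pulled-back bundle map $i^*Q^F\to i^*Q^B$ covering $\alpha$. You merely make explicit the identification $T\check M\cong i^*B$ (which the paper leaves implicit, having established it in the integrability lemma), so nothing further is needed.
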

\begin{proof}
Note that upon pulling back by the embedding $i$, the bundle of frames decomposes into
\begin{equation}
i^*Q^B\times_{\check M} i^*Q^F \to \check M.
\end{equation}
The first factor defines the Riemannian structure, and the latter one defines the $\Spin^\mathbb F(V,q)$ structure. By definition of the superspace structure, there exists a map of principal bundles $Q^F\to Q^B$
with kernel $K$, and pulling back by $i^*$, we obtain the compatible structures
\begin{equation}
i^*Q^F\to i^*Q^B \to \check M.
\end{equation}
\end{proof}
Note that the pullback $i^*Q^F$, together with the spinorial representation $S$, allows for the construction of a spinor bundle over the ordinary spacetime
\begin{equation}
S\check M = i^*Q^ F\times_{\Spin^\mathbb F(V,q)} \Pi S\to \check M.
\end{equation}
We have the following isomorphisms:
\begin{lemma}\label{lemma:distr-iso}
The corestriction $\psi_T:\mathcal T _{\check M}\to i^*\mathcal B$ of $\diff i:\mathcal T_{\check M} \to i^*\mathcal T_M$ is an isomorphism. Furthermore, there exists a canonical isomorphism $\psi_S:\Omega^0(\check M, S\check M) \to i^*\mathcal F$.
\end{lemma}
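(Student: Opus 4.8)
I would prove the two isomorphisms separately; both amount to unwinding definitions, and the second additionally uses that pullback commutes with the associated-bundle construction. For $\psi_T$: since $i:\check M\to M$ is an embedding of supermanifolds, in particular an immersion, $\diff i:\mathcal T_{\check M}\to i^*\mathcal T_M$ is a locally split injection of locally free sheaves, so it identifies $\mathcal T_{\check M}$ with a subbundle of $i^*\mathcal T_M=i^*\mathcal B\oplus i^*\mathcal F$. The defining property of an underlying ordinary spacetime -- that $\check M$ is a purely even integral submanifold with $\mathcal B|_{i(\check M)}=\mathcal T_{\check M}$ -- says precisely that this subbundle is $i^*\mathcal B$. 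Hence $\diff i$ corestricts to $\psi_T:\mathcal T_{\check M}\to i^*\mathcal B$, a locally split injection between locally free sheaves of the same rank $m|0$; its cokernel is then locally free of rank $0$, so $\psi_T$ is an isomorphism. This part is little more than a reformulation of the definition, Lemma~\ref{lemma:b-integrable} guaranteeing that such $\check M$ exist.

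For $\psi_S$: I would first recall from the preceding lemma that $i^*Q^F\to\check M$ is the $\Spin^\mathbb F(V,q)$-bundle entering the definition $S\check M=i^*Q^F\times_{\Spin^\mathbb F(V,q)}\Pi S$, while over $M$ the spinorial distribution is the associated bundle $F\cong Q^F\times_{\Spin^\mathbb F(V,q)}\Pi S$ (with the $\Spin^\mathbb F(V,q)$-structure in force). The key point is the canonical isomorphism of vector bundles over $\check M$
\[
i^*F\;=\;i^*\!\bigl(Q^F\times_{\Spin^\mathbb F(V,q)}\Pi S\bigr)\;\cong\;\bigl(i^*Q^F\bigr)\times_{\Spin^\mathbb F(V,q)}\Pi S\;=\;S\check M,
\]
i.e.\ that pulling back a principal bundle and then forming an associated bundle agrees with forming the associated bundle first and pulling back afterwards; this is standard and, in the family-over-$A$ framework, can be checked on $A'$-points for every $A'\to A$. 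Taking sections, and using the Serre--Swan theorem for supermanifolds in the form $i^*\mathcal F=i^*\Gamma(M,F)=\Gamma(\check M,i^*F)$, this would give
\[
\Omega^0(\check M,S\check M)\;=\;\Gamma(\check M,S\check M)\;\cong\;\Gamma(\check M,i^*F)\;=\;i^*\mathcal F,
\]
and I would define $\psi_S$ to be the composite. It is canonical because every isomorphism used is, and it is even because both sides are modelled on the same parity-reversed fibre $\Pi S$; passing between the two as sheaves over $\underline{\check M}=\underline M$ is legitimate because $\underline i=\id_{\underline M}$.

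I do not anticipate a genuine obstacle: the content of the lemma is organisational rather than computational. The points that would require a little care are keeping the $\Spin^\mathbb F(V,q)$-reduction (rather than the bare $\Spin(V,q)$-reduction) consistent between $M$ and $\check M$ -- including the trivialisable $R$-symmetry twist $Q^\mathbb F$, which plays no role since it appears identically on both sides -- and making the pullback--associated-bundle isomorphism precise in the super setting, which is the step where I would be most explicit (for instance frame-wise: in a local frame $p$ of $i^*Q^F$ it sends $p(s)$ to the spinor field on $\check M$ with the same components $s$).
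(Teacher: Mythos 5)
Your proposal is correct and follows essentially the same route as the paper: the first isomorphism is read off from the decomposition $i^*\mathcal T_M = i^*\mathcal B\oplus i^*\mathcal F$ together with the defining property $\mathcal B|_{i(\check M)}=\mathcal T_{\check M}$, and the second is the compatibility of pullback with the associated-bundle construction. The paper merely phrases the latter explicitly as the composite $\Omega^0(\check M,S\check M)\xrightarrow{\sim}\Omega^0(i^*Q,\Pi S)^{\Spin(V,q)}\xrightarrow{\langle i^*\theta^{\mathcal F},\cdot\rangle} i^*\mathcal F$, i.e.\ via the pulled-back soldering form, which is the same canonical identification you describe frame-wise.
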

\begin{proof}
Upon pulling $\mathcal T _M$ back by $i$, it decomposes into $i^*\mathcal B\oplus i^*\mathcal F$. By definition, $i^*\mathcal B = \mathcal B|_{\check M}$ coincides with $\mathcal T _{\check M}$.\par
Upon pulling back the soldering form component $\theta^\mathcal F$ by $i$, the desired isomorphism for $\mathcal F|_{\check M}$ with $\Omega^0(\check M , S\check M)$ is given explicitly by
\begin{equation}
\Omega^0(\check M, S\check M) \xrightarrow{\sim}\Omega^0(i^*Q,\Pi S)^{\Spin(V,q)}
\xrightarrow{\langle i^*\theta^\mathcal F,\cdot\rangle} i^*\mathcal F.
\end{equation}
\end{proof}
Finally, we have the following relations between the spinorial connection on $M$ and the spin connection on $i^*Q^ F \to \check M$:
\begin{lemma}\label{lemma:restriction-levi-civita}
Let $\varphi^s\in\mathcal A^s_Q$ be a spinorial connection. The pullback connection $i^*\varphi^s\in\mathcal A _{i^*Q}$ is a $\Spin^\mathbb F(V,q)$-connection on $i^*Q=\check Q$ (and thus in particular on the spinor bundle $S\check M$); and the pullback $i^*\left((\varphi^s)^\mathcal B\right)\in\mathcal A _{i^*Q^B}$ of the $Q^B$-part of $\varphi^s$ is the Levi-Civita connection on $(\check M, g|_{\check M})$.
\end{lemma}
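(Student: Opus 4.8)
The plan is to reduce both claims to statements about $\SO(V,q)$- and $\Spin^{\mathbb F}(V,q)$-connections on the pulled-back bundle $i^*Q$, using the machinery of Theorem~\ref{thm:superspace-connection} together with Lemma~\ref{lemma:b-integrable} and the definition of a spinorial connection. The first claim is essentially formal: a $\Spin^{\mathbb F}(V,q)$-connection on $Q$ is by definition a principal connection $1$-form with values in $\mathfrak{spin}^{\mathbb F}(V,q)$, and pulling back a connection $1$-form along the bundle morphism $i^*Q\to Q$ covering $\check M\to M$ preserves the defining properties (equivariance and reproduction of fundamental vector fields), so $i^*\varphi^s\in\mathcal A_{i^*Q}$. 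Since $i^*Q$ is precisely the bundle $\check Q$ defining the $\Spin^{\mathbb F}(V,q)$-structure on $\check M$ from Lemma~\ref{lemma:distr-iso} and the preceding lemma, and since the spinor bundle $S\check M$ is associated to $i^*Q$, the induced connection on $S\check M$ is the standard spin connection. First I would state this carefully and note that no integrability input is needed here.

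The substance is in the second claim. The strategy is: (i) observe that $i^*((\varphi^s)^{\mathcal B})$ is a principal connection on $i^*Q^B\to\check M$, which under the identification $\psi_T:\mathcal T_{\check M}\xrightarrow{\sim} i^*\mathcal B$ of Lemma~\ref{lemma:distr-iso} is a metric connection for $g|_{\check M}$ (it has structure group $\SO(V,q)$, hence is metric-compatible by the associated-bundle construction); (ii) show it is torsion-free, which by the uniqueness half of the fundamental theorem of Riemannian geometry forces it to be Levi-Civita. For (ii) the key point is that $\varphi^s$ is \emph{spinorial}, meaning $\widetilde{\pr}_{\mathcal B}^*\diff^{\varphi^s}\theta^{\mathcal B}=0$ in $\Omega^{2|0}(Q,V)^{\Spin(V,q)}_{\hor}$: the torsion of the $Q^B$-part along the purely bosonic directions vanishes identically on $M$. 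Restricting along $i$, the pullback of this torsion $2$-form is the torsion of $i^*((\varphi^s)^{\mathcal B})$ — here one uses that $i^*$ commutes with $\diff^{(-)}$ and with $\theta^{\mathcal B}$, and crucially that $\check M$ integrates $\mathcal B$ (so that $\mathcal T_{\check M}=\mathcal B|_{\check M}$ and there are \emph{no} fermionic directions to worry about on $\check M$: every tangent vector of $\check M$ is bosonic, so the full torsion of the restricted connection is captured by the $\Omega^{2|0}$-component that already vanishes). Thus the restricted connection is a torsion-free metric connection, hence Levi-Civita.

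The main obstacle I anticipate is bookkeeping around the restriction of differential forms and connections along the non-trivial embedding $i:\check M\to M$ — specifically, making precise that ``restrict the torsion $2$-form of $\varphi^s$ to $\check M$'' genuinely yields the torsion of the pulled-back connection, rather than some other tensor differing by a term involving the second fundamental form of $\check M$ in $M$ or by the failure of $\mathcal F$ to be integrable. The resolution is exactly Lemma~\ref{lemma:b-integrable} (equivalently the definition of an underlying ordinary spacetime): $\check M$ is an \emph{integral} submanifold of $\mathcal B$ with $\underline{\check M}=\underline M$, so its tangent sheaf is honestly $\mathcal B|_{\check M}$ with no transverse fermionic part entering, and the soldering form and connection form pull back cleanly. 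One should also check the torsion-free condition is stated in a way compatible with the grading conventions (the ``$2|0$'' superscript), but since $\check M$ is purely even this collapses to the ordinary notion. A secondary, minor point: one must confirm that $i^*((\varphi^s)^{\mathcal B})$ coincides with $(i^*\varphi^s)^{\mathcal B}$, i.e.\ that taking the $\mathfrak{so}(V,q)$-component commutes with pullback — immediate since the splitting $\mathfrak{spin}^{\mathbb F}=\mathfrak{spin}\oplus\mathfrak k$ and $A_*:\mathfrak{spin}\xrightarrow{\sim}\mathfrak{so}$ are fixed algebraic data independent of the base.
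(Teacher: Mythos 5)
Your proposal is correct and follows essentially the same route as the paper: the pullback is formally a connection, metric compatibility comes from the $\SO(V,q)$ structure group, and torsion-freeness follows because the spinorial condition kills the $2|0$-component of the torsion, which is the only component surviving restriction to the purely even integral submanifold $\check M$. Your additional care about $i^*$ commuting with the $\mathcal B$-part and about the absence of second-fundamental-form corrections is a useful elaboration of what the paper leaves implicit, but it is not a different argument.
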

\begin{proof}
It is clear that the pullback of the connection is a connection on the pullback bundle. Note that on spinorial connections, the $2|0$-component of the $V$-part of the torsion $\Theta^\varphi$ vanishes. Upon pulling back to $\check M$, this is the only part that remains -- hence, $i^*\left((\varphi^s)^\mathcal B\right)$ is torsion free and metric, and thus, the Levi-Civita connection.
\end{proof}

\section{Split Superspaces}\label{sec:split-superspaces}
In this section, we first construct a class of spinorial superspaces that arise from spinor bundles over ordinary Riemannian spin manifolds, and then give relevant formulas for computing integrals on these superspaces.
\subsection{The Construction}
Throughout this section, we shall denote by $(\check M,\check g)$ an ordinary (i.e., non-super) pseudo-Riemannian manifold with infinitesimal model $(V,q)$. Further, $S$ shall be a real spinorial representation of $\Spin(V,q)$. Let us further assume that $\check M$ is spin --  that is, there exists a morphism of principal bundles
\begin{equation}
\begin{tikzcd}
\check Q \arrow{r}&\Fr^{\SO}(T\check M)\arrow{r}&\check M,
\end{tikzcd}
\end{equation}
factoring through the structure groups $\Spin(V,q)\to\SO(V,q)$. Here, $\check \pi: \check Q\to\check M$ is the lift of the bundle of orthonormal frames $\Fr^{\SO}(T\check M)\to \check M$.
\begin{definition} A \emph{split superspace} is a spinorial superspace modelled on a split supermanifold of the type
\begin{equation}
M\coloneqq \Pi S\check M,
\end{equation}
where
\begin{equation}
\tau:\Pi S\check M = \check Q \times_{\Spin(V,q)} \Pi S \to \check M
\end{equation}
is the parity reversed spinor bundle associated to the spin structure $\check Q$ and the spinor module $S$.
\end{definition}
We shall now construct the spinorial sheaf $\mathcal F$ that endows the split superspace with a spinorial structure as defined in section~\ref{sec:spin-structures}.
\begin{lemma}
The Levi-Civita connection on $\Fr^{\SO}(T\check M)$ lifts canonically and uniquely to the spin structure $\check Q \to \check M$.
\end{lemma}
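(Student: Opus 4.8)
The plan is to show existence and uniqueness of a lift of the Levi-Civita connection from $\Fr^{\SO}(T\check M)$ to the spin structure $\check Q\to\check M$ by exploiting the fact that the covering map $\check Q\to\Fr^{\SO}(T\check M)$ is a morphism of principal bundles along the double cover $\lambda:\Spin(V,q)\to\SO(V,q)$, which is a local diffeomorphism of Lie groups. The key observation is that the differential $\lambda_*:\spin(V,q)\to\mathfrak{so}(V,q)$ is an isomorphism of Lie algebras, since the kernel $\{\pm 1\}$ is discrete. This is exactly the mechanism already used in the proof of Theorem~\ref{thm:superspace-connection}, so I would invoke the same principle here.

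First I would recall that a principal connection on $\Fr^{\SO}(T\check M)$ is an $\mathfrak{so}(V,q)$-valued one-form $\omega$ that is $\SO(V,q)$-equivariant and restricts to the Maurer--Cartan form on vertical vectors. Denote the covering by $\Lambda:\check Q\to\Fr^{\SO}(T\check M)$. I would \emph{define} the candidate lift by pulling back and correcting by the Lie algebra isomorphism: $\check\omega\coloneqq(\lambda_*)^{-1}\circ\Lambda^*\omega$, an $\spin(V,q)$-valued one-form on $\check Q$. The three things to check are: (i) $\check\omega$ reproduces the Maurer--Cartan form on the vertical bundle of $\check Q$ --- this holds because $\Lambda$ maps the fundamental vector field of $\xi\in\spin(V,q)$ to that of $\lambda_*\xi$, so $\Lambda^*\omega$ evaluated on the former gives $\lambda_*\xi$, and applying $(\lambda_*)^{-1}$ returns $\xi$; (ii) $\check\omega$ is $\Spin(V,q)$-equivariant --- this follows from equivariance of $\omega$ together with the intertwining relation $\Lambda\circ R_g = R_{\lambda(g)}\circ\Lambda$ and $\Ad_{\lambda(g)}\circ\lambda_* = \lambda_*\circ\Ad_g$; (iii) the resulting connection on $\check Q$ induces back the original connection on $\Fr^{\SO}(T\check M)$ under $\Lambda$, which is immediate from the construction since $\lambda_*\circ\check\omega=\Lambda^*\omega$ and $\Lambda$ is surjective.

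For uniqueness I would argue that any connection $\check\omega'$ on $\check Q$ lifting $\omega$ must satisfy $\lambda_*\circ\check\omega'=\Lambda^*\omega$ on horizontal vectors and the Maurer--Cartan condition on vertical ones; since $\lambda_*$ is injective and $\Lambda$ is a local diffeomorphism (hence $d\Lambda$ is an isomorphism fibrewise, identifying horizontal complements), $\check\omega'$ is forced to equal $\check\omega$. Equivalently, two such lifts differ by an equivariant horizontal $\spin(V,q)$-valued one-form whose image under $\lambda_*$ vanishes, hence is zero. Finally, I would remark that ``canonically'' means the construction uses no auxiliary choices: it depends only on the spin structure $\check Q$ and the Levi-Civita connection itself, functorially in the group homomorphism $\lambda$.

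The main obstacle is not conceptual but bookkeeping: one has to be careful that the horizontal distribution of $\check Q$ is genuinely well-defined by $\ker\check\omega$ and that this distribution projects isomorphically (via $d\Lambda$) onto the horizontal distribution of $\Fr^{\SO}(T\check M)$. This is where the local-diffeomorphism property of the covering $\Lambda$ is essential --- without it, one could not transport horizontal lifts back and forth. Since everything reduces to the discreteness of $\{\pm 1\}$ and the resulting Lie-algebra isomorphism $\lambda_*$, I expect this to go through smoothly, mirroring the argument already given in Theorem~\ref{thm:superspace-connection}.
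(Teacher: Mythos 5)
Your proposal is correct and uses exactly the same mechanism as the paper's proof: the covering $\check Q\to\Fr^{\SO}(T\check M)$ is equivariant along the discrete double cover $\Spin(V,q)\to\SO(V,q)$, whose differential is a Lie algebra isomorphism, so pulling back the connection form and applying $(\lambda_*)^{-1}$ gives the unique lift. The paper states this in two sentences; you have simply spelled out the verification of the vertical, equivariance, and uniqueness conditions in more detail.
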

\begin{proof}
The morphism of principal bundle $\psi:\check Q \to \Fr(T\check M)$ is equivariant along the discrete cover of structure groups $\Spin(V,q)\to\SO(V,q)$. In particular, the Lie algebras of the structure groups are canonically isomorphic -- hence, the induced morphism on equivariant Lie-algebra-valued forms (in particular the connections) is an isomorphism.
\end{proof}
We shall denote this spin connection on $\check Q$ by $\varphi\in\mathcal A _{\check Q}$. It gives rise to a splitting of the short exact sequence
\begin{equation}
\begin{tikzcd}
0\arrow{r}& V\check Q \arrow{r}&
T\check Q\arrow{r}{\diff\check \pi}
\arrow[bend right=60,swap]{l}{\varphi}&\check\pi^* T\check M \arrow{r}&0.
\end{tikzcd}
\end{equation}
We use the canonical embedding of the underlying ordinary spacetime into the superspace by means of the zero section, which we shall denote by
\begin{equation}
i:\check M \to M.
\end{equation}
We denote by $Q\to M = \Pi S\check M$ the frame bundle of the superspace, where we assume without loss of generality that the structure group is already reduced to $\Spin(V,q)$. 
\begin{lemma}
There exists an isomorphism of principal $\Spin(V,q)$-bundles over $\Pi S\check M$
\begin{equation}
\psi: Q\to \check Q \times \Pi S,
\end{equation}
where the $\Spin(V,q)$-action on the right hand side is given by
\begin{equation}
R'_g(q,s) = (qg, g^{-1}s),\quad g\in\Spin(V,q),q\in\check Q, s\in\Pi S,
\end{equation}
and the projection $\check Q \times \Pi S \to \Pi S\check M$ by the quotient of the diagonal action of $\Spin(V,q)$, i.e.,
\begin{equation}
\pi' : \check Q \times \Pi S \to \Pi S\check M, \quad (q,s)\mapsto [(q,s)], \quad (q,s)\sim (q.g,g^{-1}s).
\end{equation}
\end{lemma}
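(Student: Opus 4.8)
The plan is to realise both $Q$ and $\check Q\times\Pi S$ as the pullback bundle $\tau^*\check Q\to M$, where $\tau:M=\check Q\times_{\Spin(V,q)}\Pi S\to\check M$ is the bundle projection of the split supermanifold, and then to compare the two descriptions directly.

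First I would make the geometry of $M$ explicit. The Levi-Civita lift $\varphi\in\mathcal A_{\check Q}$ from the preceding lemma equips the submersion $\tau$ with a horizontal distribution; by the construction of the spinorial superspace structure on a split superspace, the spinorial distribution $\mathcal F$ is the vertical sheaf $\ker(\diff\tau)$, while $\mathcal B$ is the $\varphi$-horizontal complement. Since the typical fibre of $\tau$ is the vector superspace $\Pi S$, the vertical tangent directions are canonically identified with $\Pi S$ itself, and the horizontal lift identifies $\mathcal B$ with $\tau^*\mathcal T_{\check M}$; explicitly one obtains isomorphisms of associated bundles $F\cong\tau^*(S\check M)$ (with parity reversed) and $B\cong\tau^*(T\check M)$. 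Under these identifications $F$ is associated to $\tau^*\check Q$ with fibre $\Pi S$, and $B$ to the same principal bundle via $\Spin(V,q)\xrightarrow{\alpha}\SO(V,q)$ acting on $V$ (using $\check g$); hence the frame bundle of $M$, reduced along $\Spin(V,q)\xrightarrow{\alpha\times\id}\SO(V,q)\times\Spin(V,q)$ exactly as in the definition of an almost spinorial superspace, is canonically $Q\cong\tau^*\check Q$. Equivalently: a $\Spin(V,q)$-frame $\Pi S\xrightarrow{\sim}F_m$ whose induced $\SO(V,q)$-frame of $B_m$ comes from $\check M$ is, after these identifications, precisely an element of $\check Q_{\tau(m)}$.

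Second I would identify $\tau^*\check Q$ with $\check Q\times\Pi S$ over $M$. Define
\begin{equation}
\Phi:\check Q\times\Pi S\to\tau^*\check Q=M\times_{\check M}\check Q,\qquad (q,s)\mapsto\bigl([q,s],q\bigr),
\end{equation}
where $[q,s]\in M$ is the point of $\Pi S\check M$ represented by $(q,s)$. This is well defined because $\tau([q,s])=\check\pi(q)$; it intertwines $R'_g(q,s)=(qg,g^{-1}s)$ with the tautological right action $([q,s],q)\cdot g=([q,s],qg)$ since $[qg,g^{-1}s]=[q,s]$; and it covers $\pi'$ because the projection of $\Phi(q,s)$ to $M$ is $[q,s]=\pi'(q,s)$. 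On the fibre over a point $[q_0,s_0]$ it restricts to a bijection from the diagonal $\Spin(V,q)$-orbit of $(q_0,s_0)$ onto $\{[q_0,s_0]\}\times\check Q_{\check\pi(q_0)}$, so $\Phi$ is an isomorphism of principal $\Spin(V,q)$-bundles over $M$. Composing $\Phi^{-1}$ with the isomorphism $Q\cong\tau^*\check Q$ of the first step gives $\psi:Q\to\check Q\times\Pi S$; tracing the construction, $\psi$ sends a frame $p\in Q$ over $m$ to $(q,q^{-1}\sigma_m)$, where $q\in\check Q_{\tau(m)}$ is the frame attached to $p$ and $\sigma_m\in\Pi S_{\tau(m)}$ is $m$ regarded as a spinor, and the equivariance $\psi\circ R_g=R'_g\circ\psi$ is then immediate.

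The main obstacle is the first step: pinning down, in the category $\mathbf{sMfd}_A$, the canonical isomorphisms $F\cong\tau^*(S\check M)$ and $B\cong\tau^*(T\check M)$ and checking that the $\varphi$-horizontal distribution really is the bosonic distribution $\mathcal B$ of the spinorial superspace structure, so that the induced frame-bundle reduction is genuinely $\tau^*\check Q$ and not a twisted form of it. Once that identification is in place, the remaining verifications — well-definedness, equivariance, and fibrewise bijectivity of $\Phi$ and hence of $\psi$ — are routine.
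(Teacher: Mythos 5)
Your proposal is correct and follows essentially the same route as the paper: both identify $Q$ with $\tau^*\check Q \cong \check Q\times_{\check M}\Pi S\check M$ and then write down the explicit identification $(q,s)\leftrightarrow\bigl([q,s],q\bigr)$ (you construct the inverse of the paper's map), checking $R'$-equivariance and invoking that a morphism of principal bundles over the same base is an isomorphism. One small caution on your first step: in the paper the spinorial distribution $\mathcal F$ is the \emph{deformed} vertical bundle $\delta_\mathcal F(\tau^*\Pi S\check M)$ rather than $\ker(\diff\tau)$ itself (and is in fact only constructed \emph{after} this lemma), but this does not affect the frame-bundle identification, which the paper justifies simply by noting that the entire $\Spin(V,q)$-structure is already captured in $\check Q$.
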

Note that the latter factor is a \emph{vector space} instead of the spinor bundle.
\begin{proof}
Note that $Q=\tau^*\check Q$ as the entire $\Spin(V,q)$-structure is already captured in $\check Q$.
By definition of the categorical pullback, $\tau^*\check Q$ fits in the commutative diagram
\begin{equation}
\begin{tikzcd}
\tau^*\check Q \arrow{r}\arrow{d}& \check Q\arrow{d}\\
\Pi S\check M\arrow{r}{\tau}&\check M
\end{tikzcd},
\end{equation}
and thus coincides with the fibre product over $\check M$
\begin{align}
\check Q \times_{\check M} \Pi S M &\cong \check Q \times_{\check M} \big(\check Q \times_{\Spin(V,q)} \Pi S \big).
\end{align}
The desired isomorphism is given explicitly by
\begin{equation}
\psi:\check Q \times_{\check M} (\check Q \times_{\Spin(V,q)}\Pi S M) \to \check Q \times\Pi S,\quad
(q,[(q',s')]) = (q,[(q,s)])\mapsto (q,s).
\end{equation}
Thus, we choose the first factor of $\check Q$ to pick the suitable representative of the associated bundle.
By equivariance, the right action on $\check Q\times_{\check M}\Pi S \check M$ must be compatible with $\psi$, such that
\begin{equation}
R_g ((q,[(q',s')]) = (q.g,[(q,s)]) = (q.g, [(q.g, g^{-1}s]) \mapsto (q.g, g^{-1}s) =: R'_g(q,s).
\end{equation}
It follows that the projection 
\begin{equation}
\check Q \times \Pi S \to \check Q \times_{\Spin(V,q)}\Pi S = \Pi S\check M,\quad 
(q,s)\mapsto [(q,s)]
\end{equation}
is invariant under the right $\Spin(V,q)$-action $R'$ on $\check Q \times \Pi S$ and thus well-defined. Summarising, we have the following commutative diagram
\begin{equation}
\begin{tikzcd}
\check Q \times_{\check M}\Pi S\check M \times\Spin(V,q)\arrow{r}{R}\arrow{d}{\psi\times\id}&
\check Q \times_{\check M}\Pi S\check M \arrow{r}\arrow{d}{\psi}& \Pi S\check M.\\
\check Q \times \Pi S \times\Spin(V,q)\arrow{r}{R'}&\check Q \times \Pi S \arrow[swap]{ru}{\mod \Spin(V,q)}&
\end{tikzcd}
\end{equation}
This concludes the proof as each morphism of principal bundles of same structure group and over the same base is automatically an isomorphism.
\end{proof}
In what follows, we construct the spinorial distribution $\mathcal F\subset \mathcal T _M$. A good candidate would be the vertical tangent bundle of the parity reversed spinor bundle $\tau^*\Pi S\check M$ as its sections contain all odd generators of $\Der(\mathcal O_M)$. However, the vertical tangent bundle is integrable as the vertical vector fields are closed under the bracket. The strategy we follow is to deform the vertical bundle.
Recall from the preceding proof that the frame bundle of $M=\Pi S\check M$ is given by $\check Q\times\Pi S$. It fits into the commutative diagram
\begin{equation}
\begin{tikzcd}
\check Q \times \Pi S\arrow{r}{\pr_{\check Q}}\arrow{d}{q} & \check Q \arrow{d}{\pi}\\
\Pi S\check M \arrow{r}{\tau} & \check M.
\end{tikzcd}
\end{equation}
Here, $q$ denotes the quotient by the diagonal action $R_g(\check q, s) = (\check q.g,g^{-1}.s)$.
The spin connection $\varphi\in\mathcal A_{\check Q}$ induces a splitting $X\mapsto \nabla_X^\varphi$ on the tangent space of the parity reversed spinor bundle. To see this, consider the projections
\begin{equation}
\begin{tikzcd}
\check Q & \arrow[swap]{l}{\pr_{\check Q}}  \check Q \times \Pi S \arrow{r}{\pr_{\Pi S}}&\Pi S,
\end{tikzcd}
\end{equation} 
and the following diagram of associated vector bundles
\begin{equation}\label{eq:split-tangent-diagram}
\begin{tikzcd}
0\arrow{r}&\check Q \times T\Pi S\arrow{r}\arrow{dd}{\pr_{\Pi S}^* \diff q}
&T\check Q \times T\check \Pi S\arrow{r}\arrow{dd}{\diff q}&T\check Q \times\Pi S\arrow[swap]{d}{\diff\pi\times\id}\arrow{r}& 0\\
&&&\pi^*T\check M\times\Pi S\arrow{d}{q}\arrow[bend right=60,swap]{u}{\varphi\times\id}&\\
0\arrow{r}&\tau^*\Pi S\check M \arrow{r}&T\Pi S\check M \arrow{r}{\diff\tau}&
\tau^*T\check M\arrow{r}\arrow[bend left=60]{l}{\nabla^\varphi}&0.
\end{tikzcd}
\end{equation}
Here, the rows are exact sequences of vector bundles over $\check Q \times \Pi S$ and $\Pi S\check M$, respectively. The upshot of this point of view is that the upper sequence has a canonical splitting. Note that the canonical isomorphism $\Omega^\bullet(Q, W)^{\Spin(V,q)}_{\hor}\cong \Omega^\bullet(M, \check Q \times_{\Spin(V,q)} W)$ can be used to lift sections of associated bundles in the lower sequence to equivariant sections in the upper one. 
\begin{definition}
The \emph{Euler vector field} on a vector space $S$ is the image $\mathcal E$ of the identity map $\id_S$ under the canonical identification of $S^\vee \otimes S$ with the space of vector fields with linear coefficients. In particular, a choice of basis of $S$ induces coordinate functions $\{s^i\}$, such that 
\begin{equation}
\mathcal E_S = \sum_i s^i \pdiff{}{s^i}.
\end{equation}
\end{definition}
The first description is clearly $\GL(S)$-invariant -- alternatively, one may check that the coordinate representation of $\mathcal E _S$ is invariant under linear transformations. We can thus define such vector fields on fibres of associated vector bundles as follows:
\begin{definition}
The Euler vector field $\mathcal E$ on $\Pi S\check M$ is the image of $\mathcal E _S$ under the quotient by the $\Spin(V,q)$-action
\begin{equation}
\check Q\times T\Pi S \xrightarrow{\pr_{\Pi S}^*\diff q} \tau^* \Pi S\check M\into T\Pi S \check M.
\end{equation}
\end{definition}
Clearly, $\mathcal E$ is vertical, as can be seen with help of the diagram in equation~\eqref{eq:split-tangent-diagram}.\par
We define the deformation of the vertical tangent bundle of $\Pi S\check M$ by the monomorphism of vector bundles
\begin{equation}
\delta_\mathcal F:\tau^*\Pi S\check M \to T\Pi S\check M, \quad X\mapsto X + \frac{1}{2}\nabla^\varphi_{\Gamma(\mathcal E, X)},
\end{equation}
where $\Gamma: \tau^*\Pi S\check M \otimes \tau^*\Pi S\check M \to \tau^*T\check M$ is constructed via the map of representations $\Gamma: S\otimes S \to V$. Further, we define the regular distribution
\begin{equation}
\mathcal F\coloneqq \Omega^0\left(\Pi S\check M, \delta_\mathcal F(\tau^*\Pi S\check M)\right)\to \mathcal T _{\Pi S\check M}.
\end{equation}
\begin{lemma}
$\mathcal F$ is a spinorial distribution on $M=\Pi S\check M$, and turns $M$ into a spinorial superspace.
\end{lemma}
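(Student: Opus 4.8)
The plan is to verify the two defining conditions of a spinorial superspace for the pair $(M,\mathcal F)$ with $M=\Pi S\check M$ and $\mathcal F=\Omega^0(M,\delta_\mathcal F(\tau^*\Pi S\check M))$: first, that $\mathcal T_M=\mathcal B\oplus\mathcal F$ with $\mathcal F$ of full odd rank $0|n$ and maximally non-integrable, and second, that the resulting Frobenius map $C_F$ coincides with the morphism induced by $\Gamma\colon S\otimes S\to V$ (so that by the definition of spinorial superspace the structure is the correct one). Since $\delta_\mathcal F$ is a monomorphism of vector bundles landing in $T\Pi S\check M$ and, by construction, $\delta_\mathcal F(X)=X+\tfrac12\nabla^\varphi_{\Gamma(\mathcal E,X)}$ differs from the (vertical) inclusion $\tau^*\Pi S\check M\hookrightarrow T\Pi S\check M$ only by a horizontal correction term, its image is a rank $0|n$ subbundle that is complementary to the horizontal distribution $\mathcal B\coloneqq\Omega^0(M,\nabla^\varphi(\tau^*T\check M))$ determined by the spin connection $\varphi$. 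That gives the direct sum decomposition $\mathcal T_M=\mathcal B\oplus\mathcal F$ immediately from the splitting in the lower row of diagram~\eqref{eq:split-tangent-diagram}.

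The substance is the computation of the Frobenius pairing $f_F\colon\wedge^2_{\mathcal O_M}\mathcal F\to\mathcal B$, $X\wedge Y\mapsto\pr_\mathcal B[\delta_\mathcal F X,\delta_\mathcal F Y]$. First I would work in a local frame: pick a local section of $\check Q$ over $\check M$, which (via the isomorphism $Q\cong\check Q\times\Pi S$) trivialises everything, and write $X=\partial/\partial s^\alpha$ type vertical fields; then $\delta_\mathcal F X=\partial_{s^\alpha}+\tfrac12\nabla^\varphi_{\Gamma(\mathcal E,\partial_{s^\alpha})}$. Computing $[\delta_\mathcal F X,\delta_\mathcal F Y]$ and projecting onto $\mathcal B$, the purely vertical bracket $[\partial_{s^\alpha},\partial_{s^\beta}]=0$ drops out; the cross terms $[\partial_{s^\alpha},\tfrac12\nabla^\varphi_{\Gamma(\mathcal E,\partial_{s^\beta})}]$ produce exactly $\tfrac12\nabla^\varphi_{\Gamma(\partial_{s^\alpha},\partial_{s^\beta})}$ because $\mathcal E=\sum s^i\partial_{s^i}$ depends linearly on the fibre coordinates and $\partial_{s^\alpha}\mathcal E$ contributes the missing factor (symmetrising over $\alpha,\beta$ and using that $\Gamma$ is symmetric turns the two cross terms and the coefficient $\tfrac12$ into a clean $\nabla^\varphi_{\Gamma(\partial_{s^\alpha},\partial_{s^\beta})}$); and the horizontal–horizontal bracket $[\nabla^\varphi_{\Gamma(\mathcal E,X)},\nabla^\varphi_{\Gamma(\mathcal E,Y)}]$ contributes only terms involving the curvature of $\varphi$ and derivatives in the horizontal directions, which vanish when restricted to the fibre directions probed here, or more precisely contribute terms that are higher order in the odd coordinate $\mathcal E$ and hence do not affect the leading Frobenius pairing (one checks they lie in the span of further $\nabla^\varphi$ of bosonic vectors with odd coefficients that match the superspace structure, not obstruct it). Under the identification $\mathcal B\cong\tau^*T\check M$ via $\nabla^\varphi$, the upshot is $f_F(X\wedge Y)=\Gamma(X,Y)$, i.e.\ $C_F=\Gamma$.

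Having $C_F=\Gamma$, surjectivity of the Frobenius map (maximal non-integrability) is inherited directly from the non-degeneracy of $\Gamma\colon S\otimes S\to V$ supplied by theorem~\ref{thm:gamma-real}: the induced $\mathcal O_M$-linear map $\wedge^2_{\mathcal O_M}\mathcal F\to\mathcal B$ is fibrewise the surjection $\Sym^2(S)\to V$, so $\mathcal F$ is a genuine spinorial distribution and $M$ is a superspace; and since $C_F$ coincides with the $\Gamma$-induced map, $M$ is a spinorial superspace by definition, with the $\Spin(V,q)$-structure being the one on $Q=\check Q\times\Pi S$ already exhibited above. The main obstacle is the bracket computation in the middle paragraph — specifically, keeping careful track of (i) the factor $\tfrac12$ and the symmetrisation, and (ii) showing that the horizontal-horizontal and curvature terms do not spoil the identification $C_F=\Gamma$; this is where the choice of the coefficient $\tfrac12$ in $\delta_\mathcal F$ is pinned down, and it is worth doing the local coordinate computation honestly even though the surrounding structure is coordinate-free.
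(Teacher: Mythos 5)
Your proposal is correct and follows essentially the same route as the paper: a local coordinate-frame computation of $[\delta_\mathcal F(\partial_\mu),\delta_\mathcal F(\partial_\nu)]$ in which the vertical--vertical bracket vanishes, the two cross terms each contribute $\tfrac12\nabla^\varphi_{\Gamma(\partial_\mu,\partial_\nu)}$ because $\mathcal E$ is linear in the fibre coordinates, and the horizontal--horizontal term drops out on the body since its coefficients vanish there. The additional checks you include (the splitting $\mathcal T_M=\mathcal B\oplus\mathcal F$ from diagram~\eqref{eq:split-tangent-diagram} and surjectivity of the Frobenius map via non-degeneracy of $\Gamma$) are left implicit in the paper but are consistent with it.
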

\begin{proof}
We need to show that upon restricting to the body, the commutator coincides with the $\Gamma$ map of representations. To that end, we compute
\begin{equation}\label{eq:commutator-relation-proof}
\left.[\delta_\mathcal F (X),\delta_\mathcal F (Y)]\right\vert_{\check M} = 
\left.[X+\frac{1}{2}\nabla_{\Gamma(\mathcal E, X)}^\varphi,Y+\frac{1}{2}\nabla_{\Gamma(\mathcal E, Y)}^\varphi]\right\vert_{\check M}
\end{equation}
for $X,Y\in\Omega^0(\Pi S\check M, \tau^*\Pi S\check M)$.
We lift the vertical vector fields $X,Y$ to $\Spin(V,q)$-invariant vector fields on $Q=\check Q\times\Pi S$ according to the diagram in equation~\eqref{eq:split-tangent-diagram}.
We may equip $S$ with a basis $\{f_\mu\}$, inducing coordinate functions $\{s^\mu\}$, and it follows that 
\begin{equation}
f_\mu = \pdiff{}{s^\mu} \eqqcolon \partial_\mu
\end{equation}
is a coordinate frame.
Note that $\Spin(V,q)$ acts linearly on $\Pi S$, such that coordinate frames are taken to coordinate frames under the $\Spin(V,q)$ action.
We first evaluate equation~\eqref{eq:commutator-relation-proof} on constituents of the coordinate frame.
The expression for the commutator simplifies as coordinate vector fields commute; hence,
\begin{align}\label{eq:commutator-relation-2}
\left[\partial_\mu+\frac{1}{2}\nabla_{\Gamma(\mathcal E, \partial_\mu)}^\varphi,\partial_\nu+\frac{1}{2}\nabla_{\Gamma(\mathcal E, \partial_\nu)}^\varphi\right]= 
\frac{1}{2}[\partial_\mu,\nabla_{\Gamma(\mathcal E, \partial_\nu)}^\varphi] + 
\frac{1}{2}[\nabla_{\Gamma(\mathcal E, \partial_\mu)}^\varphi,\partial_\nu] + 
\frac{1}{4}[\nabla_{\Gamma(\mathcal E,\partial_\mu)}^\varphi,\nabla_{\Gamma(\mathcal E,\partial_\nu)}^\varphi].
\end{align}
Recall that using coordinate frames on $\Pi S$, the Euler vector field can be written as
\begin{equation}
\mathcal E = \sum_\lambda s^\lambda \partial_\lambda.
\end{equation}
Now, we can evaluate the individual terms of equation~\eqref{eq:commutator-relation-2}. Note that upon restricting to $\check M$, we are only interested in terms that are non-vanishing along the inclusion. In particular, the coordinates $s^\mu$ vanish, such that
\begin{align}
[\partial_\mu,\frac{1}{2}\nabla_{\Gamma(\mathcal E, \partial_\nu)}^\varphi] &= \frac{1}{2}
\sum_\lambda [\partial_\mu, s^\lambda \nabla_{\Gamma(\partial_\lambda, \partial_\nu)}^\varphi]\\
&= \frac{1}{2}\nabla_{\Gamma(\partial_\mu,\partial_\nu)} \mod \Pi S.
\end{align}
Upon restricting to the body, it follows
\begin{equation}
[\partial_\mu,\nabla_{\Gamma(\mathcal E, \partial_\nu)}^\varphi]|_{\check M} =(\psi_S^*\Gamma)(\partial_\mu|_{\check{M}}, \partial_\nu |_{\check{M}})
\end{equation}
as after restriction, the structure sheaf is reduced to ordinary functions on $\check M$, where the covariant derivative reduces to the ordinary differential.
The term involving $[\nabla_{\Gamma(\mathcal E,\partial_\mu)}^\varphi,\nabla_{\Gamma(\mathcal E,\partial_\nu)}^\varphi]$ vanishes along the body as it has coefficients that are linear in $\Pi S$.
Putting everything together, we find
\begin{equation}
\left.[\delta_\mathcal F(\partial_\mu), \delta_\mathcal F(\partial_\nu)]\right\vert_{\check M} = (\psi_S^*\Gamma)(\partial_\mu|_{\check M}, \partial_\nu|_{\check M}),
\end{equation}
proving the claim.
\end{proof}
To implement the functor-of-points approach, all we need to do is to impose an auxiliary base $A$, which we will choose to be of the form $A=(\pt, \wedge^\bullet(\mathbb R^a)^\vee)$. Thus, we view the trivialised fibration
\begin{equation}
M\times A\xrightarrow{\pr_A} A
\end{equation}
as a spinorial superspace, parametrised by the auxiliary base $A$.\par
\begin{remark}
    Due to Batchelor's theorem~\cite{batchelor1984graded}, any \emph{trivial} family of supermanifolds, i.e., a supermanifold over a base $A$ with the submersion given by $M\times A \xrightarrow{\pr_A} A$, is non-canonically isomorphic to a split one. It can be shown that in this case, all spinorial superspaces are non-canonically isomorphic to split ones. \par
    An analogous statement might hold true in the category of supermanifolds that only contain \emph{submersed} bases $A$ by replacing the topological space $\underline M$ with an underlying ordinary spacetime manifold as discussed in the preceding section. However, such a `relative version' of Batchelor's theorem remains to be worked out.
\end{remark}
As a concrete example, let us apply the preceding notions to an affine space.
\begin{example}[Affine Superspace]
    Let $(\check M,\check g)$ be the affine space modelled by $(V,q)$. In this case, the preceding construction becomes particularly simple: Note that we have a canonical trivialisation of the tangent bundle
    \begin{equation}
        T\check M \cong \check M\times V\xrightarrow{\pr}\check M,
    \end{equation}
    inducing the trivialisation of the bundle of orthogonal frames
    \begin{equation}
        \Fr^{\SO(V,q)}(T\check M) \cong \check M\times\SO(V,q)\xrightarrow{\pr} M.
    \end{equation}
    The lift $\Spin(V,q)\to \SO(V,q)$ induces the corresponding spin structure $\check Q\to \check M$. We choose a spinorial representation $S$ of $\Spin(V,q)$, and obtain the supermanifold
    \begin{equation}
        M=\check M\times\Pi S.
    \end{equation}
    The functions on $M$ are all of the form
    \begin{equation}
        f = \sum_I f_I \xi^I\in \mathcal O _M,\quad f_I\in C^\infty(\check M),\quad \xi^I\in\wedge^\bullet(S^\vee),
    \end{equation}
    where $I$ is a multiindex.
    The tangent bundle decomposes as
    \begin{equation}
        TM\cong M\times (V \oplus \Pi S)\to M.
    \end{equation}
    Let us fix (global) generators $\{\partial_i\}$ corresponding to an orthonormal basis $\{e_i\}$ of $V$, and $\{\partial_\mu\}$ corresponding to a basis $\{f_\mu\}$ of $\Pi S$. Let ${\Gamma_{\mu\nu}^i} e_i= \Gamma(f_\mu,f_\nu)$. In these coordinates, the structure distribution is given by
    \begin{equation}
        \mathcal F=\vspan_{\mathcal O _M}\big\{D_\mu = \partial_\mu + \frac12\xi^\nu{\Gamma_{\mu\nu}}^i\partial_i:\mu=1,\dots,\dim(S)\big\},
    \end{equation}
    such that $[D_\mu,D_\nu] ={\Gamma_{\mu\nu}}^i\partial_i$. This affine superspace constitutes a local model for general spinorial superspaces.
\end{example}
The case where the signature of $q$ is $(1,m-1)$ is referred to as \emph{super Minkowski space}, and is the running example in physics literature, see, for example,~\cite{castellani1991supergravity}.
\subsection{Computing Integrals}
In local spin frames on the total space $\Pi S\check M$, that is, sections of $Q=\check Q \times \Pi S \to \Pi S\check M$, we have
\begin{equation}
q=(\tau^*(e_1,\dots,e_m),\delta_\mathcal F (\partial_1)^\vee,\dots,\delta_\mathcal F (\partial_n)^\vee),
\end{equation}
where the $(e_i)$ are a local frame on $\check M$, i.e., a local section of $\check Q\to \check M$. In what follows, we identify frames and coframes by dualising the bases of the respective distributions.
Note that there exists a $\GL(m|n)$ transformation of unit Berezinian relating $q$ with a frame that has  the $\partial_\mu$ as odd generators (rather than $\delta_\mathcal F(\partial _\mu$), given explicitly by
\begin{equation}
\Psi : (\tau^*(e_1,\dots,e_n),\partial_1,\dots,\partial_n)\mapsto (e_1,\dots,e_n,\delta_\mathcal F (\partial_1),\dots,\delta_\mathcal F (\partial_n)),
\end{equation}
or, as a matrix,
\begin{equation}
\Psi = 
\left(\begin{array}{ c | c }
    \id & \Gamma(\mathcal E,\cdot ) \\
    \hline
    0 & \id
\end{array}\right).
\end{equation}
In particular, it holds:
\begin{corollary}
The Berezinian section associated to a spin frame $q$ is the same as the one associated to $q.\Psi^{-1}$ upon embedding $Q$ into the bundle of $\GL(m|n)$ frames.
\end{corollary}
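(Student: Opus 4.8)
The plan is to deduce the corollary directly from the transformation law of the Berezinian under a change of local frame, together with the observation — already recorded in the text above — that the matrix $\Psi$ has unit Berezinian. Recall that $\Ber_M$ is the line bundle associated to the $\GL(m|n)$-frame bundle $\Fr(TM)$ through the Berezinian character $\ber\colon\GL(m|n)\to\GL(1|0)$. Embedding $Q=\check Q\times\Pi S$ into $\Fr(TM)$ along $\Spin(V,q)\hookrightarrow\GL(m|n)$, a spin frame $q$ over an open set $U\subset\underline M$ becomes a local section of $\Fr(TM)$, and thereby determines a local generator $\beta_q\in\Gamma(U,\Ber_M)$: namely the one whose associated $\GL(m|n)$-equivariant function on $\Fr(TM)|_U$ takes the value $1$ at $q$. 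Note that $q.\Psi^{-1}$ is again a local section of $\Fr(TM)|_U$ — no longer landing in the image of $Q$, since $\Psi\notin\Spin(V,q)$ — so that $\beta_{q.\Psi^{-1}}$ is defined in exactly the same way.

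First I would record the elementary functoriality $\beta_{q.g}=\ber(g)^{\pm1}\,\beta_q$ for any $g\in C^\infty(U,\GL(m|n))$, which is immediate from the equivariance condition defining sections of an associated bundle; the sign of the exponent depends on the convention chosen for the Berezinian character and is irrelevant to what follows. Next I would compute $\ber(\Psi)$ from the block-upper-triangular form of $\Psi$ displayed above: with both diagonal blocks equal to the identity and vanishing lower-left block, the formula $\ber\begin{pmatrix}A&B\\ C&D\end{pmatrix}=\det(A-BD^{-1}C)\,\det(D)^{-1}$ collapses to $\det(\id)\det(\id)^{-1}=1$. Crucially, there are no nilpotent corrections from the odd entries of the off-diagonal block $\Gamma(\mathcal E,\cdot)$, precisely because the lower-left block vanishes identically; hence $\ber(\Psi^{-1})=\ber(\Psi)^{-1}=1$ as well.

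Combining the two steps with $g=\Psi^{-1}$ gives $\beta_{q.\Psi^{-1}}=\ber(\Psi^{-1})^{\pm1}\beta_q=\beta_q$, which is the assertion; since it is a local statement independent of the chart, the pointwise identities glue to the global one, and in particular the canonical generator $\vol$ of Lemma~\ref{lemma:superspace-berezinian} (when the spinorial representation admits a symplectic form) may be read off in either the frame $q$ or the ``straightened'' frame $q.\Psi^{-1}$. I do not anticipate a genuine obstacle here: the only points demanding a little care are fixing the convention for the Berezinian character once and for all, and checking that $\ber(\Psi)$ is exactly $1$ with no higher-order terms in the odd coordinates — which holds because $\Psi$ is triangular with identity diagonal blocks, so the off-diagonal block never enters the superdeterminant.
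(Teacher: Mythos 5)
Your proposal is correct and follows exactly the route the paper intends: the corollary is stated as an immediate consequence of the preceding remark that $\Psi$ is a $\GL(m|n)$ transformation of unit Berezinian, and you simply make this explicit by computing $\ber(\Psi)=\det(\id)\det(\id)^{-1}=1$ from the block-upper-triangular form (the vanishing lower-left block indeed kills any contribution from $\Gamma(\mathcal E,\cdot)$) and invoking the transformation law of sections of the associated Berezinian line bundle. No gaps; the only addition over the paper's implicit argument is the welcome explicitness about the superdeterminant formula and the frame-change convention.
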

Since $i^*\tau^*(e_1,\dots,e_m) = (\tau\circ i)^*(e_1,\dots,e_m) = (e_1,\dots,e_m)$, the usual formula
\begin{equation}
\int _M f[q] = \int_{\check M} i^*(\partial_1\dots\partial_n f) e^1\wedge\dots\wedge e^m
\end{equation}
remains valid. Furthermore, note that under the pullback by $i$, we may -- after antisymmetrising -- replace the coordinate derivations by the deformed ones again if the odd dimension of the supermanifold is equal to two:
\begin{lemma}
Let $n=2$ be the odd dimension of the superspace. It holds
\begin{equation}
i^*\left(\partial_1 \partial_2 f\right) =
 \frac{1}{2}i^*\left( (\delta_\mathcal F (\partial_1)\circ\delta_\mathcal F ( \partial_2) - 
 \delta_\mathcal F (\partial_2)\circ\delta_\mathcal F ( \partial_1)) f\right).
\end{equation}
\end{lemma}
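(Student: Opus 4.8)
The plan is to pass to a local spin frame and reduce the identity to a short computation with graded derivations, exploiting that the deformation defining $\delta_\mathcal F$ is supported away from the body. Since the assertion is local we may work in a chart of $M=\Pi S\check M$ with odd fibre coordinates $\xi^1,\xi^2$, writing $\partial_\mu=\partial/\partial\xi^\mu$, and in the associated local spin frame the structure functions ${\Gamma_{\mu\nu}}^i$ (constant by $\Spin(V,q)$-equivariance of $\Gamma$) give, with $\nabla^\varphi_i$ the horizontal lift of the frame field $e_i$,
\[
\delta_\mathcal F(\partial_\mu)=\partial_\mu+C_\mu,\qquad C_\mu:=\tfrac12\nabla^\varphi_{\Gamma(\mathcal E,\partial_\mu)}=\tfrac12\,\xi^\nu{\Gamma_{\mu\nu}}^i\,\nabla^\varphi_i .
\]
The decisive structural feature, already visible in the affine example, is that $C_\mu$ is a derivation whose coefficients are linear in the $\xi^\nu$; consequently $i^*$ of anything retaining an undifferentiated factor $\xi^\nu$ vanishes, because $i$ is the zero section.

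Next I would expand the operator in question as
\[
\delta_\mathcal F(\partial_1)\circ\delta_\mathcal F(\partial_2)-\delta_\mathcal F(\partial_2)\circ\delta_\mathcal F(\partial_1)
=(\partial_1\partial_2-\partial_2\partial_1)+(\partial_1 C_2-\partial_2 C_1)+(C_1\partial_2-C_2\partial_1)+(C_1C_2-C_2C_1),
\]
apply it to $f$, and then apply $i^*$, treating the four groups separately. The first group contributes $(\partial_1\partial_2-\partial_2\partial_1)f=2\,\partial_1\partial_2 f$ because $\partial_1$ and $\partial_2$ are odd coordinate derivations and hence anticommute. In the third and fourth groups every summand retains an overall undifferentiated factor $\xi^\nu$ -- for $C_\mu\partial_\nu f$ this is immediate, and for $C_\mu C_\nu f$ it holds since the even operator $\nabla^\varphi_i$ does not remove $\xi$-dependence -- so these groups are annihilated by $i^*$.

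The only group requiring care is $(\partial_1 C_2-\partial_2 C_1)f$, where the odd derivation $\partial_1$ may differentiate the linear coefficient inside $C_2$. Using the graded Leibniz rule $\partial_1(\xi^\nu h)=\delta_1^{\,\nu}h-\xi^\nu\partial_1 h$ one gets $\partial_1(C_2 f)=\tfrac12{\Gamma_{21}}^i\nabla^\varphi_i f-\tfrac12\,\xi^\nu{\Gamma_{2\nu}}^i\partial_1\nabla^\varphi_i f$, whose second term is again proportional to $\xi^\nu$; hence $i^*\bigl(\partial_1(C_2 f)\bigr)=\tfrac12{\Gamma_{21}}^i\,i^*(\nabla^\varphi_i f)$, and symmetrically $i^*\bigl(\partial_2(C_1 f)\bigr)=\tfrac12{\Gamma_{12}}^i\,i^*(\nabla^\varphi_i f)$. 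By the symmetry ${\Gamma_{12}}^i={\Gamma_{21}}^i$ of the $\Gamma$-map (theorem~\ref{thm:gamma-real}) these contributions coincide, so the second group drops out under $i^*$. Collecting the survivors yields $i^*\bigl((\delta_\mathcal F(\partial_1)\circ\delta_\mathcal F(\partial_2)-\delta_\mathcal F(\partial_2)\circ\delta_\mathcal F(\partial_1))f\bigr)=2\,i^*(\partial_1\partial_2 f)$, which is the assertion after division by two.

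The only real obstacle is bookkeeping: correctly tracking the graded Leibniz signs when an odd derivation hits the odd coefficient of $C_\mu$, and checking that replacing the flat coordinate derivative by the horizontal lift $\nabla^\varphi_i$ introduces nothing new -- on functions $\nabla^\varphi$ is just the ordinary differential, so no connection coefficients or curvature enter the surviving terms. It is worth noting where the hypothesis $n=2$ enters: with exactly two odd coordinates present, $\partial_1\partial_2$ is the full odd Berezin derivative appearing in the integration formula, and the only potentially dangerous correction reduces to the single symmetric term $\tfrac12{\Gamma_{12}}^i\nabla^\varphi_i f$, which is precisely what the antisymmetry of the commutator together with the symmetry of $\Gamma$ eliminates.
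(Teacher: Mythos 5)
Your proof is correct and follows essentially the same route as the paper's: expand $\delta_\mathcal F(\partial_\mu)=\partial_\mu+\tfrac12\nabla^\varphi_{\Gamma(\mathcal E,\partial_\mu)}$, discard under $i^*$ every term still carrying an undifferentiated odd coordinate, observe that the surviving cross term $\tfrac12\nabla^\varphi_{\Gamma(\partial_1,\partial_2)}f$ is symmetric in $1\leftrightarrow 2$ and hence cancels in the antisymmetrisation, and use $\partial_1\partial_2=-\partial_2\partial_1$ to double the remaining term. The paper states this more tersely; your version merely makes the four-group bookkeeping and the graded Leibniz step explicit.
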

\begin{proof}
It holds
\begin{equation}
i^*(\delta_\mathcal F(\partial_1)\circ \delta_\mathcal F(\partial_2) f) =
i^*(\partial_1\circ(\partial_2 + \frac{1}{2}\nabla^\varphi_{\Gamma(\mathcal E, \partial_2)} f)=
i^*\left(\left(\partial_1\circ\partial_2 + \frac{1}{2}\nabla^{\varphi}_{\Gamma(\partial_1,\partial_2)}\right) f\right).
\end{equation}
Upon antisymmetrising, the second term vanishes due to the symmetry properties of $\Gamma$. Finally, recall that for coordinate frames, $\partial_1\partial_2 = -\partial_2\partial_1$, concluding the proof.
\end{proof}
To summarise, we have the following theorem:
\begin{theorem}\label{thm:superspace-integration}
Let $n=2$ be the odd dimension of the superspace, and let us denote by $q = (\tau^*(e^1,\dots,e^m),\delta_{\mathcal F} (\partial_1),\delta_{\mathcal F} (\partial_2))$ a spin frame. Let $\vol = [q]$ be the canonical superspace volume form, and let $\check\vol = e^1\wedge\dots\wedge e^m$ be the volume form on the ordinary spacetime. It holds
\begin{equation}
\int_M f\vol = \frac{1}{2}\int_{\check M} i^*\left(\left(\delta_\mathcal F(\partial_1)\circ \delta_\mathcal F(\partial_2) - \delta_\mathcal F(\partial_1)\circ \delta_\mathcal F(\partial_2)\right)f\right)\check\vol.
\end{equation}
\end{theorem}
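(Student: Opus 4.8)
The plan is to assemble the statement from the three ingredients collected immediately before it: invariance of the Berezinian section under the change of frame $\Psi$, the Berezin fibre-integration formula along $\tau\colon M\to\check M$, and the $n=2$ antisymmetrisation identity of the preceding lemma. No new geometric input is needed; the work is purely in chaining these together correctly.

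First I would reduce to the coordinate frame. By the Corollary above, since $\Psi$ is block upper-triangular with identity blocks on the diagonal we have $\ber\Psi=1$, so the Berezinian section attached to the spin frame $q=(\tau^*(e^1,\dots,e^m),\delta_{\mathcal F}(\partial_1),\delta_{\mathcal F}(\partial_2))$ coincides with the one attached to $q'=(\tau^*(e^1,\dots,e^m),\partial_1,\partial_2)$, whose odd members are the coordinate derivations $\partial_\mu=\partial/\partial s^\mu$ on the fibre $\Pi S$. Hence $\vol=[q]=[q']$, and it suffices to integrate $f$ against $[q']$.

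Next I would apply the Berezin integration formula displayed just above the statement to the frame $q'$: expanding $f=\sum_I f_I\,\xi^I$ in fibre coordinates, all terms but the top monomial are killed by $\partial_1\partial_2$ followed by $i^*$, which gives $\int_M f\,[q']=\int_{\check M} i^*(\partial_1\partial_2 f)\,e^1\wedge\dots\wedge e^m$; and since $\tau\circ i=\id_{\check M}$ we have $i^*\tau^*(e^1,\dots,e^m)=(e^1,\dots,e^m)$, so the right-hand side equals $\int_{\check M} i^*(\partial_1\partial_2 f)\,\check\vol$. Finally I would substitute the identity of the preceding lemma, $i^*(\partial_1\partial_2 f)=\frac12\, i^*\bigl((\delta_{\mathcal F}(\partial_1)\circ\delta_{\mathcal F}(\partial_2)-\delta_{\mathcal F}(\partial_2)\circ\delta_{\mathcal F}(\partial_1))f\bigr)$, which holds because the correction terms $\frac12\nabla^{\varphi}_{\Gamma(\partial_1,\partial_2)}$ are symmetric in the two slots and so cancel under antisymmetrisation while $\partial_1\partial_2=-\partial_2\partial_1$ on coordinate frames. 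Inserting this into the previous display gives the asserted formula.

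I do not expect a serious obstacle, since every nontrivial assertion has already been supplied. The one step that deserves care is the first: one must confirm that $\Psi$ as written is genuinely the $\GL(m|n)$-valued change-of-frame matrix relating $q$ and $q'$ inside the ambient frame bundle --- that is, that the column $\Gamma(\mathcal E,\cdot)$ precisely encodes $\delta_{\mathcal F}(\partial_\mu)=\partial_\mu+\frac12\nabla^{\varphi}_{\Gamma(\mathcal E,\partial_\mu)}$ --- and that it is an even automorphism with well-defined unit Berezinian; but this is exactly the content of the Corollary we are allowed to cite, after which the remaining steps are formal.
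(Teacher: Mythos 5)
Your proposal is correct and follows exactly the route the paper intends: the theorem is stated as a summary of the immediately preceding corollary (unit Berezinian of $\Psi$), the displayed Berezin fibre-integration formula, and the $n=2$ antisymmetrisation lemma, chained in precisely the order you give. Note only that the theorem as printed contains a typo --- the second term should read $\delta_{\mathcal F}(\partial_2)\circ\delta_{\mathcal F}(\partial_1)$ --- and your proof correctly uses the antisymmetrised expression.
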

If $n\ge 4$, correction terms involving the Riemannian curvature of $(\check M, \check g)$ will appear. However, in our treatment of super Yang-Mills theories, the odd dimensions of the superspaces we consider will always be equal to two.

\section{Super Yang-Mills Theories in Dimensions 3 and 4}\label{section:sym}
We finally have all the required tools to invariantly define pure super Yang-Mills (sYM) theory on spinorial superspaces. We first give a general overview of the expected component fields, and then explicitly treat the $\mathcal N = 1$ case in $d=3$ and $d=4$ spacetime dimensions in Minkowski signature. We reduce the superspace Lagrangians to underlying ordinary spacetimes $i:\check M \to M$ in each of the cases and discuss the component fields that appear due to the reduction. In both cases, the reduced Lagrangian is of the form
\begin{equation}
    \check{\mathcal L} (a,\lambda) = -\frac{1}{2}\left(\mu\|F^a\|^2 - \langle\lambda,\Dirac^a\lambda\rangle\right) \check\vol,
\end{equation}
where $\mu=1/2$ in $d=3$ and $\mu=1$ in $d=4$.
Here, $\check\vol$ is the Riemannian volume form on $\check M$, $\lambda$ denotes a real $\ad(i^*P)$-valued spinor field, and $a = i^*\mathsf a$ is a connection on the principal bundle $\check P = i^*P\to\check M$ restricted to the ordinary spacetime. Thus, the reduced Lagrangian is a minimally coupled system of Dirac and Yang-Mills Lagrangians.\par
The upshot of the superspace formulations is that the Lagrangians we consider are manifestly gauge invariant. Furthermore, both the Yang-Mills and the Dirac component fields emerge from a single component of the superspace curvature, emphasising the duality between the spinor field and the \emph{field strength tensor} (the curvature of the reduced connection).\par
We will first define the component fields of the theory. These correspond to certain components of the gauge connection and its curvature. Then, we will give superspace formulations of the super Yang-Mills theories in $d=3$ and $d=4$ and show how to reduce them to an underlying ordinary spacetime manifold. We will restrict the procedure to the case where the family of supermanifolds is trivial, i.e., of the form $M\times A \to A$.
\subsection{The Component Fields}
To define super Yang-Mills theories, we consider some fixed principal $G$-bundle $P\to M$ over the superspace. Here, $G$ is an ordinary Lie group. As the form of the superspace Lagrangian will depend on the dimension and signature of the infinitesimal model $(V,q)$ of the ordinary spacetime, we shall first discuss the fields that arise upon restricting the super Yang-Mills fields to the ordinary spacetime via the embedding $i:\check M\to M$. The space of relevant fields will be given by the gauge constrained connections
\begin{equation}
\mathfrak F^{sYM} = \mathfrak F_P = \{\mathsf a \in\mathcal A _P : \langle\mathcal F\wedge\mathcal F,F^\mathsf a\rangle = 0\},
\end{equation}
that is, connections that are flat along the spinorial distribution $\mathcal F \subset\mathcal T _M$. Here,
\begin{equation}
F^\mathsf a = \diff\mathsf a + \frac{1}{2}[\mathsf a \wedge \mathsf a]\in\Omega^2(P,\mathfrak g)) \cong \Omega^2(M,\ad(P))
\end{equation}
is the curvature of the connection $\mathsf a$. Due to the gauge constraint, it decomposes according to
\begin{equation}
F^\mathsf a = f^\mathsf a \oplus \phi^\mathsf a \in 
\Omega^{2|0}(P,\mathfrak g)^{G}_{\hor}\oplus
\Omega^{1|1}(P,\mathfrak g)^{G}_{\hor}
\end{equation}
as the $0|2$-component vanishes. We can dualise the $\tilde\Gamma$ map to give rise to a morphism $\tilde\gamma:\mathcal F^\vee\otimes \mathcal B^\vee\to \mathcal F$, allowing us to produce an $\ad(P)$-valued section of the spinorial distribution $\mathcal F$ by means of the duality pairing
\begin{equation}
\psi^\mathsf a\coloneqq\frac{1}{\sqrt n} \langle \phi , \tilde\gamma\rangle
\in \Omega^0(M,\ad(P))\otimes\mathcal F^\vee \cong
\Omega^0(Q\times_M P, \Pi S\otimes \mathfrak g)^{\Spin(V,q)\times G}_{\hor}.
\end{equation}
This superspace spinor will play a central role in the $d=3$ and $d=4$ super Yang-Mills theories.
We now restrict the theory to the ordinary spacetime via the pullback $i^*P\eqqcolon \check P \to \check M$ and obtain the following component fields of the theory:
\begin{definition}
The \emph{component fields} of the sYM theory are given by
\begin{itemize}
\item an ordinary connection $a=i^*\mathsf a \in\mathcal A _{\check P}$,
\item an $\ad(\check P)$-valued spinor field $\lambda = (\psi_S^{-1}\otimes \id) (\psi^\mathsf a|_{\check M})$,
\end{itemize}
where $\psi_S :\Omega^0(\check M, S\check M) \to i^*\mathcal F$ is the isomorphism from lemma~\ref{lemma:distr-iso}.
\end{definition}
These component fields will constitute the relevant fields upon reducing the theory to ordinary spacetime manifolds, and are the only fields appearing in the $d=3$ and $d=4$ sYM theory. In higher dimensions and different signatures, there may occur further component fields.

\subsection{\texorpdfstring{Lorentzian $d=3$ sYM}{Lorentzian d=3 sYM}}
The irreducible real spinorial representation $S$ in $d=3$ with Minkowskian signature is two-dimensional and of real type. The invariant bilinear form extends $g_\mathcal B$ to a super-Riemannian metric $g=g_\mathcal B\oplus g_\mathcal F$ on $TM$. We give the superspace Lagrangian, suitably adapted from~\cite{deligne1999quantum}:
\begin{definition}
The $d=3$ Lorentzian superspace Lagrangian is defined by
\begin{equation}
\mathcal L : \mathfrak F^{\textup{sYM}} \to \Ber_M,\quad
\mathsf a \mapsto \frac{1}{2}\|\psi^\mathsf a\|^2_{\kappa \otimes \epsilon}\vol_g,
\end{equation}
where $\kappa$ denotes the Killing form on $\mathfrak g$, and $\epsilon = g_\mathcal F$ the restriction of the super-Lorentzian metric to the spinorial distribution $\mathcal F\subset \mathcal T _M$.
\end{definition}
Note that this Lagrangian is defined directly on the superspace. Gauge-invariance is immediately clear since $F^\mathsf a$ is $G$-equivariant, and hence, so is $\psi^\mathsf a$, and the Killing form is $G$-invariant.\par
We now want to reduce the field theory to the underlying ordinary spacetime manifold:
\begin{theorem}
Upon restricting the theory to the embedded ordinary spacetime manifold, we obtain the Yang-Mills-Dirac Lagrangian
\begin{equation}
\check{\mathcal L} (a,\lambda) = \left(-\frac{1}{4}\|F^a\|^2 + \frac{1}{2}\langle\lambda,\Dirac^a\lambda\rangle\right)\check\vol,
\end{equation}
where the component fields are given by
\begin{gather}
a = i^*\mathsf a \in\mathcal A _{\check P},\quad \lambda = (\psi_S^{-1}\otimes\id)(\psi|_{\check M})\in\Omega^0(\check M, S\check{M}\otimes\ad(\check P)).
\end{gather}
\end{theorem}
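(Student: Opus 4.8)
The plan is to work in the split model $M = \Pi S\check M$ with $d=3$, $n = \dim S = 2$, and to unpack the superspace Lagrangian $\frac12\|\psi^{\mathsf a}\|^2_{\kappa\otimes\epsilon}\vol_g$ into component fields by (i) writing the Berezinian integrand as an iterated odd derivative via Theorem~\ref{thm:superspace-integration}, and (ii) using the gauge-flatness constraint $\langle\mathcal F\wedge\mathcal F, F^{\mathsf a}\rangle = 0$ together with the Bianchi identity to express all the higher $\theta$-components of $\mathsf a$ in terms of the lowest ones, namely $a = i^*\mathsf a$ and $\lambda = (\psi_S^{-1}\otimes\id)(\psi^{\mathsf a}|_{\check M})$. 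First I would fix a local spin frame $q = (\tau^*(e_i), \delta_{\mathcal F}(\partial_\mu))$ as in the integration section, expand the constrained connection along the odd coordinates $\xi^\mu$ as $\mathsf a = a + \xi^\mu(\cdots) + \tfrac12\xi^\mu\xi^\nu(\cdots)$, and compute $F^{\mathsf a} = f^{\mathsf a}\oplus\phi^{\mathsf a}$ in these coordinates, identifying the $1|1$-component $\phi^{\mathsf a}$ with the superspace spinor $\psi^{\mathsf a} = \tfrac{1}{\sqrt n}\langle\phi,\tilde\gamma\rangle$.

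The computational heart is to apply the integration formula
\begin{equation}
\int_M f\vol = \frac12\int_{\check M} i^*\bigl((\delta_{\mathcal F}(\partial_1)\circ\delta_{\mathcal F}(\partial_2) - \delta_{\mathcal F}(\partial_2)\circ\delta_{\mathcal F}(\partial_1))f\bigr)\check\vol
\end{equation}
to $f = \frac12\|\psi^{\mathsf a}\|^2_{\kappa\otimes\epsilon}$. The two deformed odd derivatives hitting $\|\psi^{\mathsf a}\|^2$ produce, by the Leibniz rule, a term where both derivatives land on one factor of $\psi^{\mathsf a}$ and a term where they split. The split term, after restriction to $\check M$ and using the Clifford relations of Lemma~\ref{lemma:gamma-superspace} to convert $\delta_{\mathcal F}\delta_{\mathcal F}$ acting on the spinor into a Dirac-type operator $\gamma^i\nabla_i$ plus the gauge connection, should yield precisely $\langle\lambda, \Dirac^a\lambda\rangle$; this is where the identification $[D_\mu, D_\nu] = \Gamma_{\mu\nu}{}^i\partial_i$ and the dualised $\tilde\gamma$ do their work, mimicking the standard superspace computation but now coordinate-free. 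The term where both derivatives hit the same $\psi^{\mathsf a}$ factor feeds, via the Bianchi identity $\diff^{\mathsf a}F^{\mathsf a} = 0$ restricted to the flat-along-$\mathcal F$ directions, the second derivative of $\phi^{\mathsf a}$ back into the $2|0$-component $f^{\mathsf a}$, whose restriction is the ordinary field strength $F^a$; tracking the spinor contractions and the $\Gamma$-$\tilde\Gamma$ normalisations should produce $-\tfrac14\|F^a\|^2$, with the coefficient $\mu = 1/2$ of the general formula absorbed into this step (note the overall $-\tfrac12$ in $\check{\mathcal L}(a,\lambda) = -\tfrac12(\mu\|F^a\|^2 - \langle\lambda,\Dirac^a\lambda\rangle)\check\vol$ combined with $\mu=1/2$ gives exactly $-\tfrac14\|F^a\|^2 + \tfrac12\langle\lambda,\Dirac^a\lambda\rangle$).

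I would organise the proof as a sequence of lemmas: one solving the constraint $\langle\mathcal F\wedge\mathcal F, F^{\mathsf a}\rangle = 0$ to parametrise $\mathfrak F^{sYM}$ near $\check M$ by $(a,\lambda)$ plus gauge; one computing $i^*(\delta_{\mathcal F}(\partial_\mu)\psi^{\mathsf a})$ in terms of $\nabla^a\lambda$ and $f^a$ using the Bianchi identity; and a final lemma assembling the integrand. Throughout, the spin-frame gauge freedom $\Spin(V,q)\times G$ must be used to kill spurious components, and Lemma~\ref{lemma:restriction-levi-civita} guarantees that the connection induced on $S\check M$ is the Levi-Civita spin connection so that $\Dirac^a$ is the honest Dirac operator.

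\textbf{Main obstacle.} The hardest part will be bookkeeping the spinor index contractions in the split term so that the Fierz-type rearrangement produced by the two deformed derivatives collapses exactly to $\langle\lambda,\Dirac^a\lambda\rangle$ with the correct sign and coefficient; in coordinate treatments this is the notorious point where gamma-matrix identities specific to $d=3$ enter, and here it must be done via the abstract Clifford relations of Lemma~\ref{lemma:gamma-superspace} and the symmetry type of the form $\epsilon = g_{\mathcal F}$, checking that no anomalous boundary or curvature term survives (which is exactly why $n=2$ is needed, cf. the remark after Theorem~\ref{thm:superspace-integration}). A secondary subtlety is verifying that the parametrisation of constrained connections by $(a,\lambda)$ is complete — i.e. that the Bianchi identity really does determine the $\xi^2$-component of $\mathsf a$ and that there is no independent auxiliary field — which is the superspace incarnation of the on-shell closure of the $\mathcal N=1$, $d=3$ vector multiplet.
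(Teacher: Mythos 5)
Your overall strategy coincides with the paper's: reduce to the split model, apply the $n=2$ Berezin integration formula of theorem~\ref{thm:superspace-integration} to $\|\psi^{\mathsf a}\|^2_{\epsilon\otimes\kappa}$, expand by the Leibniz rule into a term with both odd derivatives on one factor of $\psi^{\mathsf a}$ and a split term, and identify these with the Dirac and Yang--Mills pieces using the flatness constraint, the Bianchi identity and the Clifford relations. However, you have assigned the two Leibniz terms the wrong roles. In the paper it is the \emph{iterated} derivative $\epsilon^{XY}\nabla_X\nabla_Y\psi$ that restricts to $2\Dirac^a\lambda$ (lemma~\ref{lemma:dirac-laplace}): the contraction with $\epsilon^{XY}$ kills the symmetric part, and the commutator $[X,Y]=\Gamma(X,Y)$ converts the double odd derivative into a single bosonic derivative contracted with $\gamma$, i.e.\ the Dirac operator. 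The \emph{split} term $\epsilon^{XY}(\epsilon\otimes\kappa)(\nabla_X\psi,\nabla_Y\psi)$ is the one that restricts to $\|F^a\|^2$ (lemma~\ref{lemma:spinor-curvature}), because a single odd covariant derivative of $\psi^{\mathsf a}$ restricted to $\check M$ is, via Bianchi and the constraint, the $2|0$-curvature contracted with $\tilde\gamma\circ\gamma(X)$, and pairing two such expressions with $\epsilon^{XY}\epsilon_{WZ}=\delta^X_W\delta^Y_Z-\delta^X_Z\delta^Y_W$ and the trace identities of the Clifford maps yields $\|F^a\|^2$. Your description attaches the Dirac mechanism to the split term and the curvature mechanism to the iterated term; executed as written, the computation would not close.

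A secondary point: the detour through an odd-coordinate expansion $\mathsf a = a + \xi^\mu(\cdots)+\tfrac12\xi^\mu\xi^\nu(\cdots)$ and a full parametrisation of $\mathfrak F^{\mathrm{sYM}}$ by $(a,\lambda)$ modulo gauge is not needed and works against the coordinate-free setup. The paper never solves the constraint; it only \emph{uses} that the $0|2$-component of $F^{\mathsf a}$ vanishes (so that $\iota_X\iota_Y\diff^{\mathsf a}\phi=-\iota_{[X,Y]}f$ for spinorial lifts $X,Y$), and it covariantises the odd derivatives immediately by choosing a spinorial connection $\varphi^s$ and working with the product connection $A=\mathsf a\oplus\varphi^s$ on $P\times_M Q$. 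Whether $(a,\lambda)$ exhausts the constrained connections is irrelevant for computing the restriction of the Lagrangian, so the ``on-shell closure'' worry you raise at the end can be dropped. With the two Leibniz terms swapped back and the coordinate expansion replaced by the covariant argument, your plan becomes the paper's proof.
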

\begin{proof}
Recall that the reduced Lagrangian in terms of local coordinates is obtained by taking derivatives in the odd directions and suitably adapting the volume form, see, e.g.,~\cite{kessler2018supergeometry}.
We would like to circumvent the use of coordinates on superspace and instead solely rely on coordinates on the relevant spinor module as well as $\Spin(V,q)$ frames on $M$. This bears the advantage that we can exploit the representation theory of the $\Spin(V,q)$ group and require calculus only on the vector space $S$.\par
We first sketch the general situation: Any Lagrangian $\mathcal L$ on superspace can be reduced to a Lagrangian $\check{\mathcal L}$ on the embedded ordinary manifold $\check M$ by `integrating out the odd directions', as discussed in references~\cite{deligne1999quantum, kessler2018supergeometry}.
At this point, recall that integration along the odd directions is purely algebraic. It is thus sensible to invoke Batchelor's theorem, together with the remark in section~\ref{sec:split-superspaces}, which tells us that the superspace $M$ is diffeomorphic to a split one of the kind $\Pi S\check M \to \check M$. As the integral is invariant under diffeomorphisms, it thus suffices to compute the claim on such a split superspace. Using the frame bundle, we can write
\begin{equation}
\Pi S\check M = \check Q \times_{\Spin(V,q)} \Pi S,
\end{equation}
arising as an associated bundle using the real irreducible representation $S$. Here, we identify $i^*Q = \check Q\to \check M$ as the spin structure on the body.
Following the discussion in section~\ref{sec:split-superspaces}, we can use lemma~\ref{thm:superspace-integration} to compute the reduced Lagrangian, and find
\begin{equation}
\check{\mathcal L} = \frac{1}{2}i^*\left(\sum_{X,Y}\tilde\epsilon(X^\vee,Y^\vee)X\circ Y \big((\epsilon\otimes\kappa)(\psi^\mathsf a,\psi^\mathsf a)\big)\right)\check{\vol}
\end{equation}
where $X=\delta_\mathcal F (\partial_1)$ and $Y=\delta_\mathcal F (\partial_2)$ for brevity.
Let us pick a spinorial superspace connection $\varphi^s\in\mathcal A_Q^s$ and consider the pullback bundle $P\times_M Q\to M$. Note that $\mathsf a \oplus \varphi\eqqcolon A$ defines a connection that is compatible with both $G$ and the spinorial structure. In what follows, we denote by $\nabla^A\eqqcolon \nabla$ the corresponding covariant derivative. Then
\begin{equation}
\frac{1}{2}X\circ Y \big((\epsilon\otimes\kappa)(\psi,\psi)\big) = X\big( (\epsilon\otimes\kappa)(\nabla_Y\psi,\psi)\big)
= (\epsilon\otimes\kappa)(\nabla_X\nabla_Y\psi,\psi) - (\epsilon\otimes\kappa)(\nabla_Y\psi,\nabla_X\psi).
\end{equation}
Upon pulling this function back to $\check M$, we obtain
\begin{equation}
\frac{1}{4}i^*\left(\sum_{X,Y} \tilde{\epsilon}(\tilde X,\tilde Y)\big((\epsilon\otimes\kappa)(\nabla_X\nabla_Y\psi,\psi) - (\epsilon\otimes\kappa)(\nabla_Y\psi,\nabla_X\psi)\big)\right) = \frac{1}{2}\langle\lambda,\Dirac^a\lambda\rangle - \frac{1}{4}\|F^a\|^2,
\end{equation}
where we made use of lemmas~\ref{lemma:spinor-curvature} and \ref{lemma:dirac-laplace} (see appendix).
\end{proof}
We thus found that upon reducing the theory to the body, the superspace formulation of super Yang-Mills theory reproduces the well-established Dirac-Yang-Mills-Lagrangian.
\subsection{\texorpdfstring{Lorentzian $d=4$ sYM}{Lorentzian d=4 sYM}}\label{subsection:sym-d-4}
The irreducible spinorial representation $S$ of $d=4$ Minkowski space is four-dimensional and of complex type. Upon complexification, it turns into two half-spin representations $S^\mathbb C = S^+\oplus S^-$. Alternatively, $S^\mathbb C$ can be decomposed into chiral and antichiral subspaces
\begin{equation}
S^\mathbb C = S^{1,0}\oplus S^{0,1},
\end{equation} 
obtained by the projections $\pr_{1,0} = \id - \i\otimes J$ and $\pr_{0,1} = \id + \i\otimes J$ as discussed in section~\ref{section:chiral-decomposition}.
The $\mathbb C$-module structure of $S^{1,0}\oplus S^{0,1}$ is thus given by $z.(s,t) = (zs,\bar z t)$, such that the invariant pairing $S^{1,0}\otimes S^{0,1} \to \mathbb C$ is invariant under $\operatorname{U}(1)$, and the non-degenerate bilinear form respects the decomposition $S^{1,0}\oplus S^{0,1}$.\par
Passing to the spinorial distribution via the construction of associated bundles, the complexified spinorial distribution $\mathcal F^\mathbb C$ decomposes into chiral and antichiral components $\mathcal F^\mathbb C = \mathcal F^{1,0}\oplus \mathcal F^{0,1}$, as discussed in section~\ref{section:chiral-decomposition}.
\begin{lemma}
Let $\mathcal O _M^\omega\subset \mathcal O _M \otimes \C$ be the sheaf of functions that are annihilated by $\mathcal F^{0,1}$, and let $\mathcal O _M^{\bar\omega}\subset \mathcal O _M \otimes \C$ be the sheaf of functions that are annihilated by $\mathcal F^{0,1}$. The locally ringed spaces $M^{1,0} = (\underline M,\mathcal O^\omega_M)$ and $M^{0,1} = (\underline M,\mathcal O^{\bar\omega}_M)$ are CS manifolds of dimensions $4|2$.
\end{lemma}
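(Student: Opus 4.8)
The statement is, up to making the dimensions explicit, precisely Lemma~\ref{lemma:chiral-integrable} specialised to $d=4$, so the plan is short. The first step is to verify that Lemma~\ref{lemma:chiral-integrable} applies in this situation: in $d=4$ Minkowski signature the real irreducible spinor module $S$ is of complex type, so after complexification one has the chiral splitting $S^{\mathbb C} = S^{1,0}\oplus S^{0,1}$ into holomorphic and antiholomorphic pieces, hence $\mathcal F^{\mathbb C} = \mathcal F^{1,0}\oplus\mathcal F^{0,1}$. The Frobenius obstructions of the type $\pr_{0,1}([-,-])$ lift to the frame bundle as symmetric $\Spin(V,q)$-equivariant maps $S^{1,0}\otimes S^{1,0}\to S^{0,1}$, $V^{\mathbb C}\otimes V^{\mathbb C}\to S^{0,1}$ and $V^{\mathbb C}\otimes S^{1,0}\to S^{0,1}$, and all of these vanish by the $-1\in\Spin(V,q)$ argument together with Theorem~\ref{thm:gamma-real}, exactly as in the proof of Lemma~\ref{lemma:chiral-integrable}. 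Therefore $\mathcal B^{\mathbb C}\oplus\mathcal F^{1,0}$ and $\mathcal B^{\mathbb C}\oplus\mathcal F^{0,1}$ are integrable, and the CS-manifold Frobenius theorem (Lemma~\ref{lemma:CS-submanifold}) shows that $M^{1,0} = (\underline M,\mathcal O^\omega_M)$ and $M^{0,1} = (\underline M,\mathcal O^{\bar\omega}_M)$ are CS submanifolds of $(\underline M,\mathcal O_M\otimes\C)$; these are the CS submanifolds already produced in Lemma~\ref{lemma:chiral-integrable}.

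It then remains to pin down the dimension. Since $\dim_{\mathbb R}V = 4$ and the Majorana spinor module has $\dim_{\mathbb R}S = 4$, the superspace $M$ has dimension $4|4$, so $(\underline M,\mathcal O_M\otimes\C)$ is a CS manifold of dimension $4|4$ with $\underline M$ a real four-manifold. Under the chiral splitting $\dim_{\mathbb C}S^{1,0} = \dim_{\mathbb C}S^{0,1} = 2$, so the distributions $\mathcal F^{1,0}$ and $\mathcal F^{0,1}$ have rank $0|2$. By construction $\mathcal O^\omega_M$ consists of the functions constant along the purely odd, rank $0|2$ foliation integrating $\mathcal F^{0,1}$; in the local Frobenius normal form furnished by Lemma~\ref{lemma:CS-submanifold} this removes two odd coordinates and leaves the four even ones untouched, so $M^{1,0}$ has dimension $4|2$. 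The argument for $M^{0,1}$ is identical after exchanging the roles of $\mathcal F^{1,0}$ and $\mathcal F^{0,1}$.

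The only point that requires a moment of care is the bookkeeping in this last step: one must be sure that quotienting by a rank $0|2$ purely odd integrable distribution lowers the odd dimension by two while keeping the even dimension equal to four, i.e. that the Frobenius normal form in the CS category respects the $m|n$ grading in the expected way. Concretely, near each point of $\underline M$ one wants coordinates $(y^a,\theta^i,\eta^j)$ with $a=1,\dots,4$ and $i,j=1,2$ in which $\mathcal F^{0,1}$ is spanned by the $\partial/\partial\eta^j$, so that $\mathcal O^\omega_M$ is generated over the complex-valued smooth functions by the $\theta^i$ alone; the chiral coordinates $y^a$ are complex-valued functions of the four real coordinates of $\underline M$ — a bilinear shift of the original even coordinates by the odd ones — and in particular are not holomorphic, so the body stays a real four-manifold rather than collapsing to a complex surface. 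This normal form is exactly what Lemma~\ref{lemma:CS-submanifold} supplies, so no genuinely new work is needed; everything beyond this bookkeeping is a direct transcription of Lemma~\ref{lemma:chiral-integrable}.
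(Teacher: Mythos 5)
Your proposal is correct and follows the same route as the paper, whose entire proof is the single sentence ``This follows directly from lemma~\ref{lemma:chiral-integrable}.'' The only thing you add is the explicit dimension bookkeeping ($\dim_\C S^{1,0}=\dim_\C S^{0,1}=2$, so removing the rank $0|2$ distribution $\mathcal F^{0,1}$ from the $4|4$ CS manifold leaves $4|2$), which the paper leaves implicit and which you carry out correctly.
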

\begin{proof}
This follows directly from lemma~\ref{lemma:chiral-integrable}.
\end{proof}
In the same way, one verifies that the sheaf of sections annihilated by $\mathcal F^{1,0}$ defines a complementary CS manifold of dimension $4|2$.\par
Moving on, we shall work in the complexified description and take real parts. Since the complexified spinor $(\psi^\mathsf a)^\mathbb C$ arises from a real representation of $\Spin(V,q)$, we obtain a reality condition, which reads
\begin{equation}
(\psi^\mathsf a)^\mathbb C = 
\chi^\mathsf a\oplus \overline{\chi^\mathsf a}
\in\Omega^0\big(M,\ad(P)^\mathbb C\big) \otimes_{\mathcal O _M} \big((\mathcal F^{1,0})^\vee \oplus (\mathcal F^{0,1})^\vee\big).
\end{equation}
In this complexified picture, we also complexified the Lie algebra $\mathfrak g$ of the structure group $G$, and the tensor products are taken over $\mathbb C$. Note that the real form inside $\mathfrak g^\mathbb C$ is imaginary / skew-Hermitean for compact matrix Lie groups. We can now define the $d=4$ sYM Lagrangian:
\begin{definition}
The $d=4$ Lorentzian superspace sYM Lagrangian is given by
\begin{gather}
\mathcal L : \mathfrak F^{\textup{sYM}} \to\Ber_M^{1,0}\oplus\Ber_M^{0,1},\\
\mathcal L (\mathsf a) = \frac{1}{4}\big(\|\chi^\mathsf a\|^2_{\epsilon\otimes\kappa}\vol^{1,0} + \|\overline{\chi^\mathsf a}\|^2_{\overline{\epsilon\otimes\kappa}}\vol^{0,1}\big) = \frac{1}{2}\operatorname{Re}\big(\|\chi^\mathsf a\|^2_{\epsilon\otimes\kappa}\vol^{1,0}\big)
\end{gather}
where the chiral resp.~antichiral sections of the respective Berezinian bundles are interpreted as in section~\ref{section:chiral-decomposition}.
\end{definition}
As in the $d=3$ case, the Lagrangian is manifestly gauge invariant. We also have similar reduction results upon restricting the theory to an ordinary spacetime manifold:
\begin{theorem}
The $d=4$ Lorentzian sYM Lagrangian reproduces the component Lagrangian
\begin{equation}
\check{\mathcal L}(a,\psi,E) = -\frac{1}{2}\left(\|F^a\|^2 - \langle\lambda,\Dirac^a\lambda\rangle\right)\check{\vol},
\end{equation}
with the component fields
\begin{gather}
a = i^*\mathsf a \in\mathcal A _{\check P},\quad \lambda = (\psi_S^{-1}\otimes\id)(\psi|_{\check M})\in\Omega^0(\check M, S\check{M}\otimes\ad(\check P)).
\end{gather}
\end{theorem}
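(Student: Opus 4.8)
The plan is to mirror the $d=3$ proof but track the extra chiral/antichiral bookkeeping. First I would reduce to the split case exactly as before: by Batchelor's theorem and the remark in section~\ref{sec:split-superspaces}, $M\cong\Pi S\check M$, and since Berezinian integration is algebraic along the odd directions, it suffices to compute on the split model $\check Q\times_{\Spin(V,q)}\Pi S$. Here, however, the odd dimension is $n=4$, so theorem~\ref{thm:superspace-integration} does not apply directly; instead I would decompose $S^\C=S^{1,0}\oplus S^{0,1}$ and integrate each of $\|\chi^{\mathsf a}\|^2_{\epsilon\otimes\kappa}\vol^{1,0}$ and its conjugate over the respective CS submanifolds $M^{1,0}$, $M^{0,1}$ of odd dimension $2$, for which the $n=2$ formula does apply. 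By the reality condition $(\psi^{\mathsf a})^\C=\chi^{\mathsf a}\oplus\overline{\chi^{\mathsf a}}$ the two contributions are complex conjugates, so the total is $\tfrac12\Re$ of the chiral piece, and it is enough to evaluate that one.

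Next I would set up the combined connection: pick a spinorial connection $\varphi^s\in\mathcal A^s_Q$, form $A=\mathsf a\oplus\varphi^s$ on $P\times_M Q$, write $\nabla=\nabla^A$, and expand
\begin{equation}
\tfrac12 X\circ Y\big((\epsilon\otimes\kappa)(\chi,\chi)\big) = (\epsilon\otimes\kappa)(\nabla_X\nabla_Y\chi,\chi) - (\epsilon\otimes\kappa)(\nabla_Y\chi,\nabla_X\chi)
\end{equation}
for $X=\delta_\mathcal F(\partial_1)$, $Y=\delta_\mathcal F(\partial_2)$ spanning $\mathcal F^{1,0}$, exactly as in $d=3$. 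Pulling back by $i$ and antisymmetrising in $X,Y$, the curvature term $(\epsilon\otimes\kappa)(\nabla_{[X,Y]}\chi,\chi)$ is controlled by lemma~\ref{lemma:spinor-curvature}: the commutator $[X,Y]$ along the bosonic distribution is the $\Gamma$-map by the spinorial-superspace axiom, so the curvature $F^A([X,Y],\cdot)$ contributes $F^a$ through $\Gamma$, and the torsion-free part of $\varphi^s$ contributes the Riemannian curvature which assembles (via lemma~\ref{lemma:dirac-laplace}, a Lichnerowicz-type identity) into $\langle\lambda,\Dirac^a\lambda\rangle$. The quadratic term $-(\epsilon\otimes\kappa)(\nabla_Y\chi,\nabla_X\chi)$ antisymmetrised produces $\langle\phi^{\mathsf a}\wedge\phi^{\mathsf a}\rangle$-type terms which, after identifying $\phi^{\mathsf a}$ with $F^a$ via $\tilde\gamma$ and using the Clifford relations of lemma~\ref{lemma:gamma-superspace}, collapse to $\|F^a\|^2$; the chiral projection is what fixes the coefficient $\mu=1$ rather than $\mu=\tfrac12$.

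The main obstacle I expect is the careful accounting of the chiral projectors $\pr_{1,0}=\id-\i\otimes J$ when restricting the $V^\C\otimes S^{1,0}\to S^{0,1}$ and $S^{1,0}\otimes S^{1,0}\to S^{0,1}$ pieces of the relevant bilinear maps — lemma~\ref{lemma:chiral-integrable} already tells us the offending Frobenius components vanish, so $[\delta_\mathcal F(\partial_1),\delta_\mathcal F(\partial_2)]$ stays inside $\mathcal B^\C\oplus\mathcal F^{1,0}$ and one may integrate holomorphically, but one must check that the spinorial connection $\varphi^s$ genuinely preserves the decomposition $S^\C=S^{1,0}\oplus S^{0,1}$ (it does, being a $\Spin^\C(V,q)$-connection, as noted in section~\ref{section:chiral-decomposition}) so that $\nabla$ does not mix $\chi$ with $\overline{\chi}$. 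Once that is in place, the computation is formally identical to the $d=3$ one with $\mathcal F$ replaced by $\mathcal F^{1,0}$ and a factor adjustment, and the result follows; the final overall sign and the coefficient $-\tfrac12$ come from combining the $\tfrac14$ prefactor of the Lagrangian, the $\tfrac12$ from theorem~\ref{thm:superspace-integration}, the $\Re$-doubling, and the Clifford-trace normalisations.
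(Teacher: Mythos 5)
Your proposal follows essentially the same route as the paper: reduce to the split model, integrate the chiral piece over the odd-dimension-two chiral CS submanifold with the $n=2$ formula and add the complex conjugate, form $A=\mathsf a\oplus\varphi^s$, expand $X\circ Y\big((\epsilon\otimes\kappa)(\chi,\chi)\big)$ into a second-order term (giving the Dirac contribution) and a first-order-squared term (giving $\|F^a\|^2$), and assemble the coefficients. The architecture is right, but three details are off or missing. First, the one genuinely new $d=4$ ingredient is that $\Gamma$ crosses chiralities ($S^+\otimes S^-\to V$), so the chiral spinor is built from the \emph{antichiral} curvature component, $\chi=\langle\bar\zeta,\tilde\gamma\rangle$, and the restricted pairing $\epsilon(\chi|_{\check M},\Dirac^a\chi|_{\check M})$ must be reinterpreted via the Hermitean form intertwining $\mathcal F^{1,0}$ and $\mathcal F^{0,1}$ as $\langle\bar\chi|_{\check M},\Dirac^a\chi|_{\check M}\rangle$ before it can be recognised as $\langle\lambda,\Dirac^a\lambda\rangle$; your sketch never addresses this. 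Second, the quadratic term does not produce ``$\phi^{\mathsf a}\wedge\phi^{\mathsf a}$''-type expressions: by Bianchi, $\nabla_X\chi|_{\check M}=-\tfrac{1}{\sqrt n}\langle f,\tilde\gamma\circ\gamma(X)\rangle|_{\check M}$ involves the $2|0$-component $f^{\mathsf a}$ (whose restriction is $F^a$), whereas $\tilde\gamma$ pairs $\phi^{\mathsf a}$ with the \emph{spinor}, not with $F^a$. Third, no Riemannian curvature or Lichnerowicz identity enters the second-order term: the paper's lemma~\ref{lemma:d-4-dirac} is a direct Cartan-formula and Bianchi computation showing $\epsilon^{XY}\nabla_X\nabla_Y\chi|_{\check M}$ is (a multiple of) $\Dirac^a(\chi|_{\check M})$, with the gauge curvature $\zeta$ supplying the first-order derivative through the $\Gamma$-map. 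These are repairable slips rather than a failure of the method, but as written the mechanism by which the Dirac and Yang-Mills terms emerge is misattributed.
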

\begin{proof}
We continue working in the complexified picture.
The $1|1$-component $\phi^\mathsf a$ of the curvature decomposes into $(\mathcal B^\C)^\vee  \wedge (\mathcal F^+)^\vee $- and $ (\mathcal B^\C)^\vee  \wedge (\mathcal F^-)^\vee$-components, which we shall denote by $\phi^\mathsf a = \zeta\oplus\bar\zeta$. Bianchi's identity now takes the form
\begin{equation}
\diff^\mathsf a f + \diff^\mathsf a \zeta + \diff^\mathsf a \bar\zeta = 0.
\end{equation}
On the complexified spinor modules, the $\Gamma$ map is of the type $S^+\otimes S^-\to V$ -- thus,
\begin{equation}
\chi = \langle\bar\zeta,\tilde\gamma\rangle,\quad \bar\chi = \langle\zeta,\tilde\gamma\rangle.
\end{equation} 
 We shall compute the fibrewise integration of the chiral component $\epsilon(\chi,\chi)\vol$ along the odd directions -- the antichiral component will be its complex conjugate. As in the preceding chapter, we compute the integral fibre-wise on the split superspace. It holds
\begin{equation}
\check{\mathcal L} = \frac{1}{4}i^*\left(\epsilon^{XY} X\circ Y (\epsilon\otimes\kappa) (\chi,\chi)\right)\check\vol + c.c.,
\end{equation}
where $c.c.$ denotes the complex conjugate with respect to the canonical real form arising from the complexification, and $X,Y$ is a frame of the chiral distribution $\mathcal F ^+$. Summation over $X$ and $Y$ is implied. As in the $d=3$ case, we pick a spinorial connection on $\hat Q\to M$, and lift $X$ and $Y$ to horizontal vector fields on the pullback bundle $P\times_M \hat Q \to M$. We shall denote by $A=\mathsf a\oplus \varphi^s$ the connection on the latter bundle, and by $\nabla^A \eqqcolon \nabla$ the induced covariant derivative on the bundle $\ad (P)^\C\otimes\mathcal F^+$.
It holds
\begin{equation}
\left(\epsilon^{XY} X\circ Y \right)(\epsilon\otimes\kappa) (\chi,\chi) = -2\epsilon^{XY} \big( (\epsilon\otimes\kappa) (\nabla_X \chi, \nabla_Y \chi) - (\epsilon\otimes\kappa) (\nabla_X \nabla_Y\chi, \chi) \big).
\end{equation}
The first of the two terms on the right hand side is computed in lemma~\ref{lemma:d-4-curvature}, and its restriction to $\check M$ equates to $\|F^a\|^2$. The restriction to $\check M$ of the second term evaluates to 
\begin{equation}
\epsilon(\chi|_{\check M},\Dirac^a(\chi_{\check M}))).
\end{equation}
Recall that the Hermitean form intertwines $\mathcal F^{1,0}$ and $\mathcal F ^{0,1}$, such that we may identify the preceding equation as the duality pairing
\begin{equation}
\langle \overline{\chi}|_{\check M},\Dirac^a(\chi_{\check M}))\rangle.
\end{equation}
By adding the complex conjugate, we obtain
\begin{equation}
\check{\mathcal L} = -\frac{1}{2}\left( \|F^a\|^2 - \langle\lambda,\Dirac^a \lambda \rangle\right)\check\vol
\end{equation}
as the Hermitean pairing takes $\mathcal F^{1,0}\otimes_\mathbb R\mathcal F^{0,1}\to\mathcal O_M$.
\end{proof}
Note that in physics literature, one often finds an additional $\ad(\check P)\otimes\mathfrak u(1)$-valued component field $E$. This is due to the fact that in the corresponding computations, a musical isomorphism is used in-between to turn $\lambda$ into a one-form on superspace.
\section{Conclusion and Outlook}In this document, we established a mathematically precise and concise notion of spinorial superspaces
by means of $\Spin(V,q)$-structures on supermanifolds. We constructed a rather general framework
in which Lagrangian field theories on these superspaces can be reduced to the underlying ordinary
spacetime manifold, giving rise to component fields that are well-known in the physics literature.
Finally, we applied the results to $\mathcal N = 1$ super Yang-Mills theories in $d = 3$ and $d = 4$ spacetime
dimensions. Our treatment has the upshot of being coordinate-free and manifestly gauge invariant.
Together with the spinorial structure of the superspaces we consider, this allows for a rather geometric interpretation of the theories at hand. The coordinate-free description also allows for the
formulation of these theories on curved superspaces, generalising the established physics treatment
which is usually restricted to the case where the ordinary spacetime is chosen to be Minkowski
space, or a torus.
Further work has to be carried out to characterise the general supersymmetry properties of the
considered models when considered over arbitrary superspaces. In particular, a component field
characterisation of the supersymmetries arising from our framework of spinorial superspaces is
desirable.
It would also be of interest to construct similar super Yang-Mills theories in higher dimensions and
in different signatures. The Euclidean case is of particular interest in the mathematical community
in hopes of assigning invariants to the manifold by characterising the solution space of the theory.
To that end, it would be of interest to explore the structure of spinorial superspaces when the
underlying spinorial representation is of quaternionic type. In this way, a $d = 4$ Euclidean sYM
theory could be formulated on superspaces.
As our framework of spinorial superspaces is rather general, it is also feasible to consider other field
theories on them. In particular, the supersymmetric sigma model can likely be cast as a theory on
spinorial superspaces. On the other hand, one could construct more general field theories directly
on superspace, and explore the arising component fields as well as the reduction of the theory to
the underlying ordinary spacetime manifold.

\subsection*{Acknowledgements}
The author would like to thank Victor Pidstrygach, Greg Weiler and Marcel Bigorajski for their many comments and suggestions.
\printbibliography
\appendix
\section{Complex Supermanifolds}\label{appendix:cs-manifolds}
In what follows, we discuss the notion of \emph{complex supermanifolds}:
\begin{definition}
Let $M=(\underline M, \mathcal O_M)$ be a supermanifold. The associated \emph{complex supermanifold} (or \emph{CS manifold} for short) is the locally graded ringed space $M=(\underline M ,\mathcal O_M\otimes_\mathbb R \mathbb C)$ constructed by complexifying the structure sheaf.
\end{definition}
A few remarks are in order.
\begin{itemize}
\item The complex nature of this kind of manifold only enters the picture along the odd directions; and the underlying real supermanifold can be obtained by restricting the structure sheaf to its real part.
\item By construction, we only requested \emph{smoothness} of the functions instead of real or complex analycity. This is due to the fact that the complex behaviour of functions is merely algebraic and has no underlying topological implications.
\item There exists the more restrictive notion of \emph{complex analytic supermanifolds}, which is discussed in detail in reference~\cite{kessler2018supergeometry}. In this setting, one indeed requires the sheaf of functions to be locally isomorphic to the sheaf of holomorphic functions on $\mathbb C^{m|n}$. We shall not require this notion.
\end{itemize}
We can restrict the structure sheaf of a CS manifold to obtain a CS submanifold as follows:
\begin{lemma}\label{lemma:CS-submanifold}
Let $M$ be a CS manifold of dimension $m|n_1+n_2$ that admits a decomposition $TM = D_1\oplus D_2$ into regular distributions of rank $m|n_1$ and $0|n_2$, respectively. If $D_1$ is an integrable distribution, we obtain a CS manifold $M_1$ of dimension $m|n_1$, where $M_1$ is given by
\begin{equation}
M_1 = (\underline M, \mathcal O_{M_1}),\quad \mathcal O_{M_1} = \{s\in \mathcal O_M\otimes \C : Xs = 0 \textrm{ for all }X\in\Omega^0(M,D_2)\}.
\end{equation}
\end{lemma}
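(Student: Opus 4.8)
This is a local statement, and $\mathcal O_{M_1}$ is carved out of $\mathcal O_M\otimes\mathbb C$ intrinsically, so the plan is to produce, near each point of $\underline M$, coordinates in which the distribution $D_2$ annihilating $\mathcal O_{M_1}$ becomes a coordinate distribution; the local models then glue automatically. The tool is the super-Frobenius theorem, which carries over verbatim to the CS category since the complexification of the structure sheaf is purely algebraic: an involutive regular distribution of rank $p|q$ is locally the span of $p$ even and $q$ odd coordinate vector fields.

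First I would apply this to $D_1$: in the resulting coordinates $(x^a,\theta^\alpha,\eta^\beta)$ ($a=1,\dots,m$, $\alpha=1,\dots,n_1$, $\beta=1,\dots,n_2$) one has $D_1=\langle\partial_{x^a},\partial_{\theta^\alpha}\rangle$, so a local frame of the complement is $E_\beta=\partial_{\eta^\beta}+Z_\beta$ with $Z_\beta\in\Gamma(D_1)$, and differentiating coefficients inside the coordinate span $D_1$ shows $[E_\beta,E_\gamma]\in\Gamma(D_1)$ automatically. The next step is to rectify the $E_\beta$ to $\partial_{\eta^\beta}$; equivalently, to solve $E_\beta f=0$ for all $\beta$ with prescribed $f|_{\eta=0}$, order by order in the $\eta$'s. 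Injectivity of $f\mapsto f|_{\eta=0}$ on $\mathcal O_{M_1}$ is immediate from the lowest $\eta$-degree term, while existence at each order is a Koszul-type compatibility of the inhomogeneous terms whose obstruction is the appropriate lower-$\eta$-degree part of $[E_\beta,E_\gamma]f$ assembled from the pieces already found. Thus the construction goes through precisely when $D_2$ is itself involutive, which is not forced by involutivity of $D_1$ alone but holds in every application of the lemma (in Lemma~\ref{lemma:chiral-integrable}, by the same representation-theoretic vanishing that makes $D_1$ involutive) and should be understood as part of the hypotheses here. Granting it, super-Frobenius applied directly to $D_2$ supplies coordinates with $D_2=\langle\partial_{\eta^\beta}\rangle$.

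In such coordinates $\mathcal O_{M_1}=\{f\in\mathcal O_M\otimes\mathbb C:\partial_{\eta^\beta}f=0\ \text{for all }\beta\}$ is exactly the subalgebra of functions not involving the $\eta^\beta$, namely $C^\infty(\hat U)\otimes\wedge(\mathbb R^{n_1})^\vee$, the structure sheaf of a super-domain of dimension $m|n_1$. Transition functions between overlapping such charts restrict to transition functions for $\mathcal O_{M_1}$, so the local pieces assemble to a CS manifold $M_1=(\underline M,\mathcal O_{M_1})$ of dimension $m|n_1$. I expect the real obstacle to be this rectification step --- upgrading involutivity of the complement $D_2$ to straight coordinates --- together with fixing the precise CS-category form of the Frobenius theorem; the remaining gluing is bookkeeping.
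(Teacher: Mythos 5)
Your construction is correct where the paper's own proof is essentially silent: the paper only checks (via the Leibniz rule) that $\mathcal O_{M_1}$ is a sheaf of graded subrings and then asserts that ``integrability of $D_1$ ensures the existence of the CS submanifold in question,'' whereas you actually produce the local models by rectifying $D_2$ with super-Frobenius and identifying $\mathcal O_{M_1}$ with the functions independent of the last $n_2$ odd coordinates. In doing so you have put your finger on a real gap in the statement as written: what the rectification needs is involutivity of $D_2$, and this is \emph{not} implied by integrability of $D_1$. A minimal example: on $\mathbb R^{1|2}$ with coordinates $(x,\theta^1,\theta^2)$, take $D_1=\vspan\{\partial_x,\partial_{\theta^1}\}$ (a coordinate distribution, hence integrable) and $D_2=\vspan\{E\}$ with $E=\partial_{\theta^2}+\theta^2\partial_x$; then $\tfrac12[E,E]=E^2=\partial_x\notin\Gamma(D_2)$, and the annihilator of $D_2$ consists of the functions $f_0+f_1\theta^1$ with $f_0,f_1$ locally constant in $x$ --- not the structure sheaf of a $1|1$-dimensional supermanifold. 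The mechanism is exactly the compatibility obstruction you describe: any $f$ killed by all of $D_2$ is also killed by $[D_2,D_2]$, so unless that bracket stays in $D_2$ the annihilator is too small.

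Your observation that the extra hypothesis is harmless in the paper's only application is also right: for $D_2=\mathcal F^{0,1}$ the bracket $[\bar X,\bar Y]$ has $\mathcal B^\C$-component $\Gamma|_{S^{0,1}\otimes S^{0,1}}=0$ (since $\Gamma$ pairs $S^{1,0}$ with $S^{0,1}$) and $\mathcal F^{1,0}$-component zero by the $-1\in\Spin(V,q)$ argument, so $\mathcal F^{0,1}$ is involutive and your chart construction applies verbatim. The remaining pieces of your argument --- involutivity of the modified frame $E_\beta=\partial_{\eta^\beta}+Z_\beta$ modulo $D_1$, the purely algebraic (hence CS-compatible) nature of the odd Frobenius rectification, injectivity of restriction to $\eta=0$ by lowest $\eta$-degree, and gluing of the intrinsically defined subsheaf --- are all sound. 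The net effect is that your proof is strictly more informative than the paper's; the lemma should be restated with $D_2$ involutive (or both distributions involutive) as the hypothesis.
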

In other words, the structure sheaf arises by requiring that the functions are annihilated by $D_2$.
\begin{proof}
By Leibnitz' rule, it is clear that the restriction of the sheaf is a sheaf of locally graded subrings: Suppose that $X\in\Omega^0(M,D_1)$, and $s,t\in\mathcal O _{M_1}$. It follows
\begin{equation}
X(st) = (Xs) t + (-1)^{p(X)p(s)}s(Xt) = 0
\end{equation}
as each of the summands is zero. Therefore, $st\in\mathcal O _{M_1}$. Grading and locality follow from the corresponding properties of $\mathcal O_M$, and integrability of $D_1$ ensures the existence of the CS submanifold in question.
\end{proof}
Thus whenever a function on $M$ is actually contained in a suitable subsheaf $\mathcal O_{M_1}$, we can view it as a function defined on the CS submanifold $M_1$. We will exploit this fact to define the superspace Lagrangian of $d=4$ super Yang-Mills theory.

\section{Frame Computations}\label{appendix:frame-computations}
\subsection*{General Remarks}
The proofs of the following lemmas intertwine the superspace structure with Cartan's and Bianchi's identities. To economically account for these different geometric concepts, we shall resort to computations in local spin frames. We apologise in advance for the `battle of indices' that is about to take place.\par
We now fix the notation surrounding the musical isomorphisms in local frames. To that end, note that whenenver $X$ and $Y$ are odd vector fields, it holds $g(X,Y) = -g(Y,X)$ by the sign rule. Let $X,Y$ and $Z$ be sections of the spinorial distribution, that is, $X,Y,Z\in\mathcal F\subset\mathcal T _M$. Since $\mathcal F$ is purely odd, it suffices to establish the notation for odd local generators (for example, one could pick them such that they arise from spin frames). The general case follows by extending in a graded $\mathcal O _M$-multilinear fashion.\par
Let us write $\epsilon = g_\mathcal F$ to emphasize the signs occurring due to the grading. We shall use the second index of $\epsilon$ to raise indices
\begin{equation}
    A^\flat = \epsilon(\cdot,A) \Leftrightarrow \epsilon^{XY} A_Y = A^X.
\end{equation}
Since by definition $(A^\flat)^\sharp = A$, or equivalently $\epsilon^{XY}\epsilon_{YZ} = \delta^X_Z$, it follows 
\begin{equation}
    A_Z = \delta^X_Z A_X = \epsilon^{XY}\epsilon_{YZ} A_X = - \epsilon_{YZ} A^Y
\end{equation}
such that $\epsilon_{ZY}A^Y = A_Z$. Thus, we consistently use the second index of $\epsilon$ to both raise and lower indices. Since frame computations, especially these involving raising or lowering an even amount of indices, are rather prone to sign errors, we try avoiding the use of local frames and instead, resort to more conceptual methods whenever possible.
\begin{lemma}\label{lemma:f-cartan}
    Let $X,Y$ be horizontal lifts of $\tilde X, \tilde Y\in\mathcal F$. It holds
    \begin{equation}
        \iota_X\iota_Y\diff^\mathsf a \phi = -\iota_{[X,Y]}f
    \end{equation}
\end{lemma}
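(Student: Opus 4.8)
The plan is to expand the curvature $F^{\mathsf a}$ according to the decomposition coming from the gauge constraint, namely $F^{\mathsf a} = f \oplus \phi$ with $f \in \Omega^{2|0}$ and $\phi \in \Omega^{1|1}$ (the $0|2$-component vanishing because $\mathsf a \in \mathfrak F^{\mathrm{sYM}}$), and then apply Bianchi's identity $\diff^{\mathsf a} F^{\mathsf a} = 0$ together with Cartan's magic formula, exactly as in the proof of the warm-up Lemma for the soldering form (the lemma stating $\pr_{\mathbb R^{m|0}}(\Theta^{\varphi}(X,Y)) = \iota_{[X,Y]}\theta^B$). The key structural input is that the horizontal lifts $X,Y$ of elements of $\mathcal F$ annihilate the $2|0$-part of any form under interior product — in particular $\iota_X f = \iota_Y f = 0$ since $f$ is a purely bosonic two-form and $X,Y$ are (lifts of) spinorial, hence odd, vector fields.

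First I would write out $0 = \diff^{\mathsf a}F^{\mathsf a} = \diff^{\mathsf a}f + \diff^{\mathsf a}\phi$ (there is no $0|2$-term), and then contract with $\iota_X\iota_Y$ where $X,Y$ are the horizontal lifts with respect to some chosen connection $\mathsf a$ (or a chosen spinorial connection on the relevant pullback bundle — the statement is connection-independent by the same argument as in the earlier lemma). The $\diff^{\mathsf a}\phi$ term contributes $\iota_X\iota_Y\diff^{\mathsf a}\phi$, which is what we want to isolate. For the $\diff^{\mathsf a}f$ term, I would use the covariant Cartan formula $\iota_X\iota_Y\diff^{\mathsf a}f = \iota_X L^{\mathsf a}_Y f - \iota_{[X,Y]}f$ (schematically; being careful that $\iota_Y f = 0$ kills the $\iota_X \diff^{\mathsf a}(\iota_Y f)$ piece and that the covariant Lie derivative of $f$ contracted again with $X$ vanishes because $f$ has no room for two odd arguments plus the bracket term). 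Concretely: since $\iota_Y f = 0$ and $\iota_X f = 0$, the identity $\iota_X\iota_Y \diff^{\mathsf a} f = -\iota_{[X,Y]} f$ holds — this is the direct analogue of the step $\iota_X\iota_Y \diff^{\varphi}\theta^B = \iota_X L_Y \theta^B = \iota_{[X,Y]}\theta^B$ in the earlier lemma, only now the base form vanishes on odd lifts so the $L_Y$ term drops and the sign reflects that we are extracting a bracket rather than a Lie-derivative contribution. Combining, $0 = -\iota_{[X,Y]}f + \iota_X\iota_Y\diff^{\mathsf a}\phi$, which rearranges to the claim.

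The main obstacle I anticipate is bookkeeping of signs and of the precise form of the covariant Cartan/Bianchi identities in the $\mathbb Z_2$-graded setting: the interior products $\iota_X, \iota_Y$ by odd vector fields are even operators but anticommute with each other, and the graded Cartan formula $L_Y = \iota_Y \diff + \diff \iota_Y$ acquires signs depending on $p(Y)$ and on the degree of the form it acts on, so the cancellation "$\iota_X L_Y f = 0$ because $\iota_Y f=0$ and the leftover contracts to zero" must be checked with the correct Koszul signs. I would also need to confirm that $\diff^{\mathsf a}$ preserves the cohomological bigrading well enough that the $2|0$-part of $\diff^{\mathsf a}F^{\mathsf a}$ receives contributions only from $\diff^{\mathsf a}f$ (a $2|1$-type term, irrelevant after $\iota_X\iota_Y$) and that the $1|1 \to$ relevant component of $\diff^{\mathsf a}\phi$ is exactly what $\iota_X\iota_Y$ picks out — but this is routine given the decomposition $\Omega^r = \bigoplus_{p+q=r}\Omega^{p|q}$ established earlier and the fact that $X,Y$ are odd. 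Once the signs are pinned down the argument is short and entirely parallel to the soldering-form lemma.
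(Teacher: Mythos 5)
Your approach --- Bianchi's identity $\diff^{\mathsf a}F^{\mathsf a}=0$ to trade $\diff^{\mathsf a}\phi$ for $-\diff^{\mathsf a}f$, then Cartan's formula together with $\iota_X f=\iota_Y f=0$ --- is exactly the paper's proof, so the strategy is right. There is, however, a sign slip in your intermediate step: with the conventions of the soldering-form lemma you invoke, the computation reads
\begin{equation}
\iota_X\iota_Y\diff^{\mathsf a}f \;=\; \iota_X L_Y f \;=\; +\,\iota_{[X,Y]}f ,
\end{equation}
with a \emph{plus} sign (there is no extra sign from ``extracting a bracket''; the earlier lemma's situation is identical, since $\theta^B$ also vanishes on odd lifts). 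The minus sign in the statement then comes entirely from Bianchi: $\iota_X\iota_Y\diff^{\mathsf a}\phi = -\iota_X\iota_Y\diff^{\mathsf a}f = -\iota_{[X,Y]}f$. As you have written it, $\iota_X\iota_Y\diff^{\mathsf a}f = -\iota_{[X,Y]}f$ combined with Bianchi rearranges to $\iota_X\iota_Y\diff^{\mathsf a}\phi = +\iota_{[X,Y]}f$, the opposite of the claim; fix that one sign and the argument is the paper's.
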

\begin{proof}
The proof is a little exercise involving Cartan's magic formula:
    \begin{align}
        \iota_X\iota_Y\diff^\mathsf a \phi &= -\iota_X\iota_Y\diff^\mathsf a f \\&=
        -\iota_X L_Y f \\&=
        -\iota_{[X,Y]} f,
    \end{align}
    where we used that $f$ vanishes on the spinorial vectors; and $\iota_X\mathsf a =\iota_Y\mathsf a = 0$ as $X$ and $Y$ are horizontal.
\end{proof}
In the following, $n$ shall denote the odd dimension of $M$ (or, equivalently, the rank of the spinorial distribution $\mathcal F\subset\mathcal T _ M)$.
\begin{lemma}
    Let $\psi$ be the sYM spinor associated to a connection $\mathsf a \in \mathcal A _P$, let $\varphi^s\in\mathcal A _{\hat Q}^s$ be a spinorial connection, and let $X$ be the horizontal lift of a section of the spinorial distribution $\mathcal F$ to $\mathcal T_P$. Set $\nabla^\mathsf a \eqqcolon \nabla$ and let $i:\check M \to M$ be the embedding of the ordinary spacetime manifold.
    It holds
    \begin{equation}
        (\nabla_X\psi)\big|_{\check M} =
        \frac{1}{\sqrt n}\langle f, \tilde\gamma\circ\gamma(X)\rangle\big|_{\check M}.
    \end{equation}
\end{lemma}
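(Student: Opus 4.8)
\emph{Reducing to a Bianchi computation.} The plan is to peel the Clifford machinery off $\psi$ and reduce to an identity for the curvature. Since $\psi^{\mathsf a}=\frac{1}{\sqrt n}\langle\phi^{\mathsf a},\tilde\gamma\rangle$ with $\phi^{\mathsf a}$ the $1|1$-component of $F^{\mathsf a}$, I would first differentiate through the pairing: the maps $\gamma,\tilde\gamma$ of Lemma~\ref{lemma:gamma-superspace} are built from the $\Spin(V,q)$-structure, hence parallel for any spinorial connection $\varphi^{s}$, so on the pullback bundle $P\times_{M}\hat Q$ carrying the connection $A=\mathsf a\oplus\varphi^{s}$ they are annihilated by the associated covariant derivative $\nabla$ (whose $\ad(P)$-part is the $\nabla^{\mathsf a}=\nabla$ of the statement, and whose spinor part is the spin connection). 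Thus $\nabla_{X}\psi=\frac{1}{\sqrt n}\langle\nabla_{X}\phi,\tilde\gamma\rangle$ for $X$ the horizontal lift of a section of $\mathcal F$, and everything reduces to computing $\nabla_{X}\phi$ modulo the body $\check M$.

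\emph{The Bianchi identities.} I would feed in Lemma~\ref{lemma:f-cartan}: for horizontal lifts $X,Y$ of $\tilde X,\tilde Y\in\mathcal F$ one has $\iota_{X}\iota_{Y}\diff^{\mathsf a}\phi=-\iota_{[X,Y]}f$. Contracting once more with the horizontal lift of a bosonic frame vector $e_{a}$ and expanding $\diff^{\mathsf a}$ by the Koszul formula, all terms collapse except the $\nabla\phi$-terms: $\phi$ vanishes on $\mathcal B\times\mathcal B$ and on $\mathcal F\times\mathcal F$, so the left side retains only $(\nabla_{X}\phi)(Y,\cdot)+(\nabla_{Y}\phi)(X,\cdot)$; on the right, $\pr_{\mathcal B}[X,Y]=\Gamma(\tilde X,\tilde Y)$ by the Frobenius identity while $\pr_{\mathcal F}[X,Y]=0$ since a $\Spin(V,q)$-equivariant symmetric map $S\otimes S\to S$ vanishes (the argument of Lemma~\ref{lemma:b-integrable}); and the stray brackets $[X,e_{a}],[e_{a},e_{b}]$ disappear once I work in the affine local frame of a split superspace, which is legitimate by Batchelor's theorem and the remark in section~\ref{sec:split-superspaces}. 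This yields the symmetric identity $(\nabla_{X}\phi)(Y,\cdot)+(\nabla_{Y}\phi)(X,\cdot)=-f(\Gamma(\tilde X,\tilde Y),\cdot)$. I would also record the identity obtained by contracting $\diff^{\mathsf a}F^{\mathsf a}=0$ with three spinorial directions, which, because $F^{\mathsf a}(\mathcal F,\mathcal F)=0$, gives the (symmetrised) $\Gamma$-trace constraint on $\phi^{\mathsf a}$; this controls the piece of $\nabla_{X}\phi$ not seen by the first identity.

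\emph{Restriction to the body and Clifford bookkeeping.} On $\check M$ the structure sheaf collapses to $C^{\infty}(\check M)$, $\nabla$ restricts to the Levi-Civita $\oplus$ spin connection (Lemma~\ref{lemma:restriction-levi-civita}), and $f|_{\check M}$ is the Yang-Mills curvature $F^{a}$ of $a=i^{*}\mathsf a$. Plugging the two Bianchi identities into $\langle\nabla_{X}\phi,\tilde\gamma\rangle|_{\check M}$ and invoking the Clifford relations of Lemma~\ref{lemma:gamma-superspace} — under which $\Gamma$ is, up to the musical isomorphism of $g$, Clifford multiplication itself — the composite $\tilde\gamma\circ\Gamma$ reassembles the $\Gamma$-contracted $f$ into the endomorphism $\tilde\gamma\circ\gamma(X)$ contracted with $f$, i.e.\ $\langle f,\tilde\gamma\circ\gamma(X)\rangle$, and the factor $\frac{1}{\sqrt n}$ is simply carried over from the definition of $\psi$ once this identification is made. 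The main obstacle is precisely this last step: tracking the grading signs through the chain $\langle\phi,\tilde\gamma\rangle\to\text{Bianchi}\to\langle f,\tilde\gamma\circ\gamma(X)\rangle$ and checking that the numerology yields exactly $1/\sqrt n$ with no leftover power of $n$ or $\dim V$; a secondary subtlety is making the operation ``differentiate in one odd direction, then restrict to $\check M$'' precise, which is where the affine local model and Lemma~\ref{lemma:restriction-levi-civita} carry the weight.
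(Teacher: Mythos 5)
Your skeleton matches the paper's proof: use covariant constancy of $\tilde\gamma$ to move the derivative onto the curvature component $\phi$, convert spinorial contractions of $\diff^{\mathsf a}\phi$ into $-\iota_{[X,Y]}f$ via the Cartan lemma, restrict to the body where $[X,Y]=\Gamma(X,Y)$, and reassemble through the Clifford maps. But the middle of your argument takes an unnecessary and underspecified detour. The paper never needs the pointwise tensor $\nabla_X\phi$, the symmetrised Koszul identity, the triple-spinorial Bianchi constraint, or the split/Batchelor local model. It simply writes $\sqrt n\,\nabla_X\psi=\langle\iota_X\diff^{\mathsf a}\phi,\tilde\gamma\rangle$ and expands $\tilde\gamma=\sum_Y\tilde\gamma(Y^\vee)\otimes Y$ in a frame of $\mathcal F$, so that the whole expression becomes $\sum_Y\langle\iota_Y\iota_X\diff^{\mathsf a}\phi,\tilde\gamma(Y^\vee)\rangle$ --- exactly the double contractions that the Cartan lemma evaluates wholesale. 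Your plan to reconstruct $\nabla_X\phi$ slot by slot from a symmetric identity plus a ``$\Gamma$-trace constraint'' is where the argument would bog down: you never say what the piece ``not seen by the first identity'' is, nor how the $0|3$-component of Bianchi determines it, and none of that information is in fact required once the contraction with $\tilde\gamma$ is performed \emph{before} any Koszul expansion.

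The genuine gap is that you explicitly defer the decisive step. The content of the lemma is precisely the identity $\langle f,\tilde\gamma\circ\gamma(X)\rangle$ with prefactor $1/\sqrt n$, and you describe producing it as ``the main obstacle'' without carrying it out. In the paper it is a one-line resummation: after restriction to $\check M$,
\begin{equation*}
-\sum_Y\langle\iota_{[X,Y]}f,\tilde\gamma(Y^\vee)\rangle\big|_{\check M}
=\Big\langle f,\sum_Y\tilde\gamma(Y^\vee)\otimes\Gamma(X,Y)\Big\rangle\Big|_{\check M}
=\langle f,\tilde\gamma\circ\gamma(X)\rangle\big|_{\check M},
\end{equation*}
because $\gamma(X)$ \emph{is}, by definition, $\Gamma(X,-)$ read as a map via the musical isomorphisms; and the $1/\sqrt n$ never moves --- it sits in front of the definition of $\psi$ and is divided out at the end, so there is no numerology to verify. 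A proof sketch that names its final step as an unresolved obstacle has not proved the statement; you should either execute the frame resummation as above or explain why your two Bianchi identities combine to give it.
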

\begin{proof}
    We start off by plugging in the definition of the sYM spinor, such that
    \begin{align}
        \sqrt n \nabla_X \psi &=
        \iota_X\diff^\mathsf a \langle \phi,\tilde\gamma\rangle  
        =\langle\iota_X\diff^\mathsf a\phi, \tilde\gamma\rangle,
    \end{align}
    where we used that $\tilde\gamma$ is covariantly constant in the final step. We now expand $\tilde\gamma$ in a local frame
    \begin{equation}
        \tilde\gamma = \sum_{Y}\tilde\gamma(Y^\vee)\otimes Y
    \end{equation} 
    and make use of lemma \ref{lemma:f-cartan}, such that
    \begin{align}
        \sqrt n \nabla_X \psi &=
        \left\langle\iota_X\diff^\mathsf a \phi,\sum_Y\tilde\gamma(Y^\vee)\otimes Y\right\rangle\\
        &=\sum_Y \langle\iota_Y\iota_X \diff^\mathsf a \phi,\tilde\gamma(Y^\vee)\rangle\\
        &=-\sum_Y\langle\iota_{[X,Y]} f, \tilde\gamma(Y^\vee)\rangle.
    \end{align}
    Upon restricting to the underlying even manifold, we have $[X,Y]|_{\check M} = (\psi_S^*\Gamma)(X|_{\check M},Y_{\check M}))$, such that
    \begin{align}
        \sqrt n\cdot  (\psi_S^{-1}\otimes\id)\big((\nabla_X \psi)\big|_{\check M}\big) &= \sum_Y\left\langle f, \tilde\gamma(Y^\vee)\otimes\Gamma(X,Y)\right\rangle\Big|_{\check M}\\
        &=\langle f, \tilde\gamma\circ\gamma(X)\rangle\big|_{\check M},
    \end{align}
    concluding the proof.
\end{proof}
In the remainder of this appendix, we shall work out algebraic properties of component fields associated to a fixed connection $\mathsf a$ of the principal bundle $P\to M$ on the superspace. Therefore, we shall frequently write $F=F^\mathsf a$ for the curvature, as well as $f^\mathsf a = f$, $\psi^\mathsf a = \psi$ and $\phi^\mathsf a = \phi$ for the component fields.
\subsection*{\texorpdfstring{$d=3$ Frame Computations}{d=3 frame computations}}
We now collect a few lemmas that contain the key computations to deduce key properties of the sYM superspace Lagrangian in $d=3$. For notational convenience, we shall frequently identify the restricted spinorial distribution $i^*\mathcal F$ with a spinor bundle $S\check M\to\check M$ over the ordinary spacetime manifold.
\begin{lemma}\label{lemma:spinor-curvature}
    Let $\psi$ be the sYM spinor, let $i:\check M \to M$ be an embedding of an underlying even manifold, and let $i^*\mathsf a = a \in \mathcal A_{i^*P}$ be the pullback connection. It holds
    \begin{equation}
        i^*\big(\epsilon^{XY} (\epsilon\otimes\kappa)(\nabla_X\psi,\nabla_Y\psi)\big) = \|F^{a}\|^2,
    \end{equation}
    where $\kappa$ denotes the Killing form on $\mathfrak g$, and the norm of the curvature is taken with respect to $\kappa$ and the induced Riemannian metric $\check g$ on $\check M$.
\end{lemma}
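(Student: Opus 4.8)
The strategy is to push the whole expression down to $\check M$ via $i^*$, evaluate each $\nabla_X\psi|_{\check M}$ by the foregoing lemma, and then recognise the resulting double contraction as a Clifford trace that collapses to the curvature norm.

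First I would observe that $i^*f = i^*f^{\mathsf a}$ is exactly the curvature $F^a$ of the pullback connection $a = i^*\mathsf a$ on $\check P$: taking curvature commutes with pullback of connections, and since $\underline i = \id$ and $\check M$ integrates $\mathcal B$, the pullback $i^*$ on forms kills every component of $F^{\mathsf a}\in\Omega^2(M,\ad(P))$ except the purely even part $f^{\mathsf a}\in\Omega^{2|0}$, so that $i^*F^{\mathsf a}=i^*f^{\mathsf a}=F^{a}$. Next, since $i^*$ is a homomorphism of graded rings and $\epsilon=g_\mathcal F$, $\kappa$ are parallel, applying the foregoing lemma gives
\begin{equation}
i^*\big(\epsilon^{XY}(\epsilon\otimes\kappa)(\nabla_X\psi,\nabla_Y\psi)\big) = \frac{1}{n}\,\epsilon^{XY}\,(\epsilon\otimes\kappa)\big(\langle F^a,\tilde\gamma\circ\gamma(X)\rangle,\langle F^a,\tilde\gamma\circ\gamma(Y)\rangle\big)\big|_{\check M},
\end{equation}
where $\kappa$ pairs the $\ad(\check P)$-legs and $\epsilon$ the spinorial ones.

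The computational core is then to evaluate the right-hand side in a local orthonormal spin frame $(e_i)$ on $\check M$. Writing $F^a=\tfrac12 F_{ij}\,e^{i}\wedge e^{j}$ with $F_{ij}\in\Omega^0(\check M,\ad(\check P))$ and expanding $\gamma,\tilde\gamma$ against the dual bases, each contraction $\langle F^a,\tilde\gamma\circ\gamma(-)\rangle$ becomes a product of Clifford generators carrying one free spinor index, so that the remaining $\epsilon^{XY}$-contraction is exactly a spinorial trace $\tr(\gamma_i\gamma_j\gamma_k\gamma_l)$ paired against $\kappa(F^{ij},F^{kl})$. Using the Clifford relations of lemma~\ref{lemma:gamma-superspace} together with the trace identities of the two-dimensional real spinor representation in $d=3$ (there $\tr(\gamma_i\gamma_j)\propto g_{ij}$, and traces of four $\gamma$'s reduce accordingly), this collapses to a multiple of $\kappa(F_{ij},F^{ij})$; the overall constant, fixed by $n=\dim S=2$ and the $1/\sqrt n$ in the definition of $\psi$, is precisely the one for which the right-hand side equals $\|F^a\|^2$ taken with respect to $\kappa$ and $\check g$.

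The main obstacle will be the sign-and-normalisation bookkeeping. Because $\epsilon$ is graded symmetric — so $g(X,Y)=-g(Y,X)$ for odd $X,Y$ — and indices on $\mathcal F$ are raised and lowered with the same slot of $\epsilon$ (cf.\ the conventions opening this appendix), one must track carefully the signs produced when $\epsilon^{XY}$ is commuted past the graded-symmetric Clifford expressions and when the two-$\gamma$ trace is reassembled into a four-$\gamma$ trace; one must also verify that the Minkowski signs in the Clifford relations combine with the signature of $\check g$ to produce the correct coefficient in front of $\kappa(F_{ij},F^{ij})$. Once this trace identity is pinned down in the frame above, the stated equality follows.
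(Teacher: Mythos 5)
Your strategy is essentially the paper's: substitute $(\nabla_X\psi)|_{\check M}=\tfrac{1}{\sqrt n}\langle f,\tilde\gamma\circ\gamma(X)\rangle|_{\check M}$ from the preceding lemma, identify $i^*f$ with $F^a$, and reduce the double $\epsilon$-contraction to a Clifford trace against $\kappa(f\wedge f)$. The one place where all the content of the lemma actually lives — the trace identity and the resulting constant — is, however, asserted rather than carried out ("this collapses to a multiple of $\kappa(F_{ij},F^{ij})$; the overall constant \dots is precisely the one for which \dots"). The paper does this step without descending to vector indices on $\check M$: since $\dim S=2$ one has $\epsilon^{XY}\epsilon_{WZ}=\delta^X_W\delta^Y_Z-\delta^X_Z\delta^Y_W$, which splits the contraction into a term $\kappa(\langle f,g\rangle,\langle f,g\rangle)$ that vanishes because $g$ is symmetric while $i^*f$ is a two-form, and a term $\langle\kappa(f\wedge f),\tr(\tilde\gamma\circ\gamma\circ\tilde\gamma\circ\gamma)\rangle$ that the identity $\tr(\tilde\gamma\circ\gamma\circ\tilde\gamma\circ\gamma)|_{\Sym^2(\Omega^{2|0}(M))}=-n\|\cdot\|^2_g$ converts into $n\|F^a\|^2$, the factor $n$ cancelling the two factors of $1/\sqrt n$ from the definition of $\psi$. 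Your plan to expand $F^a=\tfrac12 F_{ij}e^i\wedge e^j$ and compute four-$\gamma$ traces in an orthonormal frame would reach the same answer, but until that trace and its normalisation are pinned down the proposal is a correct outline rather than a proof; I would either import the paper's $\epsilon\epsilon=\delta\delta-\delta\delta$ argument or actually exhibit the $d=3$ identity $\tr(\gamma_i\gamma_j\gamma_k\gamma_l)$ you invoke.
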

\begin{proof}
    Recall first the Clifford identities
    \begin{align}
        \tr(\gamma\circ\tilde\gamma) = n\cdot g,\quad
        Q\coloneqq
        \tr(\tilde\gamma\circ\gamma
        \circ\tilde\gamma\circ\gamma)
        |_{\Sym^2(\Omega^{2|0}(M))}
        = - n\|\cdot\|^2_g,
    \end{align}
    where $n$ is the odd dimension of $M$, and compositions of the kind $\gamma\circ\tilde\gamma$ are interpreted as duality pairings of $S$ with $S^\vee$. With help of the preceding lemma, we find
    \begin{align}
        n\cdot i^*\big(\epsilon^{XY}(\epsilon\otimes\kappa)(\nabla_X\psi,\nabla_Y\psi)\big) &= 
        i^*\big(\epsilon^{XY}(\epsilon\otimes\kappa)(\langle f, \tilde\gamma\circ\gamma(X)\rangle,\langle f, \tilde\gamma\circ\gamma(Y)\rangle)\big)\\
        &= 
        i^*\big(\epsilon^{XY}\epsilon_{WZ}\kappa(\langle f, \tilde\gamma(W^\vee)\circ\gamma(X)\rangle,\langle f, \tilde\gamma(Z^\vee)\circ\gamma(Y)\rangle)\big),
    \end{align}
    where summation over $W,X,Y$ and $Z$ is implied.
    We make use of the identity $\epsilon^{XY}\epsilon_{WZ} = \delta^X_W\delta^Y_Z - \delta^X_Z\delta^Y_W$
    as well as $\sum_X \gamma(X)\circ\tilde\gamma(X^\vee) = \tr(\gamma\circ\tilde\gamma)$, such that
    \begin{equation}
        i^*\big(\delta^X_W\delta^Y_Z 
        \kappa(\langle f, \tilde\gamma(W^\vee)\circ\gamma(X)\rangle,\langle f, \tilde\gamma(Z^\vee)\circ\gamma(Y)\rangle)\big) = i^*\big(\kappa(\langle f,g\rangle,\langle f,g\rangle)\big) = 0
    \end{equation}
    as $g$ is symmetric and $i^*f$ is a two-form (and hence skew). Next, keeping in mind the symmetry properties of forms and the traces of $\gamma$ (resp. $\tilde\gamma$) maps,
    \begin{align}
    i^*\big(\delta^X_Z\delta^Y_W 
        \kappa(\langle f, \tilde\gamma(W^\vee)\circ\gamma(X)\rangle,\langle f, \tilde\gamma(Z^\vee)\circ\gamma(Y)\rangle)\big) &= i^*\big(\langle\kappa(f\wedge f),\tr(\tilde\gamma
        \circ\gamma\circ\tilde\gamma\circ\gamma)\rangle\big)\\
        &= n\|i^*f\|^2 = n\|F^{a}\|^2,
    \end{align}
    as claimed.
\end{proof}
Let us express the Dirac operator in terms of the language presented above:
\begin{lemma}
Let $\psi$ be the sYM spinor, and let $(\psi_S^{-1}\otimes\id)(\psi|_{\check M}) = \lambda$ be the
associated component field.
Denote by $a=i^*\mathsf a$ the pullback connection on the ordinary spacetime manifold. It holds
\begin{equation}
        \Dirac^{a}\lambda = \frac{1}{\sqrt n}\langle\diff^\mathsf a \phi, \tilde \gamma\circ\gamma\circ\tilde\epsilon\rangle\big|_{\check M},
\end{equation}
where we identified $i^*\mathcal F$ with the spinor bundle $S\check M \to \check M$.
\end{lemma}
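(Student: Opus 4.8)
The plan is to differentiate the super Yang--Mills spinor directly along bosonic directions and then to recognise, after inserting the metric of $\mathcal B$, the coordinate-free pairing appearing on the right-hand side. First I would lift everything to the pullback bundle $P\times_M\hat Q$, fix a spinorial connection $\varphi^s\in\mathcal A^s_{\hat Q}$, put $A\coloneqq\mathsf a\oplus\varphi^s$ and $\nabla\coloneqq\nabla^A$. Because $\gamma$, $\tilde\gamma$, $\tilde\epsilon$ and the musical isomorphisms on $\mathcal F$ all come from $\Spin(V,q)$-invariant tensors on the infinitesimal model, and $\varphi^s$ is a $\Spin^{\mathbb F}(V,q)$-connection, all of these are $\nabla$-parallel; this is exactly what lets one move the pairing through $\diff^{\mathsf a}$.

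Next, for the horizontal lift $\hat e_i$ of an orthonormal bosonic frame $(e_i)$ of $\mathcal B$, I would expand $\psi=\frac{1}{\sqrt n}\langle\phi,\tilde\gamma\rangle$ and use $\diff^{\mathsf a}\tilde\gamma=0$ (exactly as in the first step of the preceding lemma) to obtain
\[
\nabla_{\hat e_i}\psi=\frac{1}{\sqrt n}\,\iota_{\hat e_i}\,\diff^{\mathsf a}\langle\phi,\tilde\gamma\rangle=\frac{1}{\sqrt n}\,\langle\iota_{\hat e_i}\diff^{\mathsf a}\phi,\tilde\gamma\rangle ,
\]
which, in contrast with the spinorial case treated there, needs neither the Frobenius identity nor Lemma~\ref{lemma:f-cartan}. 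By Lemma~\ref{lemma:restriction-levi-civita} the pullback $i^*\varphi^s$ is the spin connection on $\check M$ and $a=i^*\mathsf a$, so restriction of sections along $i$ intertwines $\nabla$ in directions tangent to $\check M$ with $\nabla^a$; combined with the identification $i^*\mathcal F\cong S\check M$ of Lemma~\ref{lemma:distr-iso} this gives $\nabla^a_{e_i}\lambda=i^*\!\big(\nabla_{\hat e_i}\psi\big)$, up to the fixed musical isomorphism relating $\psi$ with $\lambda$. Writing the twisted Dirac operator as $\Dirac^{a}=\sum_i\gamma(e^i)\circ\nabla^a_{e_i}$ then yields
\[
\Dirac^{a}\lambda=\frac{1}{\sqrt n}\sum_i\gamma(e^i)\,\langle\iota_{\hat e_i}\diff^{\mathsf a}\phi,\tilde\gamma\rangle\Big|_{\check M}.
\]

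It remains to recognise the right-hand side of the lemma in this frame sum. Summing $\gamma(e^i)\otimes\hat e_i$ over $i$ amounts to feeding the inverse metric of $\mathcal B$ into the unpaired bosonic slot of $\diff^{\mathsf a}\phi$, with one leg contracting the form and the other entering $\gamma$, while $\tilde\epsilon$ supplies precisely the raising needed to compose $\gamma$ with the $\tilde\gamma$-contraction of the remaining two slots; once the pairing conventions are unwound this is by definition $\langle\diff^{\mathsf a}\phi,\tilde\gamma\circ\gamma\circ\tilde\epsilon\rangle$, and the outcome is frame-independent by $\Spin^{\mathbb F}(V,q)$-equivariance, just as for $C_F$. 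I expect the genuine difficulty to be confined to this last bookkeeping step -- verifying that $\tilde\gamma\circ\gamma\circ\tilde\epsilon$ really is the coordinate-free avatar of ``contract two slots with $\tilde\gamma$, then raise and Clifford-multiply the third'' \emph{with the correct sign}, since the second-index raising/lowering convention fixed in the General Remarks moves an even number of indices and is exactly where sign errors tend to appear. The remaining ingredients ($\diff^{\mathsf a}\tilde\gamma=0$, naturality of $\nabla$ under $i^*$, Lemmas~\ref{lemma:restriction-levi-civita} and~\ref{lemma:distr-iso}, and the Clifford data of Lemma~\ref{lemma:gamma-superspace}) are routine.
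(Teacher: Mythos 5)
Your proposal is correct and follows essentially the same route as the paper: expand $\psi=\tfrac{1}{\sqrt n}\langle\phi,\tilde\gamma\rangle$, commute the covariant derivative past the covariantly constant Clifford data $\gamma$, $\tilde\gamma$, $\tilde\epsilon$, and restrict along $i$ using the compatibility of $i^*\varphi^s$ with the spin connection on $\check M$. The only difference is presentational: the paper takes $\Dirac^a\lambda=\langle\diff^a\lambda,\gamma\circ\tilde\epsilon\rangle$ as the coordinate-free definition of the Dirac operator, which absorbs your final frame-summation and sign-bookkeeping step into the definition and makes the argument a two-line computation.
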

\begin{proof}
    The Dirac operator, applied to a spinor, is defined as
    \begin{equation}
        \Dirac^a\check\psi = \langle\diff^a \check\psi, \gamma\circ\tilde\epsilon\rangle\quad \text{(duality pairing along }T\check M\otimes S\check M\text{)},
    \end{equation}
    where $\tilde\epsilon$ acts as the musical isomorphism.
    Hence, $\Dirac^a\check\psi$ is again a section of $S\check M\otimes\ad(\check P)\to \check M$. We plug in the definition of $\check\psi$ such that
    \begin{align}
        \sqrt n \cdot \Dirac^a\check\psi &= \langle\diff^a i^*\langle \psi,\tilde\gamma\rangle,\gamma\circ\tilde\epsilon\rangle\\
        &= i^*\langle\diff^\mathsf a \phi,\tilde\gamma\circ\gamma\circ\tilde\epsilon\rangle,
    \end{align}
    where we used that $\gamma$ and $\tilde\epsilon$ are covariantly constant.
\end{proof}
This allows us to compute the second appearing term arising in the sYM setup:
\begin{lemma}\label{lemma:dirac-laplace}
    Let $\Delta_\epsilon^\mathsf a = \epsilon^{XY} \nabla_X \circ\nabla_Y$, let $a=i^*\mathsf a$, and let $\check\psi = i^*\psi$. It holds
    \begin{equation}
        \big(\Delta^\mathsf a_\epsilon \psi\big)\big|_{\check M} =
        2\Dirac^{a} \lambda.
    \end{equation}
\end{lemma}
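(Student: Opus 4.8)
The plan is to reduce the statement to the preceding lemmas on $\nabla_X\psi$ and on $\Dirac^a\lambda$, together with Bianchi's identity, working throughout on the bundle $P\times_M\hat Q$ with the connection $A=\mathsf a\oplus\varphi^s$ and its covariant derivative $\nabla=\nabla^A$, and pulling back along $i$ only at the very end. The key observation is that the computation in the proof of the lemma expressing $\nabla_X\psi$ in fact produces a \emph{global} identity, not merely one along $\check M$: since $\tilde\gamma$ is covariantly constant, $\sqrt n\,\nabla_Y\psi=\langle\iota_Y\diff^\mathsf a\phi,\tilde\gamma\rangle$ as sections over all of $M$, where $Y$ denotes the horizontal lift of a section of $\mathcal F$.

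First I would differentiate once more. Commuting $\diff^A$ past the parallel tensor $\tilde\gamma$ gives $\sqrt n\,\nabla_X\nabla_Y\psi=\langle\iota_X\diff^\mathsf a(\iota_Y\diff^\mathsf a\phi),\tilde\gamma\rangle$, and I would unfold the inner $\diff^\mathsf a\iota_Y$ by Cartan's identity $\diff^\mathsf a\iota_Y+\iota_Y\diff^\mathsf a=L_Y^\mathsf a$, using Bianchi's identity in the form $\diff^\mathsf a\diff^\mathsf a=\ad_{F^\mathsf a}$ together with $\diff^\mathsf a\phi=-\diff^\mathsf a f$ and Lemma~\ref{lemma:f-cartan}, which already packages the relevant contractions. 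This should split the expression into two families of terms: those carrying an explicit factor of the superspace curvature $F^\mathsf a$ contracted with $\phi$, whose coefficient functions are linear in the odd coordinates and hence vanish upon restriction to $\check M$; and one surviving term in which the two spinorial legs of $\diff^\mathsf a\phi$ are contracted through $\Gamma$. Restricting the latter to $\check M$ via $[X,Y]|_{\check M}=(\psi_S^*\Gamma)(X|_{\check M},Y|_{\check M})$, contracting against $\epsilon^{XY}$, and expanding $\tilde\gamma$ in a local spin frame with the same Clifford and trace identities used in the proof of Lemma~\ref{lemma:spinor-curvature}, should turn it into the expression $\tfrac{1}{\sqrt n}\langle\diff^\mathsf a\phi,\tilde\gamma\circ\gamma\circ\tilde\epsilon\rangle\big|_{\check M}$ established in the preceding lemma for $\Dirac^a\lambda$ (under the identification $i^*\mathcal F\cong S\check M$).

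The factor $2$ is the point that needs care: it should come out of the graded antisymmetry of $\epsilon^{XY}$ combined with the way the composition in $\Delta^\mathsf a_\epsilon=\epsilon^{XY}\nabla_X\circ\nabla_Y$ unfolds under Cartan's identity, i.e.\ from the two orderings of the iterated interior products contributing equally rather than cancelling. Accordingly I expect the main obstacle to be precisely this sign bookkeeping — the super-Cartan calculus for interior products and Lie derivatives along odd vector fields, graded Leibniz, and the ordering in $\iota_X\iota_Y\diff^\mathsf a\phi$ — together with the fact that $\nabla$ does not commute with $i^*$, so that one must keep the global identity $\sqrt n\,\nabla_Y\psi=\langle\iota_Y\diff^\mathsf a\phi,\tilde\gamma\rangle$ intact and differentiate \emph{before} restricting to $\check M$, rather than passing to the body first.
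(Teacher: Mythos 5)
Your proposal follows essentially the same route as the paper's proof: differentiate the global identity $\sqrt n\,\nabla_Y\psi=\langle\iota_Y\diff^\mathsf a\phi,\tilde\gamma\rangle$ once more before restricting, unfold via Cartan's magic formula and Lemma~\ref{lemma:f-cartan}, discard the term that dies on $\check M$ (there it is $\epsilon^{XY}\langle\iota_{[X,[Y,Z]]}f,\tilde\gamma(Z^\vee)\rangle$, killed because $i^*f$ is a purely even $2|0$-form), and convert the surviving commutators into $\Gamma$ and hence $\Dirac^a\lambda$ via the preceding lemma. The factor $2$ indeed arises exactly as you anticipate, from two separate applications of Cartan's formula each contributing one copy of $\sqrt n\,\Dirac^a\lambda$.
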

\begin{proof}
    As before, we expand $\tilde\gamma$ in a local frame, such that
    \begin{align}
        \sqrt n \cdot \Delta^\mathsf a_\epsilon \psi &= \epsilon^{XY}\iota_X\diff^\mathsf a\iota_Y\diff^\mathsf a\langle\phi,\tilde \gamma\rangle\\
        &=\epsilon^{XY}
        \langle\iota_Z\iota_X
        \diff^\mathsf a\iota_Y
        \diff^\mathsf a \phi,
        \tilde \gamma (Z^\vee)\rangle \\
        &=\epsilon^{XY}
        \langle\iota_X\iota_Z
        L_Y
        \diff^\mathsf a \phi,
        \tilde \gamma (Z^\vee),\rangle\\
        &=\epsilon^{XY}
        \langle\iota_X
        (\iota_{[Z,Y]}-\diff^\mathsf a\iota_Y\iota_Z)
        \diff^\mathsf a \phi,
        \tilde \gamma (Z^\vee)\rangle\\
        &= -\epsilon^{XY}
        \langle
        \iota_{[Z,Y]}\iota_X
        \diff^\mathsf a \phi,
        \tilde \gamma (Z^\vee)\rangle - \epsilon^{XY}
        \langle\iota_X
        \diff^\mathsf a \iota_Y\iota_Z
        \diff^\mathsf a \phi,
        \tilde \gamma (Z^\vee)\rangle
    \end{align}
    Here, we used that $\epsilon^{XY}\iota_X\iota_Y = \epsilon^{XY}\iota_{[X,Y]} = 0$, eliminating many of the otherwise appearing terms.
    Upon restricting to $\check M$, the first summand equates to
    \begin{align}
        \epsilon^{XY}
        \langle
        \iota_{[Z,Y]}\iota_X
        \diff^\mathsf a \phi,
        \tilde \gamma (Z^\vee)\rangle\big|_{\check M} &= 
        \epsilon^{XY}
        \langle\iota_X\diff^\mathsf a \phi,\tilde\gamma(Z^\vee)\otimes\Gamma(Z,Y)\rangle\big|_{\check M}\\
        &=\epsilon^{XY}\langle
        \iota_X\diff^\mathsf a\phi,\tilde\gamma\circ\gamma(Y)\rangle\big|_{\check M}\\
        &=-\langle\diff^\mathsf a\phi,\tilde\gamma\circ\gamma\circ\tilde\epsilon\rangle\big|_{\check M}\\
        &=-\sqrt n \cdot \Dirac^{a}\lambda
    \end{align}
    by the preceding lemma. For the second summand, note that
    \begin{align}
        -\epsilon^{XY}
        \langle\iota_X
        \diff^\mathsf a \iota_Y\iota_Z
        \diff^\mathsf a \phi,
        \tilde \gamma (Z^\vee)\rangle\big|_{\check M}&=
        \epsilon^{XY}
        \langle\iota_X
        \diff^\mathsf a \iota_{[Y,Z]} f,
        \tilde \gamma (Z^\vee)\rangle\big|_{\check M}\\
        &=\epsilon^{XY}
        \langle L_X\iota_{[Y,Z]} f,
        \tilde\gamma(Z^\vee)\rangle\big|_{\check M}\\
        &=\epsilon^{XY}
        \langle \iota_{[X,[Y,Z]]} f + \iota_{[Y,Z]}\iota_X\diff^\mathsf a \phi,
        \tilde\gamma(Z^\vee)\rangle\big|_{\check M}\\
        &=\epsilon^{XY}\langle\iota_{[X,[Y,Z]]}f,\tilde\gamma(Z^\vee)\rangle\big|_{\check M} + \sqrt n \cdot \Dirac^a\lambda
    \end{align}
    due to lemma 3, the Cartan relations and the fact that $\iota_X f = \iota_Z f = 0$ as $X,Z$ are spinorial sections. Note that upon restricting to the ordinary spacetime, $[X,[Y,Z]]$ is purely odd -- however, $i^*f$ is a purely even $2|0$-form, hence, the first term vanishes. Adding up the remaining contributions, we obtain
    \begin{equation}
        \Delta^\mathsf a_\epsilon \psi\big|_{\check M} =
        2\Dirac^{a} \lambda, 
    \end{equation}
    as claimed.
\end{proof}
\subsection*{\texorpdfstring{$d=4$ Frame Computations}{d=4 Frame Computations}}
We now conduct similar computations as before, but suitably adapted for the $d=4$ case. 
\begin{lemma}\label{lemma:d-4-curvature}
Let $\psi = \chi + \bar\chi$ be the decomposition of the sYM spinor into chiral and antichiral components. Let $\varphi^s \in \mathcal A^s_{\hat Q}$ be a spinorial connection, and let $A=\varphi^s\oplus \mathsf a$ be the product connection on the bundle $\hat Q\times_M P \to M$. Denote by $\nabla\coloneqq\nabla^A$ the covariant derivative on $\ad(P)\otimes F^{1,0}$, where $F^{1,0}\to M$ is the chiral distribution, corresponding to $\mathcal F^{1,0}\subset \mathcal T _M$. It holds
\begin{equation}
\sum_{XY}\epsilon^{XY} (\epsilon\otimes\kappa) (\nabla_X \chi, \nabla_Y\chi)|_{\check M} = \|F^a\|^2 ,
\end{equation}
where $a=i^*\mathsf a$ is the restricted gauge connection.
\end{lemma}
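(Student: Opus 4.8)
The plan is to rerun the argument of Lemma~\ref{lemma:spinor-curvature}, but in the chirally decomposed setting of Section~\ref{section:chiral-decomposition}: everywhere the $d=3$ proof uses the real representation $S$ and the full map $\Gamma\colon S\otimes S\to V$, I would instead use the half-spin modules $S^{1,0},S^{0,1}$ and the $d=4$ map $\Gamma$, which is of type $S^{1,0}\otimes S^{0,1}\to V^{\mathbb C}$ (so it vanishes on pairs of equal chirality, cf.\ Lemma~\ref{lemma:chiral-integrable}).

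First I would establish the chiral version of the restriction formula for $\nabla_X\psi$ that is used inside the proof of Lemma~\ref{lemma:spinor-curvature}, namely $(\nabla_X\chi)\big|_{\check M}=\tfrac{1}{\sqrt n}\langle f,\tilde\gamma\circ\gamma(X)\rangle\big|_{\check M}$ for $X$ the horizontal lift of a section of the chiral distribution $\mathcal F^{1,0}$. Unwinding $\chi=\langle\bar\zeta,\tilde\gamma\rangle$ and using that $\tilde\gamma$ is parallel for the spinorial connection $\varphi^s$, this reduces to evaluating $\iota_X\diff^{\mathsf a}\bar\zeta$; Bianchi's identity $\diff^{\mathsf a}f+\diff^{\mathsf a}\zeta+\diff^{\mathsf a}\bar\zeta=0$ together with the chiral analogue of Lemma~\ref{lemma:f-cartan} trades this, in the relevant frame components, for a term of the form $\iota_{[X,Y]}f$, and the split-superspace restriction identity $[X,Y]\big|_{\check M}=(\psi_S^*\Gamma)(X\big|_{\check M},Y\big|_{\check M})$ then computes it. The essential point is that, in $d=4$, the surviving contribution to the restricted bracket pairs a chiral vector against an antichiral one via $\Gamma$, so the antichiral frame directions enter the computation precisely through the off-diagonal Clifford maps relating $S^{1,0}$ and $S^{0,1}$.

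Next I would substitute this into the left-hand side: expanding $\tilde\gamma$ in a local spin frame produces $\tfrac{1}{n}\sum\epsilon^{XY}\epsilon_{WZ}\,\kappa\big(\langle f,\tilde\gamma(W^\vee)\circ\gamma(X)\rangle,\langle f,\tilde\gamma(Z^\vee)\circ\gamma(Y)\rangle\big)\big|_{\check M}$, and the identity $\epsilon^{XY}\epsilon_{WZ}=\delta^X_W\delta^Y_Z-\delta^X_Z\delta^Y_W$ on the two-dimensional chiral spinor indices splits this into two pieces exactly as in Lemma~\ref{lemma:spinor-curvature}. The diagonal piece collapses, via $\sum_X\gamma(X)\circ\tilde\gamma(X^\vee)\propto g$ (the relevant chiral Clifford trace being symmetric), to something of the shape $\kappa(\langle f,g\rangle,\langle f,g\rangle)$, which vanishes because $g$ is symmetric while $i^*f$ is a skew two-form. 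The remaining piece assembles into $\langle\kappa(f\wedge f),\tr(\tilde\gamma\circ\gamma\circ\tilde\gamma\circ\gamma)\big|_{\Sym^2(\Omega^{2|0}(M))}\rangle$; since the quartic chiral Clifford trace is proportional to $-\|\cdot\|^2_g$ with the normalisation inherited from Theorem~\ref{thm:gamma-real-unique}, the two factors of $1/\sqrt n$ cancel against it and restriction to $\check M$ leaves exactly $\|F^a\|^2$. (The antichiral counterpart is the complex conjugate and needs no separate treatment.)

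The main obstacle is the chiral/antichiral bookkeeping rather than any new idea: one has to keep straight which Clifford maps are the off-diagonal ones $S^{1,0}\leftrightarrow S^{0,1}$, establish the chiral refinement of Lemma~\ref{lemma:f-cartan}, and --- since here $n=2$ --- verify that the proportionality constants in the half-spin Clifford traces $\tr(\gamma\circ\tilde\gamma)$ and $\tr(\tilde\gamma\circ\gamma\circ\tilde\gamma\circ\gamma)$ conspire with the $1/\sqrt n$ normalisations so that no stray constant survives. This is exactly the `battle of indices' the appendix warns about, and lining up the signs and factors is the only delicate part.
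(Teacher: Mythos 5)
Your proposal follows essentially the same route as the paper: compute $\nabla_X\chi$ by unwinding the definition of the chiral spinor, use Bianchi's identity and the Cartan/bracket-restriction argument (the chiral analogue of Lemma~\ref{lemma:f-cartan}) to reduce $(\nabla_X\chi)|_{\check M}$ to $\langle f,\tilde\gamma\circ\gamma(X)\rangle|_{\check M}$ up to sign, and then rerun the $\epsilon^{XY}\epsilon_{WZ}=\delta^X_W\delta^Y_Z-\delta^X_Z\delta^Y_W$ index computation of Lemma~\ref{lemma:spinor-curvature} verbatim. This matches the paper's proof, which likewise derives the chiral restriction formula and then declares the final contraction ``analogous to the $d=3$ case.''
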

\begin{proof}
We begin by computing the covariant derivatives of the spinor fields. Note that in the chiral picture, $F^\mathsf a$ decomposes as
\begin{equation}
F^\mathsf a = f^\mathsf a \oplus \zeta^\mathsf a\oplus \bar\zeta^\mathsf a,
\end{equation}
where $f^\mathsf a$ is a $2|0,0$-form, $\zeta^\mathsf a$ is a $1|1,0$-form, and $\bar\zeta^\mathsf a$ is a $1|0,1$-form. The $\Gamma$ maps restrict to $\wedge^2\mathcal F^{1,0} \to\mathcal B$ such that 
\begin{equation}
\begin{tikzcd}
\zeta\arrow[mapsto]{d}{\frac{1}{\sqrt{n}}\langle\cdot,\tilde\gamma\rangle} &
\bar\zeta\arrow[mapsto]{d}{\frac{1}{\sqrt{n}} \langle\cdot,\tilde\gamma\rangle}\\
\chi & \bar\chi.
\end{tikzcd}
\end{equation}
In what follows, we shall consider a fixed connection $\mathsf a$ and drop it from the notation in the component fields. It holds
\begin{align}
\sqrt{n}\nabla_X \chi &= \iota_X\diff^\mathsf a \langle\zeta,\tilde\gamma\rangle\\
&=\sum_{Y}\langle\iota_Y\iota_X\diff^\mathsf a \zeta,\tilde\gamma(Y^\vee)\rangle\\
&=-\sum_{Y}\langle\iota_Y\iota_X\diff^\mathsf a (f + \bar\zeta),\tilde\gamma(Y^\vee)\rangle\\
&=-\sum_{Y}\langle\iota_{[Y,X]}\diff^\mathsf a (f + \bar\zeta),\tilde\gamma(Y^\vee)\rangle
\end{align}
as $\iota_X\bar\zeta = \iota_X f = 0$.
Upon restricting to $\check M$, the commutator on $\mathcal F^{1,0}$ turns into the $\gamma$ map, such that
\begin{equation}
\sqrt{n}\nabla_X \chi = \iota_X\diff^\mathsf a \langle\zeta,\tilde\gamma\rangle|_{\check M} =
-\langle f + \bar \zeta,\tilde\gamma \circ \gamma(X)\rangle|_{\check M} = 
-\langle f ,\tilde\gamma \circ \gamma(X)\rangle|_{\check M}
\end{equation}
as the duality pairing is along $\wedge^2 \mathcal B |_{\check M}$, and $\bar\zeta$ is a $1|0,1$-form.\par
It follows that
\begin{equation}
\epsilon^{XY} (\epsilon\otimes\kappa) (\nabla_X \chi, \nabla_Y\chi)|_{\check M}
= \epsilon^{XY} (\epsilon\otimes\kappa)
 \big( \langle f,\tilde\gamma \circ \gamma(X)\rangle,
 \langle f,\tilde\gamma \circ \gamma(Y)\rangle \big)\big|_{\check M} = \|F^{a}\|^2,
\end{equation}
where summation over $X,Y$ is implied, and the last equality can be verified analogously to the $d=3$ case.
\end{proof}
Finally, the computation giving rise to the Dirac operator goes as follows.
\begin{lemma}\label{lemma:d-4-dirac}
Let the notation be as in the preceding lemma. It holds
\begin{equation}
\big(\Delta^\mathsf a_{\epsilon} \chi \big)\big|_{\check M} \coloneqq \Big(\epsilon^{XY}\nabla_X\nabla_Y \chi\Big)\Big|_{\check M} = -\Dirac^a\big(\chi|_{\check M}\big).
\end{equation}
\end{lemma}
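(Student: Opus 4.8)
The plan is to transcribe the proof of Lemma~\ref{lemma:dirac-laplace} into the chiral setting, replacing the $1|1$-curvature $\phi$ by its chiral part $\zeta$ and the full $\Gamma$-map by its chiral constituent. Since $\sqrt n\,\chi = \langle\zeta,\tilde\gamma\rangle$, I would start from
\begin{equation}
\sqrt n\,\Delta^\mathsf a_\epsilon\chi = \epsilon^{XY}\,\iota_X\diff^\mathsf a\,\iota_Y\diff^\mathsf a\langle\zeta,\tilde\gamma\rangle = \epsilon^{XY}\sum_Z\langle\iota_Z\iota_X\diff^\mathsf a\,\iota_Y\diff^\mathsf a\zeta,\tilde\gamma(Z^\vee)\rangle,
\end{equation}
expanding $\tilde\gamma$ in a local frame $\{Z\}$ of $\mathcal F^{1,0}$, and feeding in the Bianchi identity in the form $\diff^\mathsf a\zeta = -\diff^\mathsf a f - \diff^\mathsf a\bar\zeta$ (as already used in the proof of Lemma~\ref{lemma:d-4-curvature}). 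The inner contractions are then reorganised by the same sequence of applications of Cartan's magic formula as in Lemma~\ref{lemma:dirac-laplace}, exploiting $\epsilon^{XY}\iota_X\iota_Y = \epsilon^{XY}\iota_{[X,Y]} = 0$ together with $\iota_Xf = \iota_X\bar\zeta = 0$ for sections $X$ of the chiral distribution.

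After this bookkeeping, I would restrict along $i:\check M\to M$. On the body, commutators of sections of $\mathcal F^{1,0}$ collapse to the Clifford action $\gamma$ along $\mathcal B|_{\check M}$ exactly as in the proof of Lemma~\ref{lemma:d-4-curvature}, while the iterated-commutator (``curvature-squared'') terms -- the analogues of the $\iota_{[X,[Y,Z]]}f$ term of Lemma~\ref{lemma:dirac-laplace} -- are purely odd and hence killed against $i^*f$, which is a purely even $2|0,0$-form. Likewise, every term carrying a $\bar\zeta$ is forced, after restriction, into a pairing along $\wedge^2\mathcal B|_{\check M}$ (or into a slot it cannot fill), so it drops out just as the $\bar\zeta$-contribution did in Lemma~\ref{lemma:d-4-curvature}. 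What survives reassembles into $-\langle\diff^\mathsf a f,\tilde\gamma\circ\gamma\circ\tilde\epsilon\rangle|_{\check M}$, which is $\sqrt n$ times the Dirac operator $\Dirac^a$ (the chiral $\gamma$ maps $S^{1,0}\leftrightarrow S^{0,1}$ being covariantly constant), giving the claim $\Delta^\mathsf a_\epsilon\chi|_{\check M} = -\Dirac^a(\chi|_{\check M})$.

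The main obstacle I anticipate is the coefficient-and-sign bookkeeping rather than any conceptual difficulty: one cannot simply differentiate the formula $\sqrt n\,\nabla_X\chi|_{\check M} = -\langle f,\tilde\gamma\circ\gamma(X)\rangle|_{\check M}$ of Lemma~\ref{lemma:d-4-curvature} once more and restrict, because $\nabla$ along an odd direction does not commute with $i^*$, so the full off-body Cartan computation is genuinely needed. Within that computation, the signs contributed by the graded $\epsilon^{XY}$, by each application of Cartan's identity, and by the identification of the surviving $f$-term with $\Dirac^a$ via the off-diagonal $\gamma$ maps must be tracked carefully to land on the coefficient $-1$; note in particular that, unlike the $d=3$ case of Lemma~\ref{lemma:dirac-laplace} where two symmetric contributions combined to a factor $2$, here the restriction to the chiral half $\mathcal F^{1,0}$ leaves a single contribution.
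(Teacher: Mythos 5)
Your overall strategy coincides with the paper's: rerun the Cartan--Bianchi computation of Lemma~\ref{lemma:dirac-laplace} with $\phi$ replaced by its chiral component $\zeta$ and with the frame $\{X,Y,Z\}$ taken in $\mathcal F^{1,0}$, using $\iota_X f=\iota_X\bar\zeta=0$ and discarding the $\iota_{[X,[Y,Z]]}(f+\bar\zeta)$ term on restriction. All of that matches the paper's proof of this lemma.

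The gap is precisely in the step you defer to ``bookkeeping'': the final coefficient. Your proposed mechanism --- that passing to the chiral half leaves only \emph{one} of the two contributions that combined to the factor $2$ in $d=3$, and that this is what produces the $-1$ --- is not what the computation gives. Carrying the transcription through verbatim, the chiral case has exactly the same two surviving terms as the $d=3$ case: the term $-\epsilon^{XY}\langle\iota_{[Z,Y]}\iota_X\diff^{\mathsf a}\zeta,\tilde\gamma(Z^\vee)\rangle$ restricts to $+\sqrt n\,\Dirac^a(\chi|_{\check M})$, and the double-Cartan term contributes another $+\sqrt n\,\Dirac^a(\chi|_{\check M})$ after the iterated-commutator piece is killed; they add to $2\sqrt n\,\Dirac^a(\chi|_{\check M})$. (Note also that each contribution enters with sign $+1$, so even if one of them did vanish you would land on $+1$, never on $-1$.) The paper's own proof in fact ends with $\Delta^{\mathsf a}_\epsilon\chi|_{\check M}=2\Dirac^a(\chi|_{\check M})$, in unresolved tension with the coefficient $-1$ in the lemma's statement. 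So you have asserted the target coefficient rather than derived it, and the heuristic you offer for it contradicts the computation you propose to perform. To close the argument you must either track the graded signs explicitly and confront the resulting factor $2$ with the claimed $-1$, or identify a genuine asymmetry of the chiral case (there is none at this step: the cancellations you list affect only the $f$- and $\bar\zeta$-terms, not the multiplicity of the $\Dirac^a$-terms).
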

\begin{proof}
The proof is completely analogous to the $d=3$ case. Nevertheless, we state it for completeness.
    \begin{align}
        \sqrt n \cdot \Delta^\mathsf a _\epsilon \chi &= \epsilon^{XY}\iota_X\diff^\mathsf a\iota_Y\diff^\mathsf a\langle\zeta,\tilde \gamma\rangle\\
        &=\epsilon^{XY}
        \langle\iota_Z\iota_X
        \diff^\mathsf a\iota_Y
        \diff^\mathsf a \zeta,
        \tilde \gamma (Z^\vee)\rangle \\
        &=\epsilon^{XY}
        \langle\iota_X\iota_Z
        L_Y
        \diff^\mathsf a \zeta,
        \tilde \gamma (Z^\vee),\rangle\\
        &=\epsilon^{XY}
        \langle\iota_X
        (\iota_{[Z,Y]}-\diff^\mathsf a \iota_Y\iota_Z)
        \diff^\mathsf a \zeta,
        \tilde \gamma (Z^\vee)\rangle\\
        &= -\epsilon^{XY}
        \langle
        \iota_{[Z,Y]}\iota_X
        \diff^\mathsf a \zeta,
        \tilde \gamma (Z^\vee)\rangle - \epsilon^{XY}
        \langle\iota_X
        \diff^\mathsf a \iota_Y\iota_Z
        \diff^\mathsf a \zeta,
        \tilde \gamma (Z^\vee)\rangle
    \end{align}
    Here, we used that $\epsilon^{XY}\iota_X\iota_Y = \epsilon^{XY}\iota_{[X,Y]} = 0$, eliminating many of the otherwise appearing terms.
    Upon restricting to $\check M$, the first summand equates to
    \begin{align}
        \epsilon^{XY}
        \langle
        \iota_{[Z,Y]}\iota_X
        \diff^\mathsf a \zeta,
        \tilde \gamma (Z^\vee)\rangle\big|_{\check M} &= 
        \epsilon^{XY}
        \langle\iota_X\diff^\mathsf a \zeta,\tilde\gamma(Z^\vee)\otimes\Gamma(Z,Y)\rangle\big|_{\check M}\\
        &=\epsilon^{XY}\langle
        \iota_X\diff^\mathsf a\zeta,\tilde\gamma\circ\gamma(Y)\rangle\big|_{\check M}\\
        &=-\langle\diff^\mathsf a\zeta,\tilde\gamma\circ\gamma\circ\tilde\epsilon\rangle\big|_{\check M}\\
        &=-\sqrt n \cdot \Dirac^{a}(\chi|_{\check M})
    \end{align}
    by the preceding lemma. For the second summand, note that
    \begin{align}
        -\epsilon^{XY}
        \langle\iota_X
        \diff^\mathsf a \iota_Y\iota_Z
        \diff^\mathsf a \zeta,
        \tilde \gamma (Z^\vee)\rangle\big|_{\check M}&=
        \epsilon^{XY}
        \langle\iota_X
        \diff^\mathsf a \iota_{[Y,Z]} (f+\bar\zeta),
        \tilde \gamma (Z^\vee)\rangle\big|_{\check M}\\
        &=\epsilon^{XY}
        \langle L_X\iota_{[Y,Z]} (f+\bar\zeta),
        \tilde\gamma(Z^\vee)\rangle\big|_{\check M}\\
        &=\epsilon^{XY}
        \langle \iota_{[X,[Y,Z]]} (f+\bar\zeta) + \iota_{[Y,Z]}\iota_X\diff^\mathsf a \zeta,
        \tilde\gamma(Z^\vee)\rangle\big|_{\check M}\\
        &=\epsilon^{XY}\langle\iota_{[X,[Y,Z]]}(f+\bar\zeta),\tilde\gamma(Z^\vee)\rangle\big|_{\check M} + \sqrt n \cdot \Dirac^a(\chi|_{\check M})
    \end{align}
    due to lemma 3, the Cartan relations and the fact that $\iota_X f = \iota_X \bar\zeta = 0$ as $\bar\zeta$ is a $1|0,1$ form, and upon restricting to the ordinary spacetime, $[X,[Y,Z]]$ is supported on $\mathcal F^{1,0}|_{\check M}$. Adding up the remaining contributions, we obtain
    \begin{equation}
        \Delta^\mathsf a_\epsilon \chi\big|_{\check M} =
        2\Dirac^{a} (\chi|_{\check M}), 
    \end{equation}
    as claimed.
\end{proof}

\end{document}